\theoremstyle{plain} %
\newcounter{thmcounter} %
\newtheorem{theorem}[thmcounter]{Theorem} %
\newtheorem{lemma}{Lemma} %
\newtheorem{claim}{Claim} %
\newtheorem{corollary}{Corollary} %
\theoremstyle{definition} %
\newenvironment{rethm}[1]{ %
  \newcounter{tmpcounter} %
  \setcounter{tmpcounter}{\thethmcounter} 
  \setcounterref{thmcounter}{#1} %
  \addtocounter{thmcounter}{-1}
  \begin{theorem}} { %
  \end{theorem} %
  \setcounter{thmcounter}{\thetmpcounter}}
\setdefaultitem{\textopenbullet}{}{}{}
\newcommand{\eps}{\varepsilon}
\newcommand{\eref}{\ensuremath{e_\mathrm{ref}}}
\DeclareMathOperator{\skel}{skel} %
\DeclareMathOperator{\pert}{pert} %
\DeclareMathOperator{\flex}{flex} %
\DeclareMathOperator{\cost}{cost} %
\DeclareMathOperator{\COST}{COST} %
\DeclareMathOperator{\rot}{rot} %
\DeclareMathOperator{\lef}{left} %
\DeclareMathOperator{\righ}{right} %
\DeclareMathOperator{\dem}{dem} %
\DeclareMathOperator{\flow}{flow} %
\begin{document}

\title{Optimal Orthogonal Graph Drawing with Convex Bend
  Costs\thanks{Part of this work was done within GRADR -- EUROGIGA
    project no. 10-EuroGIGA-OP-003.}}

\author{Thomas Bläsius, Ignaz Rutter, Dorothea Wagner}

\date{Karlsruhe Institute of Technology (KIT)\\%
\texttt{firstname.lastname@kit.edu}}

\maketitle

\begin{abstract}
  Traditionally, the quality of orthogonal planar drawings is
  quantified by either the total number of bends, or the maximum
  number of bends per edge.  However, this neglects that in typical
  applications, edges have varying importance.  Moreover, as bend
  minimization over all planar embeddings is $\mathcal {NP}$-hard,
  most approaches focus on a fixed planar embedding.

  We consider the problem {\sc OptimalFlexDraw} that is defined as
  follows.  Given a planar graph $G$ on $n$ vertices with maximum
  degree~4 and for each edge~$e$ a cost function $\cost_e \colon
  \mathbb N_0 \longrightarrow \mathbb R$ defining costs depending on
  the number of bends on $e$, compute an orthogonal drawing of $G$ of
  minimum cost.  Note that this optimizes over all planar embeddings
  of the input graphs, and the cost functions allow fine-grained
  control on the bends of edges.

  In this generality {\sc OptimalFlexDraw} is $\mathcal {NP}$-hard.
  We show that it can be solved efficiently if 1) the cost function of
  each edge is convex and 2) the first bend on each edge does not
  cause any cost (which is a condition similar to the positive
  flexibility for the decision problem {\sc FlexDraw}).  Moreover, we
  show the existence of an optimal solution with at most three bends
  per edge except for a single edge per block (maximal biconnected
  component) with up to four bends.  For biconnected graphs we obtain
  a running time of $\mathcal O(n\cdot T_{\flow}(n))$, where
  $T_{\flow}(n)$ denotes the time necessary to compute a minimum-cost
  flow in a planar flow network with multiple sources and sinks.  For
  connected graphs that are not biconnected we need an additional
  factor of $\mathcal O(n)$.
\end{abstract}

\section{Introduction}
\label{sec:introduction}

Orthogonal graph drawing is one of the most important techniques for
the human-readable visualization of complex data. Its \ae{}sthetic
appeal derives from its simplicity and straightforwardness. Since
edges are required to be straight orthogonal lines---which
automatically yields good angular resolution and short links---the
human eye may easily adapt to the flow of an edge.
The readability of orthogonal drawings can be further enhanced in the
absence of crossings, that is if the underlying data exhibits planar
structure.  Unfortunately, not all planar graphs have an orthogonal
drawing in which each edge may be represented by a straight horizontal
or vertical line. In order to be able to visualize all planar graphs
nonetheless, we allow edges to have bends. Since bends obfuscate the
readability of orthogonal drawings, however, we are interested in
minimizing the number of bends on the edges.

In this paper we consider the problem {\sc OptimalFlexDraw} whose
input consists of a planar graph $G$ with maximum degree~4 and for
each edge~$e$ a cost function $\cost_e \colon \mathbb N_0
\longrightarrow \mathbb R$ defining costs depending on the number of
bends on $e$.  We seek an orthogonal drawing of $G$ with minimum cost.
Garg and Tamassia~\cite{gt-curpt-01} show that it is $\mathcal
{NP}$-hard to decide whether a 4-planar graph admits an orthogonal
drawing without any bends.  Note that this directly implies that {\sc
  OptimalFlexDraw} is $\mathcal {NP}$-hard in general.  For a special
case, namely planar graphs with maximum degree~3 and series-parallel
graphs, Di Battista et al.~\cite{blv-sood-98} give an algorithm
minimizing the total number of bends optimizing over all planar
embeddings.  They introduce the concept of spirality that is similar to
the rotation we use (see Section~\ref{sec:orth-repr} for a
definition).  Bläsius et al.~\cite{bkrw-ogdfc-10} show that the
existence of a planar 1-bend drawing can be tested efficiently.  More
generally, they consider the problem {\sc FlexDraw}, where each edge
has a \emph{flexibility} specifying its allowed number of bends.  For
the case that all flexibilities are positive, they give a
polynomial-time algorithm for testing the existence of a valid
drawing.

As minimizing the number of bends for 4-planar orthogonal drawings is
$\mathcal{NP}$-hard, many results use the topology-shape-metrics
approach initially fixing the planar embedding.
Tamassia~\cite{t-eggmb-87} describes a flow network for minimizing the
number of bends.  This flow network can be easily adapted to also
solve {\sc OptimalFlexDraw} even for the case where the first bend may
cause cost, however, the planar embedding has to be fixed in advanced.
Biedl and Kant~\cite{bk-bhogd-98} show that every plane graph can be
embedded with at most two bends per edge except for the octahedron.
Morgana et al.~\cite{mps-aobepg-04} give a characterization of plane
graphs that have an orthogonal drawing with at most one bend per edge.
Tayu et al.~\cite{tnu-2dodspg-09} show that every series-parallel
graph can be drawn with at most one bend per edge.  All these results
and the algorithm we present here have the requirement of maximum
degree~4 in common.  Although this is a strong restriction it is
important to consider this case since algorithms dealing with
higher-degree vertices (drawing them as boxes instead of single
points) rely on algorithms for graphs with maximum
degree~4~\cite{tdb-agdrd-88, fkdhdglbn-95, km-qodpg-98}.

Even though fixing an embedding allows to efficiently minimize the
total number of bends (with this embedding), this neglects that the
choice of a planar embedding may have a huge impact on the number of
bends in the resulting drawing.  The result by Bläsius et
al.~\cite{bkrw-ogdfc-10} concerning the problem {\sc FlexDraw} takes
this into account and additionally allows the user to control the
final drawing, for example by allowing few bends on important edges.
However, if such a drawing does not exist, the algorithm solving {\sc
  FlexDraw} does not create a drawing at all and thus it cannot be
used in a practical application.  Thus, the problem {\sc
  OptimalFlexDraw}, which generalizes the corresponding optimization
problem, is of higher practical interest, as it allows the user to
take control of the properties of the final drawing within the set of
feasible drawings.  Moreover, it allows a more fine-grained control of
the resulting drawing by assigning high costs to bends on important
edges.

\paragraph{Contribution and Outline.}
\label{sec:contribution-outline}

Our main result is the first polynomial-time bend-optimization
algorithm for general 4-planar graphs 
optimizing over all embeddings.  Previous work considers only restricted graph classes and unit costs.
We solve {\sc OptimalFlexDraw} if 1) all cost functions
are convex and 2) the first bend is for free.   We
note that convexity is indeed quite natural, and that without
condition 2) {\sc OptimalFlexDraw} is $\mathcal{NP}$-hard, as it could
be used to minimize the total number of bends over all embeddings,
which is known to be $\mathcal{NP}$-hard~\cite{gt-curpt-01}. 

In particular, our algorithm allows to efficiently minimize the total
number of bends over all planar embeddings, where one bend per edge is
free.  Note that this is an optimization version of {\sc FlexDraw}
where each edges has flexibility~1, as a drawing with cost~0 exists if
and only if {\sc FlexDraw} has a valid solution.  Moreover, as it is
known that every 4-planar graph has an orthogonal representation with
at most two bends per edge~\cite{bk-bhogd-98}, our result can also be
used to create such a drawing minimizing the number of edges having
two bends by setting the costs for three or more bends to~$\infty$.

To derive the algorithm for {\sc OptimalFlexDraw}, we show the
existence of an optimal solution with at most three bends per edge
except for a single edge per block with up to four bends, confirming a
conjecture of Rutter~\cite{r-tmp-11}.

Our strategy for solving {\sc OptimalFlexDraw} for biconnected graphs
optimizing over all planar embedding is the following.  We use dynamic
programming on the SPQR-tree of the graph, which is a data structure
representing all planar embeddings of a biconnected graph.  Every node
in the SPQR-tree corresponds to a split component and we compute cost
functions for these split components determining the cost depending on
how strongly the split component is bent.  We compute such a cost
function from the cost functions of the children using a flow network
similar to the one described by Tamassia~\cite{t-eggmb-87}.  As
computing flows with minimum cost is $\mathcal {NP}$-hard for
non-convex costs we need to ensure that not only the cost functions of
the edges but also the cost functions of the split components we
compute are convex.  However, this is not true at all, see
Figure~\ref{fig:not-convex-example} for an example.  This is not even
true if every edge can have a single bend for free and then has to pay
cost~1 for every additional bend, see
Figure~\ref{fig:not-convex-example}(c).  To solve this problem, we
essentially show that it is sufficient to compute the cost functions
on the small interval $[0, 3]$.  We can then show that the cost
functions we compute are always convex on this interval.

\begin{figure}
  \centering
  \includegraphics[page=2]{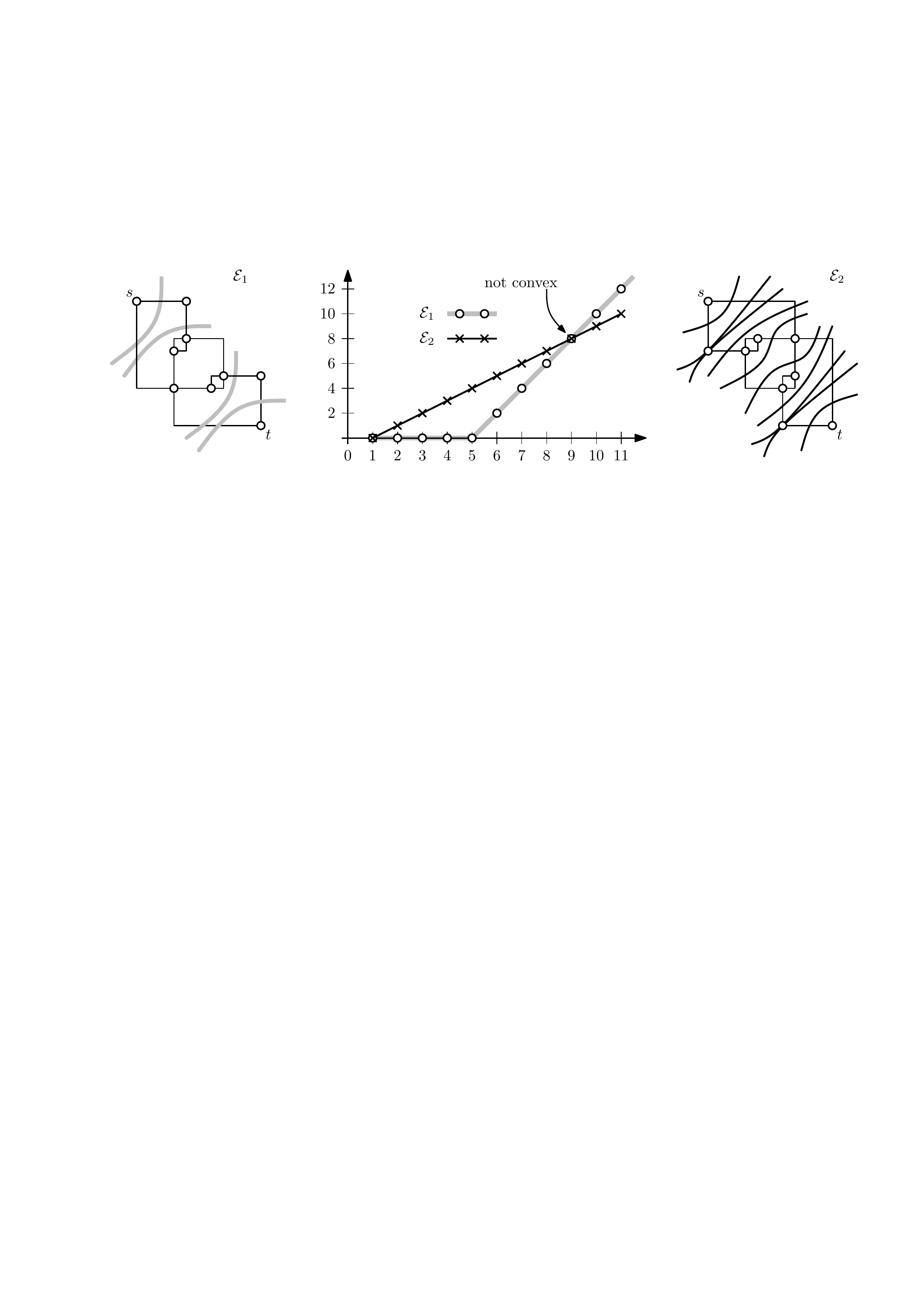}
  \caption{(a) Two parallel edges, the thin has one bend for free,
    every additional bend costs~1, the thick edge has two bends for
    free, every additional bend costs~2.  Whether embedding $\mathcal
    E_1$ or $\mathcal E_2$ is better depends on the number of bends.
    The minimum (marked by gray boxes) yields a non-convex cost
    function.  (b) The non-convexity in (a) does not rely on multiple
    edges, the thick edge could be replaced by the shown gadget where
    each edge of the gadget has one bend for free and every additional
    bend costs~2.  (c) This example has a non-convex cost function
    even if every edge has one bend for free and each additional bend
    costs~1.}
  \label{fig:not-convex-example}
\end{figure}

We start with some preliminaries in Section~\ref{sec:preliminaries}.
Afterwards, we first consider the decision problem {\sc FlexDraw} for
the case that the planar embedding is fixed in
Section~\ref{sec:valid-drawings-with}.  In this restricted setting we
are able to prove the existence of valid drawings with special
properties.  Bläsius et al.~\cite{bkrw-ogdfc-10} show that ``rigid''
graphs do not exist in this setting in the sense that a drawing that
is bent strongly can be unwound under the assumption that the
flexibility of every edge is at least~1.  In other words this shows
that graphs with positive flexibility behave similar to single edges
with positive flexibility.  We present a more elegant proof yielding a
stronger result that can then be used to reduce the number of bends of
every edge down to three (at least for biconnected graphs and except
for a single edge on the outer face).  In Section~\ref{sec:flex-split-comp} we
extend the term ``bends'', originally defined for edges, to split
components and show that in a biconnected graph the split components
corresponding to the nodes in its SPQR-tree can be assumed to have
only up to three bends.  In Section~\ref{sec:optim-draw-with} we show
that these results for the decision problem {\sc FlexDraw} can be
extended to the optimization problem {\sc OptimalFlexDraw}.  With this
result we are able to drop the fixed planar embedding
(Section~\ref{sec:opt-draw-var-emb}).  We first consider biconnected
graphs in Section~\ref{sec:biconnected-graphs} and compute cost
functions on the interval $[0, 3]$, which can be shown to be convex on
that interval, bottom up in the SPQR-tree.  In
Section~\ref{sec:connected-graphs} we extend this result to connected
graphs using the BC-tree (see Section~\ref{sec:spqr-tree} for a
definition).

\section{Preliminaries}
\label{sec:preliminaries}

In this section we introduce some notations and preliminaries.

\subsection{FlexDraw}
\label{sec:sc-flexdraw}

The original {\sc FlexDraw} problem asks for a given 4-planar graph $G
= (V, E)$ with a function $\flex \colon E \longrightarrow \mathbb N_0 \cup
\{\infty\}$ assigning a \emph{flexibility} to every edge whether an
orthogonal drawing of $G$ exists such that every edge $e \in E$ has at
most $\flex(e)$ bends.  Such a drawing is called a \emph{valid}
drawing of the {\sc FlexDraw} instance.  The problem \emph{\sc
  OptimalFlexDraw} is the optimization problem corresponding to the
decision problem {\sc FlexDraw} and is defined as follows.  Let $G =
(V, E)$ be a 4-planar graph together with a cost function $\cost_e :
\mathbb N_0 \longrightarrow \mathbb R \cup \{\infty\}$ associated with
every edge $e \in E$ having the interpretation that $\rho$ bends on
the edge $e$ cause $\cost_e(\rho)$ cost.  Then the \emph{cost of an
  orthogonal drawing} of $G$ is the total cost summing over all edges.
A drawing is \emph{optimal} if it has the minimum cost among all
orthogonal drawings of $G$.  The task of the optimization problem {\sc
  OptimalFlexDraw} is to find an optimal drawing of $G$.

Since {\sc OptimalFlexDraw} contains the $\mathcal {NP}$-hard problem
{\sc FlexDraw}, it is $\mathcal {NP}$-hard itself.  However, {\sc
  FlexDraw} is efficiently solvable for instances with \emph{positive
  flexibility}, that is instances in which the flexibility of every
edge is at least~1.  To obtain a similar result for {\sc
  OptimalFlexDraw} we have to restrict the possible cost functions
slightly.

For a cost function $\cost_e(\cdot)$ we define the \emph{difference
  function} $\Delta\cost_e(\cdot)$ to be $\Delta\cost_e(\rho) =
\cost_e(\rho + 1) - \cost_e(\rho)$.  A cost function is
\emph{monotone} if its difference function is greater or equal to~0.
We say that the \emph{base cost} of the edge $e$ with monotone cost
function is $b_e = \cost_e(0)$.  The \emph{flexibility} of an edge $e$
with monotone cost function is defined to be the largest possible
number of bends $\rho$ for which $\cost_e(\rho) = b_e$.  As before, we
say that an instance $G$ of {\sc OptimalFlexDraw} has \emph{positive
  flexibility} if all cost functions are monotone and the flexibility
of every edge is positive.  Unfortunately, we have to restrict the
cost functions further to be able to solve {\sc OptimalFlexDraw}
efficiently.  The cost function $\cost_e(\cdot)$ is \emph{convex}, if
its difference function is monotone.  We call an instance of {\sc
  OptimalFlexDraw} \emph{convex}, if every edge has positive
flexibility and each cost function is convex.  Note that this includes
that the cost functions are monotone.  We provide an efficient
algorithm solving {\sc OptimalFlexDraw} for convex instances.

\subsection{Connectivity, BC-Tree and SPQR-Tree}
\label{sec:spqr-tree}

A graph is \emph{connected} if there exists a path between any pair of
vertices.  A \emph{separating $k$-set} is a set of $k$ vertices whose
removal disconnects the graph.  Separating 1-sets and 2-sets are
\emph{cutvertices} and \emph{separation pairs}, respectively.  A
connected graph is \emph{biconnected} if it does not have a cut vertex
and \emph{triconnected} if it does not have a separation pair.  The
maximal biconnected components of a graph are called \emph{blocks}.
The \emph{cut components} with respect to a separation $k$-set $S$ are
the maximal subgraphs that are not disconnected by removing $S$.

The \emph{block-cutvertex tree (BC-tree)} $\mathcal B$ of a connected
graph is a tree whose nodes are the blocks and cutvertices of the
graph, called \emph{B-nodes} and \emph{C-nodes}, respectively.  In the
BC-tree a block $B$ and a cutvertex $v$ are joined by an edge if $v$
belongs to $B$.  If an embedding is chosen for each block, these
embeddings can be combined to an embedding of the whole graph if and
only if $\mathcal B$ can be rooted at a B-node such that the parent of
every other block $B$ in $\mathcal B$, which is a cutvertex, lies on
the outer face of $B$.

We use the \emph{SPQR-tree} introduced by Di Battista and
Tamassia~\cite{dt-omtc-96,dt-opt-96} to represent all planar
embeddings of a biconnected planar graph $G$.  The SPQR-tree $\mathcal
T$ of $G$ is a decomposition of $G$ into its triconnected components
along its \emph{split pairs} where a split pair is either a separation
pair or an edge.  We first define the SPQR-tree to be unrooted,
representing embeddings on the sphere, that is planar embeddings
without a designated outer face.  Let $\{s, t\}$ be a split pair and
let $H_1$ and~$H_2$ be two subgraphs of $G$ such that $H_1 \cup H_2 =
G$ and $H_1 \cap H_2 = \{s, t\}$.  Consider the tree containing the
two nodes $\mu_1$ and $\mu_2$ associated with the graphs $H_1 + \{s,
t\}$ and $H_2 + \{s, t\}$, respectively.  These graphs are called
\emph{skeletons} of the nodes $\mu_i$, denoted by $\skel(\mu_i)$ and
the special edge $\{s, t\}$ is said to be a \emph{virtual edge}.  The
two nodes $\mu_1$ and $\mu_2$ are connected by an edge, or more
precisely, the occurrence of the virtual edges $\{s, t\}$ in both
skeletons are linked by this edge.  Now a combinatorial embedding of
$G$ uniquely induces a combinatorial embedding of $\skel(\mu_1)$ and
$\skel(\mu_2)$.  Furthermore, arbitrary and independently chosen
embeddings for the two skeletons determine an embedding of $G$, thus
the resulting tree can be used to represent all embeddings of $G$ by
the combination of all embeddings of two smaller planar graphs.  This
replacement can of course be applied iteratively to the skeletons
yielding a tree with more nodes but smaller skeletons associated with
the nodes.  Applying this kind of decomposition in a systematic way
yields the SPQR-tree as introduced by Di Battista and
Tamassia~\cite{dt-omtc-96,dt-opt-96}.  The SPQR-tree $\mathcal T$ of a
biconnected planar graph $G$ contains four types of nodes.  First, the
P-nodes having a bundle of at least three parallel edges as skeleton
and a combinatorial embedding is given by any ordering of these edges.
Second, the skeleton of an R-node is triconnected, thus having exactly
two embeddings~\cite{w-nspg-32}, and third, S-nodes have a simple
cycle as skeleton without any choice for the embedding.  Finally,
every edge in a skeleton representing only a single edge in the
original graph $G$ is formally also considered to be a virtual edge
linked to a Q-node in $\mathcal T$ representing this single edge.
Note that all leaves of the SPQR-tree~$\mathcal T$ are Q-nodes.
Besides from being a nice way to represent all embeddings of a
biconnected planar graph, the SPQR-tree has only size linear in $G$
and Gutwenger and Mutzel~\cite{gm-lti-00} showed how to compute it in
linear time.  Figure~\ref{fig:spqr-tree} shows a biconnected planar
graph together with its SPQR-tree.

\begin{figure}
  \centering
  \includegraphics[page=1]{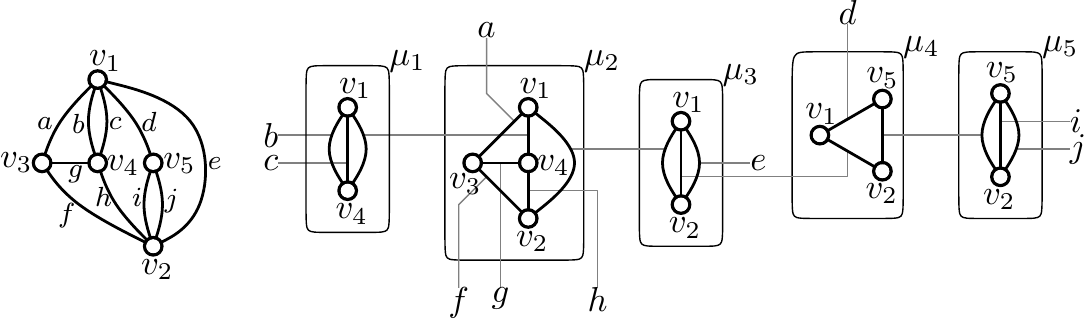}
  \caption{The unrooted SPQR-tree of a biconnected planar graph.  The
    nodes $\mu_1$, $\mu_3$ and $\mu_5$ are P-nodes, $\mu_2$ is an
    R-node and $\mu_4$ is an S-node.  The Q-nodes are not shown
    explicitely.}
  \label{fig:spqr-tree}
\end{figure}

Often the SPQR-tree $\mathcal T$ of a biconnected planar graph $G$ is
assumed to be rooted in a Q-node representing all planar embeddings
with the corresponding edge on the outer face.  In contrast to
previous results, we assume the SPQR-tree $\mathcal T$ to be rooted in
some node $\tau$, which may be a Q-node or an inner node.  In the
following we describe the interpretation of the SPQR-tree with root
$\tau$.  Every node $\mu$, apart form~$\tau$ itself, has a unique
parent and thus its skeleton $\skel(\mu)$ contains a virtual edge
corresponding to this parent.  We refer to this virtual edge as the
\emph{parent edge}.  A planar embedding $\mathcal E$ of $G$ is
represented by $\mathcal T$ with root $\tau$ if the embedding induced
on the skeleton $\skel(\mu)$ of every node $\mu \not= \tau$ has the
parent edge on the outer face.  The embedding of $\skel(\tau)$ is not
restricted, thus the choice of the outer face makes a difference for
the root.

For every node $\mu$ in the SPQR-tree $\mathcal T$ apart from the root
$\tau$ we define the \emph{pertinent graph} of~$\mu$, denoted by
$\pert(\mu)$, as follows.  The pertinent graph of a Q-node is the edge
associated to it.  The pertinent graph of an inner node $\mu$ is
recursively defined to be the graph obtained by replacing all virtual
edges apart from the parent edge by the pertinent graphs of the
corresponding children in~$\mathcal T$.  The \emph{expansion graph} of
a virtual edge $\eps$ in $\skel(\mu)$ is the pertinent graph of $\mu'$
where $\mu'$ is the child of $\mu$ corresponding to the virtual edge
$\eps$ with respect to the root $\mu$.

\subsection{Orthogonal Representation}
\label{sec:orth-repr}

Two orthogonal drawings of a 4-planar graph $G$ are \emph{equivalent},
if they have the same topology, that is the same planar embedding, and
the same shape in the sense that the sequence of right and left turns
is the same in both drawings when traversing the faces of $G$.  To
make this precis, we define \emph{orthogonal representations},
originally introduced by Tamassia~\cite{t-eggmb-87}, as equivalence
classes of this equivalence relation between orthogonal drawings.  To
ease the notation we first only consider the biconnected case.

Let $\Gamma$ be an orthogonal drawing of a biconnected 4-planar graph
$G$.  In the planar embedding $\mathcal E$ induced by $\Gamma$ every
edge $e$ is incident to two different faces, let $f$ be one of them.
When traversing $f$ in clockwise order (counter-clockwise if $f$ is
the outer face) $e$ may have some bends to the right and some bends to
the left.  We define the \emph{rotation} of $e$ in the face $f$ to be
the number of bends to the right minus the number of bends to the left
and denote the resulting value by $\rot(e_f)$.  Similarly, every
vertex $v$ is incident to several faces, let $f$ be one of them.  Then
we define the rotation of $v$ in $f$, denoted by $\rot(v_f)$, to be
$1$, $-1$ and $0$ if there is a turn to the right, a turn to the left
and no turn, respectively, when traversing~$f$ in clockwise direction
(counter-clockwise if $f$ is the outer face).  The orthogonal
representation $\mathcal R$ belonging to $\Gamma$ consists of the
planar embedding $\mathcal E$ of $G$ and all rotation values of edges
and vertices, respectively.  It is easy to see that every orthogonal
representation has the following properties.
\begin{compactenum}[(I)]
\item For every edge $e$ incident to the faces $f_1$ and $f_2$ the
  equation $\rot(e_{f_1}) = -\rot(e_{f_2})$ holds.
\item The sum over all rotations in a face is $4$ for inner faces and
  $-4$ for the outer face.
\item The sum of rotations around a vertex $v$ is $2 \cdot (\deg(v) -
  2)$.
\end{compactenum}
Tamassia showed that the converse is also true~\cite{t-eggmb-87}, that
is $\mathcal R$ is an orthogonal representation representing a class
of orthogonal drawings if the rotation values satisfy the above
properties.  He moreover describes a flow network such that every flow
in the flow network corresponds to an orthogonal representation.  A
modification of this flow network can also be used to solve {\sc
  OptimalFlexDraw} but only for the case that the planar embedding is
fixed.  In some cases we also write $\rot_{\mathcal R}(\cdot)$ instead
of $\rot(\cdot)$ to make clear to which orthogonal representation we
refer to.  Moreover, the face in the index is sometimes omitted if it
is clear which face is meant.

When extending the term orthogonal representation to not necessarily
biconnected graphs there are two differences.  First, a vertex $v$
with $\deg(v) = 1$ may exist.  Then $v$ is incident to a single face
$f$ and we define the rotation $\rot(v_f)$ to be~$-2$.  Note that the
rotations around every vertex~$v$ still sum up to $2 \cdot (\deg(v) -
2)$.  The second difference is that the notation introduced above is
ambiguous since edges and vertices may occur several times in the
boundary of the same face.  For example a bridge $e$ is incident to
the face $f$ twice, thus it is not clear which rotation is meant by
$\rot(e_f)$.  However, it will always be clear from the context, which
incidence to the face $f$ is meant by the index $f$.  Thus, we use for
connected graphs the same notation as for biconnected graphs.

Let $G$ be a 4-planar graph with orthogonal representation $\mathcal
R$ and two vertices $s$ and $t$ incident to a common face $f$.  We
define $\pi_f(s, t)$ to be the unique shortest path from $s$ to $t$ on
the boundary of $f$, when traversing $f$ in clockwise direction
(counter-clockwise if $f$ is the outer face).  Let $s = v_1, \dots,
v_k = t$ be the vertices on the path $\pi_f(s, t)$.  The rotation of
$\pi(s, t)$ is defined as 
\[\rot(\pi(s, t)) = \sum_{i = 1}^{k-1} \rot(\{v_i, v_{i+1}\}) +
\sum_{i = 2}^{k-1} \rot(v_i)\,,\] 
where all rotations are with respect to the face $f$.

Note that it does not depend on the particular drawing of a graph $G$
how many bends each edge has but only on the orthogonal
representation.  Thus we can continue searching for valid and optimal
orthogonal representations instead of drawings to solve {\sc FlexDraw}
and {\sc OptimalFlexDraw}, respectively.

Let $G$ be a 4-planar graph with positive flexibility and valid
orthogonal representation $\mathcal R$ and let $\{s, t\}$ be a split
pair.  Let further $H$ be a split component with respect to $\{s, t\}$
such that the orthogonal representation $\mathcal S$ of $H$ induced by
$\mathcal R$ has $\{s, t\}$ on the outer face $f$.  The orthogonal
representation $\mathcal S$ of $H$ is called \emph{tight} with respect
to the vertices $s$ and $t$ if the rotations of $s$ and $t$ in
internal faces are~1, that is $s$ and $t$ form $90^\circ$-angles in
internal faces of $H$.  Bläsius et al.~\cite[Lemma~2]{bkrw-ogdfc-10}
show that $\mathcal S$ can be made tight with respect to $s$ and $t$,
that is there exists a valid tight orthogonal representation of $H$
that is tight.  Moreover, this tight orthogonal representation can be
plugged back into the orthogonal representation of the whole graph
$G$.  We call an orthogonal representation~$\mathcal R$ of the whole
graph $G$ \emph{tight}, if every split component having the
corresponding split pair on its outer face is tight with respect to
its split pair.  It follows that we can assume without loss of
generality that every valid orthogonal representation is tight.  This
has two major advantages.  First, if we have for example a chain of
graphs and orthogonal representations of each graph in the chain, we
can combine these orthogonal representations by simply stacking them
together; see Figure~\ref{fig:stacking-tight-graphs}.  Note that this
may not be possible if the orthogonal representations are not tight.
Second, the shape of the outer face~$f$ of a split component with
split pair $\{s, t\}$ is completely determined by the rotation of
$\pi_f(s, t)$ and the degrees of $s$ and $t$, since the rotation at
the vertices $s$ and $t$ in the outer face only depends on their
degrees.  In the following we assume every orthogonal representation
to be tight.

\begin{figure}
  \centering
  \includegraphics{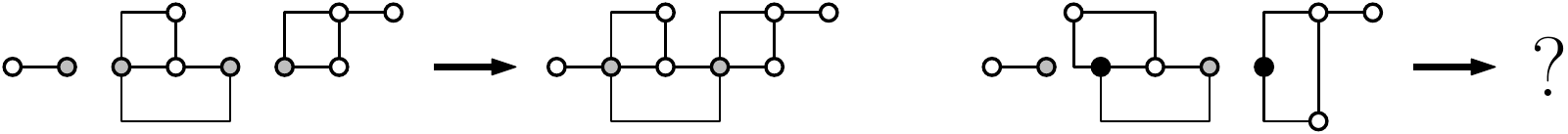}
  \caption{On the left three tight orthogonal drawings are stacked
    together.  This is not possible on the right side, since the black
    vertices have angles larger than $90^\circ$ in internal faces.}
  \label{fig:stacking-tight-graphs}
\end{figure}

\subsection{Flow Network}
\label{sec:flow-network}

A \emph{cost flow network} (or \emph{flow network} for short) is a
tuple $N = (V, A, \COST, \dem)$ where $(V, A)$ is a directed
(multi-)graph, $\COST$ is a set containing a \emph{cost function}
$\cost_a \colon \mathbb N_0 \longrightarrow \mathbb R \cup \{\infty\}$ for
each arc $a \in A$ and $\dem \colon V \longrightarrow \mathbb Z$ is the
\emph{demand} of the vertices.  A \emph{flow} in $N$ is a function
$\phi \colon A \longrightarrow \mathbb N_0$ assigning a certain amount of
flow to each arc.  A flow $\phi$ is \emph{feasible}, if the difference
of incoming and outgoing flow at each vertex equals its demand, that
is
\[\dem(v) = \sum_{(u,v) \in A} \phi(u,v) - \sum_{(v,u) \in A} \phi(v,u)
\,\, \text{ for all } v \in V.\]
The \emph{cost} of a given flow $\phi$ is the total cost of the arcs
caused by the flow $\phi$, that is
\[\cost(\phi) = \sum_{a \in A} \cost_a(\phi(a)).\]
A feasible flow $\phi$ in $N$ is called \emph{optimal} if $\cost(\phi)
\le \cost(\phi')$ holds for every feasible flow $\phi'$.

If the cost function of an arc $a$ is $0$ on an interval $[0, c]$ and
$\infty$ on $(c, \infty)$, we say that $a$ has \emph{capacity} $c$.

A flow network $N$ is called \emph{convex} if the cost functions on
its arcs are convex.  In the flow networks we consider, every arc $a
\in A$ has a corresponding arc $a' \in A$ between the same vertices
pointing in the opposite direction.  A flow $\phi$ is
\emph{normalized} if $\phi(a) = 0$ or $\phi(a') = 0$ for each of these
pairs.  Since we only consider convex flow networks a normalized
optimal flow does always exist.  Thus we assume without loss of
generality that all flows are normalized.  We simplify the notation as
follows.  If we talk about an amount of flow on the arc $a$ that is
negative, we instead mean the same positive amount of flow on the
opposite arc $a'$.  In many cases minimum-cost flow networks are only
considered for linear cost functions, that is each unit of flow on an
arc causes a constant cost defined for that arc.  Note that the cost
functions in a convex flow network $N$ are piecewise linear and convex
according to our definition.  Thus, it can be easily formulated as a
flow network with linear costs by splitting every arc into multiple
arcs, each having linear costs.  It is well known that flow networks
of this kind can be solved in polynomial time.  The best known running
time depends on additional properties that $N$ may satisfy.  We use an
algorithm computing a minimum-cost flow in the network $N$ as black
box and denote the necessary running time by $T_{\flow}(|N|)$.  In
Section~\ref{sec:computing-flow} we have a closer look on which
algorithm to use.

Let $u, v \in V$ be two nodes of the convex flow network $N$ with
demands $\dem(u)$ and $\dem(v)$.  The \emph{parameterized flow
  network} with respect to the nodes $u$ and $v$ is defined the same
as $N$ but with a \emph{parameterized demand} of $\dem(u) - \rho$ for
$u$ and $\dem(v) + \rho$ for $v$ where $\rho$ is a parameter.  The
\emph{cost function} $\cost_{N}(\rho)$ of the parameterized flow
network $N$ is defined to be $\cost(\phi)$ of an optimal flow $\phi$
in $N$ with respect to the parameterized demands determined by $\rho$.
Note that increasing $\rho$ by~1 can be seen as pushing one unit of
flow from $u$ to $v$.  We define the \emph{optimal parameter}~$\rho_0$
to be the parameter for which the cost function is minimal among all
possible parameters.  The correctness of the minimum weight path
augmentation method to compute flows with minimum costs implies the
following theorem~\cite{ek-tiaenfp-72}.

\begin{theorem}
  \label{thm:flow-has-convex-cost}
  The cost function of a parameterized flow network is convex on the
  interval~$[\rho_0, \infty]$, where $\rho_0$ is the optimal
  parameter.
\end{theorem}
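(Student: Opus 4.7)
The plan is to derive convexity on $[\rho_0, \infty)$ directly from the successive shortest path characterization of minimum-cost flows alluded to by the Edmonds--Karp reference. As the paper already observes, the piecewise-linear convex cost functions on arcs can be expanded (by splitting each arc into unit-capacity arcs whose constant costs are the successive values of the difference function) into an equivalent network with linear arc costs. Convexity of the original arc costs ensures that on each resulting bundle the cheaper unit-capacity copies are used before the more expensive ones in any optimal flow, so working in this expanded linear network does not change $\cost_N(\rho)$. I would carry out the argument in the linear setting.

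Starting from an optimal flow $\phi_{\rho_0}$ for the parameter $\rho_0$, I would build, for each integer $k \ge 0$, an optimal flow $\phi_{\rho_0+k}$ for the parameter $\rho_0+k$ by repeatedly augmenting one unit of flow from $u$ to $v$ along a minimum-weight path in the residual network of the current flow, where the weight of a residual arc is $+\cost_a$ on forward arcs and $-\cost_a$ on backward arcs. This is exactly Jewell/Busacker--Gowen/Edmonds--Karp minimum-weight augmentation. By the classical correctness theorem for this method, each intermediate flow so obtained is itself an optimal flow for its parameter value, so
\[
\cost_N(\rho_0+k) - \cost_N(\rho_0+k-1) \;=\; w_k,
\]
where $w_k$ denotes the weight of the $k$-th augmenting path.

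The key property I would then invoke is the monotonicity lemma for minimum-weight augmenting paths: along the successive shortest path algorithm the weights $w_1 \le w_2 \le w_3 \le \dots$ are non-decreasing. This is the content of the lemma behind the Edmonds--Karp analysis and follows from a standard exchange argument: if $w_{k+1} < w_k$, then symmetric-difference-combining the $k$-th and $(k{+}1)$-st augmenting paths produces two unit $u$-$v$ flows in the residual network of $\phi_{\rho_0+k-1}$ whose total weight is $w_k + w_{k+1}$, hence one of them has weight strictly below $w_k$, contradicting the choice of the $k$-th path as a minimum-weight augmenting path.

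Monotonicity of the $w_k$ means that the difference function $\Delta\cost_N(\rho) = \cost_N(\rho+1) - \cost_N(\rho)$ is non-decreasing on $\{\rho_0, \rho_0+1, \dots\}$, which is exactly the convexity condition used in the paper. The main technical point to get right is the exchange argument establishing $w_k \le w_{k+1}$, together with the care needed in the reduction from convex to linear arc costs (so that the residual graph reasoning remains valid); the rest is a direct translation of this monotonicity into the claimed convexity of $\cost_N$ on $[\rho_0,\infty)$.
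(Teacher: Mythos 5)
Your proposal is correct and follows essentially the same route as the paper's proof: both rely on the correctness of successive minimum-weight augmentation (Edmonds--Karp), starting from an optimal flow for the optimal parameter, and obtain convexity by showing that consecutive augmenting-path weights are non-decreasing via an exchange argument on the two paths. The only difference is that you first pass to the unit-capacity linear-cost expansion (making residual arc costs flow-independent), whereas the paper runs the exchange directly in the convex-cost residual network, using convexity of the arc costs for arcs shared in the same direction and antisymmetry for arcs shared in opposite directions.
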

\begin{proof}
  Let $N = (V, A, \COST, \dem)$ be a parameterized flow network and
  let $\phi_0$ be a minimum-cost flow in $N$ with respect to the
  optimal parameter $\rho_0$.  To simplify notation, we assume $\rho_0
  = 0$.  The \emph{residual network} $R_0$ with respect to $\phi_0$ is
  the graph $(V, A)$ with a constant cost $\cost_0(a)$ assigned to
  every arc $a$ such that $\cost_0(a)$ is the amount of cost in $N$
  that has to be payed to push an additional unit of flow along $a$,
  with respect to the given flow $\phi_0$.  Note that this cost may be
  negative.  It is well known that an optimal flow $\phi_1$ with
  respect to the parameter~1 can be computed by pushing one unit of
  flow along a path from $u$ to $v$ with minimum weight in
  $R_0$~\cite{ek-tiaenfp-72}.  Moreover, we can continue and compute
  an optimal flow $\phi_{k+1}$ by augmenting $\phi_k$ along a minimum
  weight path in the residual network $R_k$ with respect to the flow
  $\phi_k$.  Assume we augment $\phi_k$ along the path $\pi_k$ causing
  cost $\cost_k(\pi_k)$ to obtain an optimal flow $\phi_{k+1}$ with
  respect to the parameter $k+1$ and then we augment along a path
  $\pi_{k+1}$ in $R_{k+1}$ with cost $\cost_{k+1}(\pi_{k+1})$ to
  obtain an optimal flow $\phi_{k+2}$ with respect to the parameter
  $k+2$.  To obtain the claimed convexity we have to show that
  $\cost_{k}(\pi_k) \le \cost_{k+1}(\pi_{k+1})$ holds.

  If $\pi_k$ and $\pi_{k+1}$ contain an arc $a$ in the same direction,
  then $\cost_k(a) \le \cost_{k+1}(a)$ holds by the convexity of the
  cost function of $a$.  If $\pi_k$ contains the arc $a$ and
  $\pi_{k+1}$ contains the arc $a'$ in the opposite direction then
  $\cost_k(a) = -\cost_{k+1}(a')$ holds.  Assume $\pi_k$ and
  $\pi_{k+1}$ share such an arc in the opposite direction.  Then we
  remove this arc in both directions, splitting each of the paths
  $\pi_k$ and $\pi_{k+1}$ into two subpaths.  We define two new paths
  $\pi$ and $\pi'$ by concatenating the first part of $\pi_k$ with the
  second part of $\pi_{k+1}$ and vice versa, respectively.  This can
  be done iteratively, thus we can assume that $\pi$ and $\pi'$ do not
  share arcs in the opposite direction.  We consider the cost of $\pi$
  and $\pi'$ in the residual network $R_k$.  Obviously, for an arc $a$
  that is exclusively contained either in $\pi$ or in $\pi'$ we have
  $\cost_k(a) = \cost_{k+1}(a)$.  For an arc that is contained in
  $\pi$ and~$\pi'$ we have $\cost_k(a) \le \cost_{k+1}(a)$.  Moreover,
  for every pair of arcs $a$ and $a'$ that was removed we have
  $\cost_k(a) = -\cost_{k+1}(a')$.  This yields the inequality
  $\cost_k(\pi_k) + \cost_{k+1}(\pi_{k+1}) \ge \cost_k(\pi) +
  \cost_k(\pi')$.  Since $\pi_k$ was a path with smallest possible
  weight in $R_k$ we have $\cost_{k}(\pi_k) \le \cost_k(\pi)$ and
  $\cost_k(\pi_k) \le \cost_k(\pi')$.  With the above inequality this
  yields $\cost_{k+1}(\pi_{k+1}) \ge \cost_k(\pi_k)$.
\end{proof}

\section{Valid Drawings with Fixed Planar Embedding}
\label{sec:valid-drawings-with}

In this section we consider the problem {\sc FlexDraw} for the case
that the planar embedding is fixed.  We show that the existence of a
valid orthogonal representation implies the existence of a valid
orthogonal representation with special properties.  We first show the
following.  Given a biconnected 4-planar graph with positive
flexibility and an orthogonal representation $\mathcal R$ such that
two vertices~$s$ and $t$ lie on the outer face $f$, then the rotation
along $\pi_f(s, t)$ can be reduced by~1 if it is at least~0.  This
result is a key observation for the algorithm solving the decision
problem {\sc FlexDraw}~\cite{bkrw-ogdfc-10}.  It in a sense shows that
``rigid'' graphs that have to bent strongly do not exists.  This kind
of graphs play an important role in the ${\mathcal NP}$-hardness proof
of 0-embeddability by Garg and Tamassia~\cite{gt-curpt-01}.  Moreover,
we show the existence of a valid orthogonal representation $\mathcal
R'$ inducing the same planar embedding and having the same angles
around vertices as $\mathcal R$ such that every edge has at most three
bends in $\mathcal R'$, except for a single edge on the outer face
with up to five bends.  If we allow to change the embedding slightly,
this special edge has only up to four bends.

Let $G$ be a 4-planar graph with positive flexibility and valid
orthogonal representation $\mathcal R$, and let $e$ be an edge.  If
the number of bends of $e$ equals its flexibility, we orient $e$ such
that its bends are right bends.  Otherwise, $e$ remains undirected.
We define a path $\pi = (v_1, \dots, v_k)$ in $G$ to be a
\emph{directed path}, if the edge $\{v_i, v_{i+1}\}$ (for $i \in \{1,
\dots, k-1\}$) is either undirected or directed from~$v_i$ to
$v_{i+1}$.  A path containing only undirected edges can be seen as
directed path for both possible directions.  The path $\pi$ is
\emph{strictly directed}, if it is directed and does not contain
undirected edges.  These terms directly extend to \emph{(strictly)
  directed cycles}.  Given a (strictly) directed cycle $C$ the terms
$\lef(C)$ and $\righ(C)$ denote the set of edges and vertices of $G$
lying to the left and right of~$C$, respectively, with respect to the
orientation of $C$.  A cut $(U, V \setminus U)$ is said to be
\emph{directed} from~$U$ to $V \setminus U$, if every edge $\{u, v\}$
with $u \in U$ and $v \in V \setminus U$ is either directed from $u$
to $v$ or undirected.  According to the above definitions a cut is
\emph{strictly directed} from $U$ to $V \setminus U$ if it is directed
and contains no undirected edges.  Before we show how to unwind an
orthogonal representation that is bent strongly we need the following
technical lemma.

\begin{figure}
  \centering
  \includegraphics{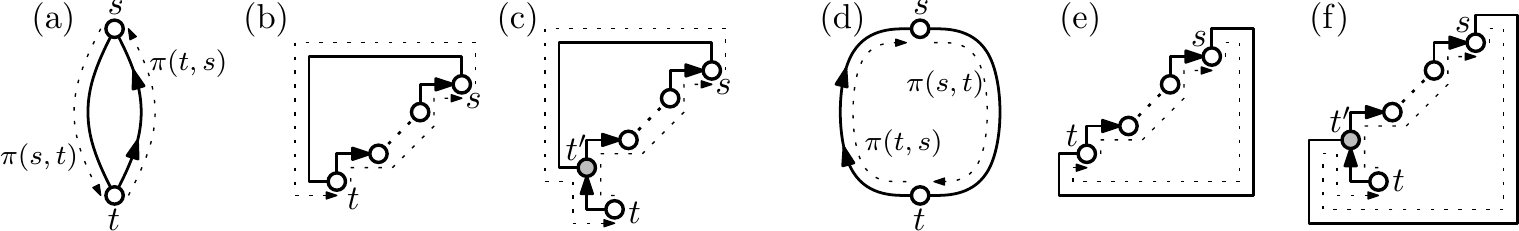}
  \caption{Since a strictly directed path from $t$ to $s$ has a lower
    bound for its rotation this yields upper bounds for paths from $s$
    to $t$ (Lemma~\ref{lem:strictly-directed-path}).}
  \label{fig:strictly-directed-path}
\end{figure}

\begin{lemma}
  \label{lem:strictly-directed-path}
  Let $G$ be a graph with positive flexibility and vertices $s$ and
  $t$ such that $G + st$ is biconnected and 4-planar.  Let further
  $\mathcal R$ be a valid orthogonal representation with $s$ and $t$
  incident to the common face $f$ such that $\pi_f(t, s)$ is strictly
  directed from $t$ to $s$. Then the following holds.
  \begin{compactenum}[(1)]
  \item $\rot_{\mathcal R}(\pi_f(s, t)) \le -3$ if $f$ is the outer
    face and $G$ does not consist of a single path
  \item $\rot_{\mathcal R}(\pi_f(s, t)) \le -1$ if $f$ is the outer face
  \item $\rot_{\mathcal R}(\pi_f(s, t)) \le 5$
  \end{compactenum}
\end{lemma}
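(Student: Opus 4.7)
The plan is to combine property~(II) (the face-sum identity) with the strictly directed hypothesis, distinguishing the three bounds by case analysis on whether $f$ is inner or outer and on the degrees of $s$ and $t$.

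First I would establish the lower bound $\rot_{\mathcal R}(\pi_f(t,s)) \geq 1$. Every edge $e$ on this path is directed $t \to s$ with all $\flex(e) \geq 1$ bends being right bends, and the face-boundary traversal of $f$ at $e$ agrees with this orientation, so $\rot(e_f) \geq 1$. Every internal vertex $v$ of the path has two path-neighbors, hence degree at least $2$, so $\rot(v_f) \geq -1$. Writing $\pi_f(t,s) = (t = u_1, \dots, u_m = s)$ and summing the $m-1$ edge contributions and the $m-2$ internal-vertex contributions yields $(m-1) - (m-2) = 1$. Property~(II), applied to the boundary walk of $f$, then gives
\[
\rot(\pi_f(s,t)) + \rot(t_f) + \rot(\pi_f(t,s)) + \rot(s_f) \;=\; \pm 4,
\]
with $+4$ for inner $f$ and $-4$ for outer, which I rearrange to $\rot(\pi_f(s,t)) \leq \pm 4 - 1 - \rot(s_f) - \rot(t_f)$.

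From this inequality I would deduce each claim. For (3), tightness rules out a pendant on an inner face (it would demand rotation $+1$ there, contradicting the definitional $-2$), so on an inner $f$ both $s,t$ have degree $\geq 2$ and $\rot(s_f), \rot(t_f) \geq -1$, giving $\leq 4 - 1 + 1 + 1 = 5$. For (2), on the outer face the crudest bounds $\rot(s_f), \rot(t_f) \geq -2$ give $\leq -4 - 1 + 2 + 2 = -1$. For (1), if both $\deg(s), \deg(t) \geq 2$, then $\rot(s_f), \rot(t_f) \geq -1$ and the inequality yields $\leq -4 - 1 + 1 + 1 = -3$ directly.

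The main obstacle is claim~(1) when $s$ (or symmetrically $t$) has degree~$1$. Let $w$ be the unique neighbor of such an $s$; the bridge $\{s,w\}$ appears as the first edge of $\pi_f(s,t)$ in direction $s \to w$ (contributing $\leq -1$) and as the last edge of $\pi_f(t,s)$ in direction $w \to s$ (contributing $\geq 1$), and these opposite-sign contributions cancel in the face sum, so the naive argument only yields $\leq -2$. I plan to recover the missing unit by a case split on $\deg(w)$. If $\deg(w) \geq 3$, tightness at the $\deg(w) - 2$ inner-face angles at $w$ contributes $+1$ each, and the degree-sum identity $2(\deg(w)-2)$ at $w$ then forces the two outer-face angles at $w$ adjacent to the bridge to sum to $\deg(w) - 2 \geq 1$, so the angle of $w$ on $\pi_f(t,s)$ is $\geq 0$ rather than $\geq -1$; this improves the lower bound on $\rot(\pi_f(t,s))$ to $\geq 2$ and yields $\leq -3$. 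If $\deg(w) = 2$, I induct on $|V(G)|$: the graph $G' = G - s$ satisfies that $G' + wt$ is biconnected (by suppressing the degree-$2$ vertex $s$ of $G + st$) and is still not a single path (otherwise $G$ itself would be a single path, contradicting the hypothesis); the truncated path $\pi_{f'}(t,w)$ remains strictly directed, so the inductive hypothesis gives $\rot(\pi_{f'}(w,t)) \leq -3$, whence
\[
\rot(\pi_f(s,t)) \;\leq\; (-1) + 1 + (-3) \;=\; -3,
\]
where the three terms bound the bridge contribution, the rotation of $w$ on $\pi_f(s,t)$, and the inductive bound. Degree~$1$ at $t$ is handled symmetrically, and two degree-$1$ endpoints reduce to one by iteration. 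The careful bookkeeping of bridge contributions, the tightness-driven rotation constraints at $w$, and the degree-sum identity is the technical core of the argument.
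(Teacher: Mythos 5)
Your overall skeleton (lower-bounding $\rot(\pi_f(t,s))$ by $1$ via the edge/internal-vertex count, then feeding this into the face-sum identity) is exactly the paper's, and your uniform treatment of claim~(2) and of claim~(1) when $\deg(s),\deg(t)\ge 2$ is fine. However, there is a genuine gap in claim~(3): you dismiss the subcase where $s$ or $t$ has degree~$1$ and $f$ is an inner face by asserting that tightness forbids a pendant vertex on an inner face. That appeal is invalid. Tightness only requires the \emph{poles of split components} to have rotation~$1$ in the \emph{internal faces of that split component}; a degree-$1$ vertex is incident to a single face and is assigned rotation $-2$ there by definition, and tightness imposes nothing on it. Worse, the lemma cannot lean on tightness at all: in the proof of Lemma~\ref{lem:reduce-rot} it is applied to the orthogonal representation induced on the subgraph $G'$ formed by the two paths $\pi$ and $\pi_f(s,t)$, where $s$ or $t$ may well have degree~$1$ and no tightness of the induced representation is available. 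In that subcase your computation only gives $\rot(\pi_f(s,t)) \le 4 - (-2) - 1 + 1 = 6$, not $5$. The paper closes exactly this hole with the vertex $t'$ where $\pi_f(s,t)$ and $\pi_f(t,s)$ split: $\deg(t')\ge 3$ forces its corner rotation to be at least~$0$, improving the bound to $\rot(\pi_f(t,s))\ge 2$ and compensating $\rot(t_f)=-2$. (Your own stripping argument could also be adapted to the inner-face case, but as written the case is simply absent.)

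Two further remarks on claim~(1). In your $\deg(w)\ge 3$ subcase the appeal to tightness is again not justified, but it is also unnecessary: since the rotations around a vertex sum to $2(\deg(w)-2)$ and each corner rotation is at most~$1$, every corner at a vertex of degree at least~$3$ automatically has rotation at least~$0$ (at degree~$3$ the corners are forced to be $1,1,0$; at degree~$4$ all are $1$), which is precisely the observation the paper uses for its splitting vertex. Your $\deg(w)=2$ stripping induction (delete the pendant $s$, note $G'+wt$ stays biconnected, $G'$ is not a single path, and the path rotation is unchanged since only corners at the path endpoint $w$ are affected) is a legitimate alternative to the paper's splitting-vertex argument, though you should also verify the degenerate possibilities ($w=t$, or $wt$ already an edge) that the suppression step glosses over.
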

\begin{proof}
  We first consider the case where $f$ is the outer face
  (Figure~\ref{fig:strictly-directed-path}(a)), that is cases~(1)
  and~(2).  Due to the fact that $\pi_f(t, s)$ is strictly directed
  from $t$ to $s$ and the flexibility of every edge is positive, each
  edge on $\pi_f(t, s)$ has rotation at least~1.  Moreover, the
  rotations at vertices along the path $\pi_f(t, s)$ are at least~$-1$
  since $\pi_f(t, s)$ is simple as $G + st$ is biconnected.  Since the
  number of internal vertices on a path is one less than the number of
  edges this yields $\rot(\pi_f(t, s)) \ge 1$; see
  Figure~\ref{fig:strictly-directed-path}(b).  If $G$ consists of a
  single path this directly yields $\rot(\pi_f(s, t)) \le -1$ and thus
  concludes case~(2).  For case~(1) first assume that the degrees of
  $s$ and $t$ are not~1 (Figure~\ref{fig:strictly-directed-path}(b)),
  that is $\rot(s_f), \rot(t_f) \in \{-1, 0, 1\}$ holds.  Since $f$ is
  the outer face the equation $\rot(\pi_f(s, t)) + \rot(t_f) +
  \rot(\pi_f(t, s)) + \rot(s_f) = -4$ holds and directly implies the
  desired inequality $\rot(\pi_f(s, t)) \le -3$.  In the case that for
  example $t$ has degree~1 (and $\deg(s) > 0$), we have $\rot(t_f) =
  -2$ and $\rot(s_f) \in \{-1, 0, 1\}$, thus the considerations above
  only yield $\rot(\pi_f(s, t)) \le -2$.  However, in this case there
  necessarily exists a vertex $t'$ where the paths $\pi_f(s, t)$ and
  $\pi_f(t, s)$ split, as illustrated in
  Figure~\ref{fig:strictly-directed-path}(c).  More precisely,
  let~$t'$ be the first vertex on $\pi_f(s, t)$ that also belongs to
  $\pi_f(t, s)$.  Obviously, the degree of $t'$ is at least~3 and thus
  $\rot(t'_f)$ (with respect to the path $\pi_f(t, s)$) is at least~0.
  Hence we obtain the stronger inequality $\rot(\pi_f(t, s)) \ge 2$
  yielding the desired inequality $\rot(\pi_f(s, t)) \le -3$.  If $s$
  and $t$ both have degree~1 we cannot only find the vertex $t'$ but
  also the vertex $s'$ where the paths $\pi_f(s, t)$ and $\pi_f(t, s)$
  split.  Since $G + st$ is biconnected these two vertices are
  distinct and the estimation above works, finally yielding
  $\rot(\pi_f(s, t)) \le -3$.

  If $f$ is an internal face
  (Figure~\ref{fig:strictly-directed-path}(d)), that is case (3)
  applies, we start with the equation $\rot(\pi_f(s, t)) + \rot(t_f) +
  \rot(\pi_f(t, s)) + \rot(s_f) = 4$.  First we consider the case that
  neither $t$ nor~$s$ have degree~1.  Thus, $\rot(t_f), \rot(s_f) \in
  \{-1, 0, 1\}$.  With the same argument as above we obtain
  $\rot(\pi_f(t, s)) \ge 1$ and hence $\rot(\pi_f(s, t)) \le 5$; see
  Figure~\ref{fig:strictly-directed-path}(e).  Now assume that $t$ has
  degree~1 and~$s$ has larger degree.  Then $\rot(t_f) = -2$ holds and
  the above estimation does not work anymore.  Again, at some vertex
  $t'$ the paths $\pi_f(t, s)$ and $\pi_f(s, t)$ split as illustrated
  in Figure~\ref{fig:strictly-directed-path}(f).  Obviously, the
  degree of $t'$ needs to be greater than $2$ and thus $\rot(t'_f)$ is
  at least~0.  This yields $\rot(\pi_f(t, s)) \ge 2$ in the case that
  $\deg(t) = 1$, compensating $\rot(t_f) = -2$ (instead of $\rot(t_f)
  \ge -1$ in the other case).  To sum up, we obtain the desired
  inequality $\rot(\pi_f(s, t)) \le 5$.  The case $\deg(s) = \deg(t) =
  1$ works analogously.
\end{proof}

The \emph{flex graph} $G_{\mathcal R}^\times$ of $G$ with respect to a
valid orthogonal representation $\mathcal R$ is defined to be the dual
graph of $G$ such that the dual edge $e^\star$ is undirected if $e$ is
undirected, otherwise it is directed from the face right of $e$ to the
face left of $e$.  Figure~\ref{fig:reduce-rot}(a) shows an example
graph with an orthogonal drawing together with the corresponding flex
graph.  Assume we have a simple directed cycle $C$ in the flex graph.
Then \emph{bending} along this cycle yields a new valid orthogonal
representation $\mathcal R'$ which is defined as follows.  Let
$e^\star = (f_1, f_2)$ be an edge contained in $C$ dual to $e$.  Then
we decrease $\rot(e_{f_1})$ and increase $\rot(e_{f_2})$ by~1.  It can
be easily seen that the necessary properties for $\mathcal R'$ to be
an orthogonal representation are satisfied.  Obviously,
$\rot_{\mathcal R'}(e_{f_1}) = - \rot_{\mathcal R'}(e_{f_2})$ holds
and rotations at vertices did not change.  Moreover, the rotation
around a face $f$ does not change since $f$ is either not contained in
$C$ or it is contained in $C$, but then it has exactly one incoming
and exactly one outgoing edge.  Note that bending along a cycle in the
flex graph preserves the planar embedding of $G$ and for every vertex
the rotations in all incident faces.  The following lemma shows that a
high rotation along a path $\pi_f(s, t)$ for two vertices~$s$ and~$t$
sharing the face $f$ can be reduced by~1 using a directed cycle in the
flex graph.

\begin{lemma}
  \label{lem:reduce-rot}
  Let $G$ be a biconnected 4-planar graph with positive flexibility, a
  valid orthogonal representation~$\mathcal R$ and $s$ and $t$ on a
  common face $f$.  The flex graph $G_{\mathcal R}^\times$ contains a
  directed cycle $C$ such that $f \in C$, $s\in \lef(C)$ and $t\in
  \righ(C)$, if one of the following conditions holds.
  \begin{compactenum}[(1)]
  \item $\rot_{\mathcal R}(\pi_f(s, t)) \ge -2$, $f$ is the outer face
    and $\pi_f(s, t)$ is not strictly directed from $t$ to $s$
  \item $\rot_{\mathcal R}(\pi_f(s,t)) \ge 0$ and $f$ is the outer
    face
  \item $\rot_{\mathcal R}(\pi_f(s, t)) \ge 6$
  \end{compactenum}
\end{lemma}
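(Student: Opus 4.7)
The plan is to prove the lemma by contradiction using a max-flow / min-cut style argument in the flex graph $G^\times_{\mathcal R}$. Suppose that no directed cycle $C$ with $f \in C$, $s \in \lef(C)$ and $t \in \righ(C)$ exists.

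First I would reformulate cycle-existence as reachability. Split the dual vertex $f$ into two copies $f_L$ and $f_R$, glued along the primal path $\pi_f(s,t)$, where the labels are chosen so that a directed cycle of the desired form corresponds exactly to a directed $f_R$-to-$f_L$ walk in the resulting split flex graph (with undirected dual arcs usable in either direction). The non-existence hypothesis then reads: $f_L$ is unreachable from $f_R$.

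Next I would extract a primal structure from this reachability failure. Let $A$ be the set of dual vertices reachable from $f_R$. For every primal edge $e$ whose dual $e^\star$ crosses between $A$ and $V^\star \setminus A$, the arc $e^\star$ can be neither undirected (else $A$ would grow across it) nor leave $A$; hence $e$ is oriented, and by the flex-graph convention the non-$A$ face of $e$ sits on its right. The resulting set of primal edges forms a barrier in the plane separating $A$-faces from non-$A$-faces, and because $f_L$ and $f_R$ are separated at $f$ precisely by $\pi_f(s,t)$, the barrier's endpoints must be $s$ and $t$. Rerouting any shortcuts through the interior back onto the boundary of $f$ (which is legitimate since the shortcut and the replaced arc together bound a subregion), the barrier can be taken to coincide with $\pi_f(t,s)$, showing that $\pi_f(t,s)$ is strictly directed from $t$ to $s$.

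The last step is to invoke Lemma~\ref{lem:strictly-directed-path}. Since $G$ is biconnected, it is not a single edge-path, so case~(1) of that lemma gives $\rot_{\mathcal R}(\pi_f(s,t)) \le -3$, case~(2) gives $\le -1$, and case~(3) gives $\le 5$, each directly contradicting the corresponding hypothesis of the present lemma. The ``not strictly directed from $t$ to $s$'' assumption appearing in condition~(1) is precisely what is needed in Step~2 to rule out the degenerate configuration where the extracted barrier collapses onto $\pi_f(s,t)$ itself rather than producing a path along $\pi_f(t,s)$; under conditions~(2) and~(3) this degeneracy is already excluded by the stronger rotation bound, because a strictly directed $\pi_f(s,t)$ would itself force $\rot_{\mathcal R}(\pi_f(s,t))$ to be far too negative. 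I expect the main obstacle to be the careful orientation bookkeeping in Step~2: ensuring that the cut between $A$ and $V^\star \setminus A$ really concatenates into a single strictly directed $t$-to-$s$ path along $\pi_f(t,s)$ rather than into an unrelated union of cycles and paths, and that the ``left/right with respect to the edge direction'' conventions in the definition of the flex graph line up with the ``clockwise/counter-clockwise traversal of $f$'' conventions used to define $\rot$ and $\pi_f$.
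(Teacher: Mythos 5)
Your first step (the reachability/cut argument in the flex graph) is essentially the paper's: assuming no cycle exists, every $s$--$t$ cut contains an edge directed the wrong way, and the set of faces/vertices reachable from $t$ yields a strictly directed path from $t$ to $s$. The genuine gap is in your second step, where you claim the resulting barrier ``can be taken to coincide with $\pi_f(t,s)$,'' i.e.\ that $\pi_f(t,s)$ itself is strictly directed from $t$ to $s$. That is false in general: if an edge $e$ of $\pi_f(t,s)$ is undirected, its dual arc is traversable in both directions, so the inner face across $e$ is also reachable and $e$ is simply not a cut edge --- the barrier then runs strictly through the interior of $G$, and no ``rerouting'' is possible, because the boundary edges you would reroute onto may be undirected or oppositely directed (which is exactly why the reachable region spills across them). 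The shortcut and the boundary arc bounding a subregion tells you nothing about the orientation of the boundary edges. Consequently you cannot apply Lemma~\ref{lem:strictly-directed-path} to $G$ with the face $f$, since its hypothesis that $\pi_f(t,s)$ is strictly directed is not established.

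The paper's proof closes exactly this hole by a step you are missing: it takes the strictly directed path $\pi$ (wherever it runs in $G$), passes to the subgraph $G'$ induced by $\pi$ and $\pi_f(s,t)$ together with the orthogonal representation induced by $\mathcal R$, and applies Lemma~\ref{lem:strictly-directed-path} to $G'$, where $\pi$ \emph{is} a face-boundary path $\pi_{f'}(t,s)$ and $\pi_{f'}(s,t)=\pi_f(s,t)$, so the rotation bounds transfer back to $\mathcal R$. Note also that the ``single path'' proviso in Lemma~\ref{lem:strictly-directed-path}(1) refers to the graph the lemma is applied to, which is $G'$, not $G$; $G'$ can degenerate to a single path precisely when $\pi$ is $\pi_f(s,t)$ traversed backwards, which is why condition~(1) of the present lemma carries the ``not strictly directed from $t$ to $s$'' assumption and why condition~(2) falls back on part~(2) of Lemma~\ref{lem:strictly-directed-path}. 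Your dismissal of this degeneracy for conditions (2) and (3) is in the right spirit, but your appeal to ``$G$ is biconnected, hence not a single path'' in condition~(1) targets the wrong graph. With the subgraph-restriction step inserted, the rest of your argument matches the paper's.
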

\begin{proof}
  Figure~\ref{fig:reduce-rot}(b) shows the path $\pi_f(s, t)$ together
  with the desired cycle $C$.  Due to the duality of a cycle in the
  dual and a cut in the primal graph a directed cycle~$C$ in
  $G_{\mathcal R}^\times$ having $s$ and $t$ to the left and to the
  right of $C$, respectively, induces a directed cut in $G$ that is
  directed from $s$ to $t$ and vice versa.  Recall that directed
  cycles and cuts may also contain undirected edges.  Assume for
  contradiction that such a cycle $C$ does not exist.  
  \begin{claim}
    The graph $G$ contains a strictly directed path $\pi$ from $t$ to
    $s$.
  \end{claim}
  \noindent 
  Every cut $(S, T)$ with $T = V \setminus S$, $s \in S$ and~$t \in T$
  separating $s$ from $t$ must contain an edge that is directed from
  $T$ to $S$, otherwise this cut would correspond to a cycle $C$ in
  the flex graph that does not exist by assumption.  Let $T$ be the
  set of vertices in $G$ that can be reached by strictly directed
  paths from $t$.  If $T$ contains $s$ we found the path $\pi$
  strictly directed from $t$ to $s$.  Otherwise, $(S, T)$ with $S = V
  \setminus T$ is a cut separating $S$ from $T$ and there cannot be an
  edge that is directed from a vertex in $T$ to a vertex in $S$ which
  is a contradiction, and thus the path $\pi$ strictly directed from
  $t$ to $s$ exists, which concludes the proof of the claim.

  Let $G'$ be the subgraph of $G$ induced by the paths $\pi$ and
  $\pi_f(s, t)$ together with the orthogonal representation $\mathcal
  R'$ induced by $\mathcal R$.

  We first consider case~(1).  Let $f'$ be the outer face of the
  orthogonal representation $\mathcal R'$.  Obviously, $\pi_{f'}(s, t)
  = \pi_f(s, t)$ and $\pi = \pi_{f'}(t, s)$ holds, see
  Figure~\ref{fig:reduce-rot}(c).  Moreover, the graph $G' + st$ is
  biconnected and $G'$ does not consist of a single path since
  $\pi_{f'}(s, t)$ and $\pi_{f'}(t, s)$ are different due to the
  assumption that $\pi_f(s, t)$ is not strictly directed from $t$ to
  $s$.  Since $\pi_{f'}(t, s)$ is strictly directed from $t$ to $s$ we
  can use Lemma~\ref{lem:strictly-directed-path}(1) yielding
  $\rot_{\mathcal R'}(\pi_{f'}(s, t)) \le -3$ and thus $\rot_{\mathcal
    R}(\pi_f(s, t)) \le -3$, which is a contradiction.

  For case~(2) exactly the same argument holds except for the case
  where the strictly directed path $\pi$ is the path $\pi_f(s, t)$
  strictly directed from $t$ to $s$.  In this case we have to use
  Lemma~\ref{lem:strictly-directed-path}(2) instead of
  Lemma~\ref{lem:strictly-directed-path}(1) yielding $\rot_{\mathcal
    R}(\pi_f(s, t)) \le -1$, which is again a contradiction.

  In case~(3) the subgraph $G'$ of $G$ induced by the two paths $\pi$
  and $\pi_{f}(s, t)$ again contains~$s$ and~$t$ on a common face
  $f'$, which may be the outer or an inner face, see
  Figure~\ref{fig:reduce-rot}(c) and Figure~\ref{fig:reduce-rot}(d),
  respectively.  In both cases we obtain $\rot_{\mathcal R}(\pi_f(s,
  t)) \le 5$ due to Lemma~\ref{lem:strictly-directed-path}(3), which
  is a contradiction.
\end{proof}

\begin{figure}
  \centering
  \includegraphics{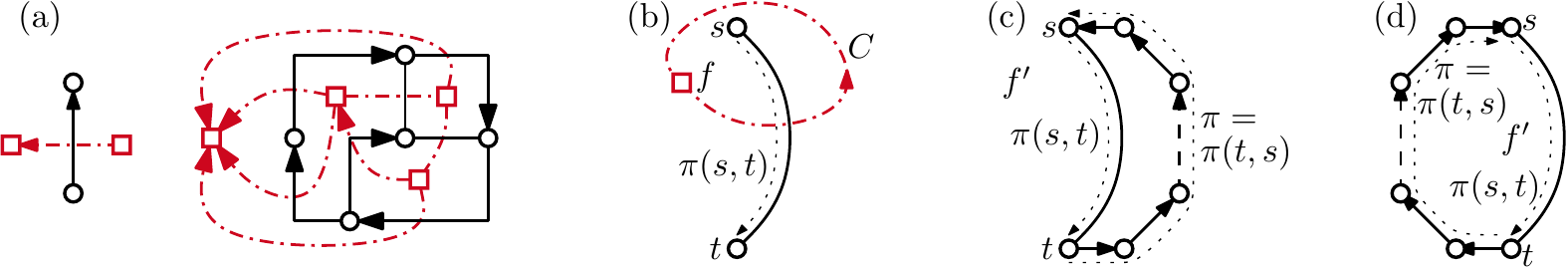}
  \caption{(a) An orthogonal representation and the corresponding flex
    graph where every edge has flexibility~1.  (b, c, d) Illustration
    of Lemma~\ref{lem:reduce-rot}.}
  \label{fig:reduce-rot}
\end{figure}

Lemma~\ref{lem:reduce-rot} directly yields the following corollary,
showing that graphs with positive flexibility behave very similar to
single edges with positive flexibility.

\begin{corollary}
  \label{cor:rotations-form-interval}
  Let $G$ be a graph with positive flexibility and vertices $s$ and
  $t$ such that $G + st$ is biconnected and 4-planar.  Let further
  $\mathcal R$ be a valid orthogonal representation with $s$ and $t$
  on the outer face $f$ such that $\rho = \rot_{\mathcal R}(\pi_f(s,
  t)) \ge 0$.  For every rotation $\rho' \in [-1, \rho]$ there exists
  a valid orthogonal representation $\mathcal R'$ with $\rot_{\mathcal
    R'}(\pi_f(s, t)) = \rho'$.
\end{corollary}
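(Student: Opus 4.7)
The plan is a straightforward induction in which Lemma~\ref{lem:reduce-rot}(2) is applied once for each unit by which we want to decrease the rotation. Set $\mathcal R_0 := \mathcal R$, which has $\rot_{\mathcal R_0}(\pi_f(s,t)) = \rho \ge 0$. Since $s$ and $t$ lie on the outer face $f$, Lemma~\ref{lem:reduce-rot}(2) furnishes a directed cycle $C_0$ in the flex graph $G_{\mathcal R_0}^\times$ with $f \in C_0$, $s\in\lef(C_0)$ and $t\in\righ(C_0)$. Bending along $C_0$ produces a valid orthogonal representation $\mathcal R_1$ which, by construction of the flex graph, preserves the planar embedding and the rotations at all vertices, and changes rotations of edges only at dual edges lying on $C_0$.

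Next I would verify that $\rot_{\mathcal R_1}(\pi_f(s,t)) = \rho - 1$ by a cut/cycle duality argument: a cycle in the dual visiting $f$ exactly once enters and leaves $f$ through exactly two dual arcs, so the corresponding primal cut crosses $\partial f$ exactly twice. Since $s\in\lef(C_0)$ and $t\in\righ(C_0)$, these two crossings must separate $s$ from $t$ along $\partial f$, i.e., one crossing lies on $\pi_f(s,t)$ and the other on $\pi_f(t,s)$. The orientation of $C_0$ (from the $s$-side to the $t$-side) guarantees that the single rotation change occurring along $\pi_f(s,t)$ contributes $-1$, so the rotation decreases by exactly one.

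Iterating the construction, as long as the current rotation is still $\ge 0$ we can re-invoke Lemma~\ref{lem:reduce-rot}(2) (the hypotheses on $G$, on the positive flexibility, and on $s,t$ being on the outer face all depend only on data that bending does not change). Starting from $\rho$ we thus obtain a chain of valid orthogonal representations $\mathcal R_0, \mathcal R_1, \dots, \mathcal R_{\rho+1}$ whose rotations along $\pi_f(s,t)$ are $\rho, \rho-1, \dots, 0, -1$. Crucially, the last step is legitimate because Lemma~\ref{lem:reduce-rot}(2) asks only that the rotation be $\ge 0$, which holds at $\mathcal R_\rho$; this is exactly the reason that the interval in the corollary extends down to $-1$ rather than stopping at $0$. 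Each integer value in $[-1,\rho]$ is realised by some $\mathcal R_k$, proving the claim.

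The main obstacle is the bookkeeping in the second paragraph: making precise that bending along a single directed cycle in the flex graph decreases $\rot(\pi_f(s,t))$ by exactly one, rather than by some other amount or not at all. All the other steps are essentially immediate consequences of the preceding lemma, but this quantitative control on how the rotation changes is what ties the qualitative statement of Lemma~\ref{lem:reduce-rot} to the pointwise conclusion of the corollary.
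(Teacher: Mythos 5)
Your iteration scheme is exactly the paper's: apply Lemma~\ref{lem:reduce-rot}(2) repeatedly, bending along the furnished cycle to lower $\rot(\pi_f(s,t))$ one unit at a time until it reaches $-1$ (and your cut/cycle bookkeeping for why each bend changes the rotation by exactly one is sound, even if the paper leaves it implicit). However, there is a genuine gap: the corollary only assumes that $G + st$ is biconnected, not that $G$ itself is, whereas Lemma~\ref{lem:reduce-rot} is stated for a \emph{biconnected} 4-planar graph. Your proof invokes the lemma directly on $G$ without ever checking this hypothesis, so it fails for precisely the instances the corollary is later needed for (pertinent graphs of S-nodes, for example, are chains that are not biconnected on their own, yet become biconnected after adding $st$).

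The paper closes this case with a short augmentation trick that your argument is missing: if $G$ is not biconnected, add the edge $\{s,t\}$ to the orthogonal representation, embedded in the outer face so that $\pi_f(s,t)$ is unchanged and $\pi_f(t,s)$ consists of the new edge alone. The augmented graph is biconnected and still has positive flexibility, so Lemma~\ref{lem:reduce-rot}(2) applies to it, and restricting the resulting representations back to $G$ yields the desired family $\mathcal R'$ with every rotation value in $[-1,\rho]$. With that case added (and the biconnected case handled as you wrote it), your proof matches the paper's.
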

\begin{proof}
  For the case that $G$ itself is biconnected, the claim follows
  directly from Lemma~\ref{lem:reduce-rot}(2), since we can reduce the
  rotation along $\pi_f(s, t)$ stepwise by~1, starting with the
  orthogonal representation~$\mathcal R$, until we reach a rotation of
  $-1$.  For the case that $G$ itself is not biconnected we add the
  edge $\{s, t\}$ to the orthogonal representation $\mathcal R$ such
  that the path $\pi_f(s, t)$ does not change, that is $\pi_f(t, s)$
  consists of the new edge $\{s, t\}$.  Again
  Lemma~\ref{lem:reduce-rot}(2) can be used to reduce the rotation
  stepwise down to~$-1$.
\end{proof}

As edges with many bends imply the existence of paths with high
rotation, we can use Lemma~\ref{lem:reduce-rot} to successively reduce
the number of bends of every edge down to three, except for a single
edge on the outer face.  Since we only bend along cycles in the flex
graph, neither the embedding nor the angles around vertices are
changed.

\begin{theorem}
  \label{thm:three-bends-per-edge}
  Let $G$ be a biconnected 4-planar graph with positive flexibility,
  having a valid orthogonal representation.  Then $G$ has a valid
  orthogonal representation with the same planar embedding, the same
  angles around vertices and at most three bends per edge, except for
  at most one edge on the outer face with up to five bends.
\end{theorem}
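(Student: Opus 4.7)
My plan is to start from an arbitrary valid orthogonal representation $\mathcal R$ of $G$ and transform it using two kinds of operations, both of which preserve the planar embedding and every vertex rotation: (i)~local straightening of adjacent opposing bends on a single edge, and (ii)~bending along a directed cycle in the flex graph as furnished by Lemma~\ref{lem:reduce-rot}. After applying~(i) exhaustively we may assume $\rho(e)=|\rot(e_f)|$ for every edge $e$ and either incident face $f$, so bend counts coincide with rotation absolute values.

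I would then apply Lemma~\ref{lem:reduce-rot} iteratively: whenever a pair of vertices $s,t$ on a common face $f$ satisfies one of its three hypotheses, bend along the directed cycle the lemma provides. This strictly decreases $\rot(\pi_f(s,t))$ by~$1$ and leaves the embedding and every vertex rotation intact. A suitable potential --- for instance the sum of $\max(0,\rot(\pi_f(s,t))-5)$ over inner-face pairs together with $\max(0,\rot(\pi_f(s,t))+1)$ over outer-face pairs --- decreases strictly at every step, so the process terminates. At termination every single-edge inner-face path $\pi_f(u,v)=e$ satisfies $\rot(e_f)\le 5$ and every outer-face single-edge path satisfies $\rot\le -1$, so $\rho(e)\le 5$ for every edge and $\rho(e)\ge 1$ for every outer-face edge.

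To sharpen the bound from~$5$ to~$3$ on interior edges, I would apply Lemma~\ref{lem:reduce-rot} once more, but now to \emph{extended} boundary paths $\pi_f(s,t)$ whose interior contains the offending edge~$e$. A case distinction on the degrees of $e$'s endpoints and on the rotations of the two neighbouring edges along the boundary of~$f$ is meant to show that any interior edge with $\rho(e)\ge 4$ admits such an extension whose rotation exceeds~$5$, contradicting termination. The case distinction breaks only when~$e$ lies on the outer face; any surplus bends that cannot be removed this way are then rerouted by further bending along flex-graph cycles that pass through the outer face, consolidating all surplus onto a single distinguished outer-face edge, which ends up with at most~$5$ bends. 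Since these final operations are again flex-graph bends, the embedding and every vertex rotation are preserved throughout.

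The hard part will be the final case analysis. I expect the awkward configurations to be those in which both endpoints of an interior edge~$e$ have degree~$4$ (forcing $\rot(v_f)=1$ in each incident face) but the edges adjacent to~$e$ along the boundary of~$f$ are bent ``opposite'' to~$e$, so that the most obvious one-step extensions of~$e$'s path still fall short of the threshold~$6$ in Lemma~\ref{lem:reduce-rot}. Handling these will likely require combining several different extensions simultaneously, or invoking Corollary~\ref{cor:rotations-form-interval} to first reroute a small number of bends locally before a final application of Lemma~\ref{lem:reduce-rot} clears the surplus.
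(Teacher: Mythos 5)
Your plan has a genuine gap at its very core: the termination argument. Bending along a directed cycle of the flex graph decreases $\rot(\pi_f(s,t))$ for the chosen pair, but in \emph{every} face the cycle passes through it increases the rotation of one crossed edge by one; consequently the rotations of many other boundary paths go up, including single-edge paths of edges whose flexibility exceeds their current number of bends (flexibilities may be large or even $\infty$). Hence the potential you propose is not monotone: the selected term drops by $1$ while arbitrarily many other terms can rise, and the claim that it ``decreases strictly at every step'' is unjustified and false in general. The paper closes exactly this hole with a device your proposal lacks: after each single reduction it resets the flexibility of every edge to $\max\{\rho,1\}$, where $\rho$ is its current number of bends. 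Since a directed cycle in the flex graph can never push an edge beyond its flexibility, this guarantees that from then on no edge ever gains bends (except possibly from zero to one), so the edges can be processed one at a time with monotone progress. Without such a mechanism, repairing one edge may re-create four-or-more-bend edges elsewhere and the iteration need not terminate in the desired state.

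The second gap is the step from ``possibly several outer-face edges with four or five bends'' to ``at most one,'' which you dispose of by ``consolidating the surplus'' via further flex-graph bends; this is where the actual work of the theorem lies. For such an edge $e=\{s,t\}$ the complementary outer-face path only satisfies $\rot(\pi_f(s,t))\ge -2$, so Lemma~\ref{lem:reduce-rot}(2) is unavailable and one must invoke Lemma~\ref{lem:reduce-rot}(1), whose hypothesis---$\pi_f(s,t)$ is not strictly directed from $t$ to $s$---has to be verified. The paper does this by using that $\pi_f(s,t)$ contains a second edge with at least four bends, splitting the path at that edge and showing that one of the flanking subpaths has nonnegative rotation, hence cannot be strictly directed; nothing in your sketch supplies this argument. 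Finally, your planned ``sharpening'' of interior edges via extended paths containing the offending edge is both misdirected and unnecessary: bending along the cycle for a long path need not touch $e$ at all, whereas if an interior edge $e=\{s,t\}$ has at least four bends, then in the inner face $f$ where its rotation is at most $-4$ the face-sum identity (total rotation $4$, vertex rotations at most $1$) gives $\rot(\pi_f(s,t))\ge 6$, so Lemma~\ref{lem:reduce-rot}(3) applies directly to the endpoints of $e$, and the resulting cycle, separating $s$ from $t$ through $f$, must cross $e$ and removes one of its bends---no degree case distinction is needed, so the ``hard part'' you anticipate does not arise once the pair is chosen correctly.
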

\begin{proof}
  In the following we essentially pick an edge with more than three
  bends, reduce the number of bends by one and continue with the next
  edge.  After each of these reduction steps we set the flexibility of
  every edge down to $\max\{\rho, 1\}$, where $\rho$ is the number of
  bends it currently has.  This ensures that in the next step the
  number of bends of each edge either is decreased, remains as it is
  or is increased from zero to one.

  We start with an edge $e = \{s, t\}$ that is incident to two faces
  $f_1$ and $f_2$ and has more than three bends.  Due to the fact that
  we traverse inner faces in clockwise and the outer face in
  counter-clockwise direction, the edge $e$ forms in one of the two
  faces the path from $s$ to $t$ and in the other face the path from
  $t$ to $s$.  Assume without loss of generality that $\pi_{f_1}(t,
  s)$ and $\pi_{f_2}(s, t)$ are the paths on the boundary of $f_1$ and
  $f_2$, respectively, that consist of $e$.  Note that
  $\rot(\pi_{f_1}(t, s)) = - \rot(\pi_{f_2}(s, t))$ holds and we
  assume that $\rot(\pi_{f_1}(t, s))$ is not positive.  As $e$ was
  assumed to have more than three bends, the inequality
  $\rot(\pi_{f_1}(t, s)) \le -4$ holds.  We distinguish between the
  two cases that $f_1$ is an inner or the outer face.  We first
  consider the case that $f_1$ is an inner face;
  Figure~\ref{fig:three-bends}(a) illustrates this situation for the
  case where $e$ has four bends.  Then the rotations around the face
  $f_1$ sum up to~4.  As the rotations at the vertices $s$ and $t$ can
  be at most~1, we obtain $\rot(\pi_{f_1}(s, t)) \ge 6$.  Thus we can
  apply Lemma~\ref{lem:reduce-rot}(3) to reduce the rotation of
  $\pi_{f_1}(s, t)$ by bending along a cycle in the flex graph that
  contains $f_1$ and separates $s$ from $t$.  Obviously, this
  increases the rotation along $\pi_{f_1}(t, s)$ by~1 and thus reduces
  the number of bends of $e$ by~1.

  \begin{figure}
    \centering
    \includegraphics{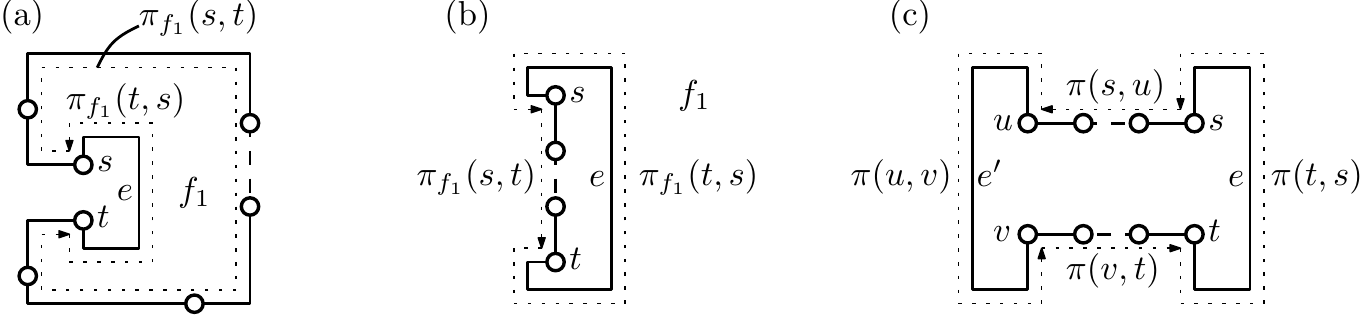}
    \caption{Reducing the number of bends on edges
      (Theorem~\ref{thm:three-bends-per-edge})}
    \label{fig:three-bends}
  \end{figure}

  For the case that $f_1$ is the outer face we first ignore the case
  where $e$ has four or five bends and show how to reduce the number
  of bends to five; Figure~\ref{fig:three-bends}(b) shows the case
  where $e$ has six bends.  Thus the inequality $\rot(\pi_{f_1}(t, s))
  \le -6$ holds.  As the rotations around the outer face $f_1$ sum up
  to $-4$ and the rotations at the vertices $s$ and $t$ are at most~1,
  the rotation along $\pi_{f_1}(s, t)$ must be at least~0.  Thus we
  can apply Lemma~\ref{lem:reduce-rot}(2) to reduce the rotation of
  $\pi_{f_1}(s, t)$ by~1, increasing the rotation along $\pi_{f_1}(t,
  s)$, and thus reducing the number of bends of $e$ by one.

  Finally, we obtain an orthogonal representation having at most three
  bends per edge except for some edges on the outer face with four or
  five bends having their negative rotation in the outer face.  If
  there is only one of these edges left we are done.  Otherwise let $e
  = \{s, t\}$ be one of the edges with $\rot(\pi_f(t, s)) \in \{-5,
  -4\}$, where $f$ is the outer face.  Then the inequality
  $\rot(\pi_f(s, t)) \ge -2$ holds by the same argument as before and
  we can apply Lemma~\ref{lem:reduce-rot}(1) to reduce the rotation,
  if we can ensure that~$\pi_f(s, t)$ is not strictly directed from
  $t$ to $s$.  To show that, we make use of the fact that~$\pi_f(s,
  t)$ contains an edge $e' = \{u, v\}$ with at least four bends due to
  the assumption that $e$ was not the only edge with more than three
  bends.  Assume without loss of generality that $u$ occurs before~$v$
  on $\pi_f(s, t)$, thus $\pi_f(s, t)$ splits into the three parts
  $\pi_f(s, u)$, $\pi_f(u, v)$ and $\pi_f(v, t)$.  Recall that
  $\rot(\pi_f(s, t)) \ge -2$ holds and thus $\rot(\pi_f(s, u)) +
  \rot(u) + \rot(\pi_f(u, v)) + \rot(v) + \rot(\pi_f(v, t)) \ge -2$.
  As the rotation at the vertices $u$ and $v$ is at most~1 and the
  rotation of $\pi_f(u, v)$ at most~$-4$ it follows that
  $\rot(\pi_f(s, u)) + \rot(\pi_f(v, t)) \ge 0$.
  Figure~\ref{fig:three-bends}(c) illustrates the situation for the
  case where~$e$ and $e'$ have four bends and $\rot(\pi_f(s, u)) =
  \rot(\pi_f(v, t)) = 0$.  Note that at least one of the two paths is
  not degenerate in the sense that $s \not= u$ or $v \not= t$,
  otherwise the total rotation around the outer face would be at most
  $-6$, which is a contradiction.  Assume without loss of generality
  that $\rot(\pi_f(s, u)) \ge 0$.  It follows that $\pi_f(s, u)$
  cannot be strictly directed from $u$ to $s$ and since~$\pi_f(s, u)$
  is a subpath of~$\pi_f(s, t)$ the path $\pi_f(s, t)$ cannot be
  strictly directed from $t$ to $s$.  This finally shows that we can
  use part~(1) of Lemma~\ref{lem:reduce-rot} implying that we can find
  a valid orthogonal representation such that at most a single edge
  with four or five bends remains, whereas all other edges have at
  most three bends.
\end{proof}

If we allow the embedding to be changed slightly, we obtain an even
stronger result.  Assume the edge $e$ lying on the outer face has more
than three bends.  If $e$ has five bends, we can reroute it in the
opposite direction around the rest of the graph, that is we can choose
the internal face incident to $e$ to be the new outer face.  In the
resulting drawing $e$ has obviously only three bends.  Thus the
following result directly follows from
Theorem~\ref{thm:three-bends-per-edge}.

\begin{corollary}
  \label{cor:three-four-bends-per-edge}
  Let $G$ be a biconnected 4-planar graph with positive flexibility
  having a valid orthogonal representation.  Then $G$ has a valid
  orthogonal representation with at most three bends per edge except
  for possibly a single edge on the outer face with four bends.
\end{corollary}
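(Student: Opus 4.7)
The plan is to bootstrap from Theorem~\ref{thm:three-bends-per-edge}, which already furnishes a valid orthogonal representation $\mathcal R$ of $G$ in which every edge carries at most three bends, with the sole possible exception of one edge $e$ on the outer face with four or five bends whose negative rotation lies in the outer face. If $e$ has at most four bends, the conclusion already holds for $\mathcal R$, so only the case that $e$ has exactly five bends needs treatment.

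To handle that case, I will modify $\mathcal R$ by swapping the outer-face designation across $e$. Let $f$ be the outer face of $\mathcal R$ and let $f'$ be the internal face on the other side of $e$; after normalizing the bends of $e$ so that they all point in the same direction one has $\rot_{\mathcal R}(e_f) = -5$ and $\rot_{\mathcal R}(e_{f'}) = 5$. I define a new representation $\mathcal R'$ by declaring $f'$ to be the outer face, setting $\rot_{\mathcal R'}(e_f) = 3$ and $\rot_{\mathcal R'}(e_{f'}) = -3$, and keeping every other rotation as well as the combinatorial embedding untouched. Geometrically this corresponds to rerouting $e$ the short way through the former internal face $f'$, and the number of bends on $e$ drops from five to three.

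What remains is to verify that $\mathcal R'$ is a valid orthogonal representation, that is, that it satisfies properties (I)--(III) from Section~\ref{sec:orth-repr}. Property (I) is immediate from $\rot_{\mathcal R'}(e_f) = -\rot_{\mathcal R'}(e_{f'})$; property (III) holds because no vertex rotations are touched; property (II) reduces to a small arithmetic check, since the change $\Delta\rot(e_f) = +8$ converts the old sum of $-4$ around $f$ to $+4$ (consistent with $f$ being an internal face in $\mathcal R'$), and symmetrically the change $\Delta\rot(e_{f'}) = -8$ converts the old sum of $+4$ around $f'$ to $-4$ (consistent with $f'$ being the outer face). All other faces, vertex angles, and flexibility constraints are unaffected, as $e$ is the only edge whose rotation changes and its bend count has strictly decreased. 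The whole argument is essentially bookkeeping on top of Theorem~\ref{thm:three-bends-per-edge}, so I do not expect any real obstacle; the only mildly delicate point is keeping track of how face rotation sums shift by exactly $\pm 8$ under the swap of outer face, which forces precisely the reduction from five to three bends on $e$.
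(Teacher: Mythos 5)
Your proposal is correct and follows exactly the paper's argument: start from Theorem~\ref{thm:three-bends-per-edge} and, if the exceptional edge has five bends, re-designate the internal face incident to it as the outer face so that the edge can be drawn with only three bends. The paper states this in one sentence; your $\pm 8$ rotation bookkeeping just makes explicit why properties (I)--(III) are preserved and why five bends become exactly three.
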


Note that Corollary~\ref{cor:three-four-bends-per-edge} is restricted
to biconnected graphs.  For general graphs it implies that each block
contains at most a single edge with up to four bends.
Figure~\ref{fig:lin-four-bend-edges} illustrates an instance of {\sc
  FlexDraw} with linearly many blocks and linearly many edges that are
required to have four bends, showing that
Corollary~\ref{cor:three-four-bends-per-edge} is tight.

\begin{figure}
  \centering
  \includegraphics[page=1]{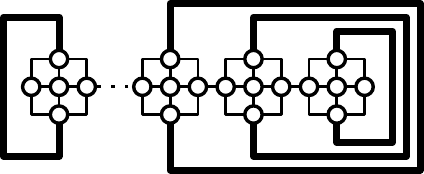}
  \caption{An instance of {\sc FlexDraw} requireing linearly many
    edges to have four bends.  Flexibilites are~1 except for the thick
    edges with flexibility~4.}
\label{fig:lin-four-bend-edges}
\end{figure}

Theorem~\ref{thm:three-bends-per-edge} implies that it is sufficient
to consider the flexibility of every edge to be at most~5, or in terms
of costs we want to optimize, it is sufficient to store the cost
function of an edge only in the interval $[0, 5]$.  However, there are
two reasons why we need a stronger result.  First, we want to compute
cost functions of split components and thus we have to limit the
number of ``bends'' they can have (see the next section for a precise
definition of bends for split components).  Second, as mentioned in
the introduction (see Figure~\ref{fig:not-convex-example}) the cost
function of a split component may already be non-convex on the
interval $[0, 5]$.  Fortunately, the second reason is not really a
problem since there may be at most a single edge with up to five
bends, all remaining edges have at most three bends and thus we only
need to consider their cost functions on the interval $[0, 3]$.  

In the following section we focus on dealing with the first problem
and strengthen the results so far presented by extending the
limitation on the number of bends to split components.  Note that a
split pair inside an inner face of $G$ with a split component $H$
having a rotation less than $-3$ on its outer face implies a rotation
of at least~6 in some inner face of $G$.  Thus, we can again apply
Lemma~\ref{lem:reduce-rot}(3) to reduce the rotation showing that
split components and single edges can be handled similarly.  However,
by reducing the rotation for one split component, we cannot avoid that
the rotation of some other split component is increased.  For single
edges we did that by reducing the flexibility to the current number of
bends.  In the following section we extend this technique by defining
a flexibility not only for edges but also for split components.  We
essentially show that all results we presented so far still apply, if
we allow this kind of extended flexibilities.

\section{Flexibility of Split Components and Nice Drawings}
\label{sec:flex-split-comp}

Let $G$ be a biconnected 4-planar graph with SPQR-tree $\mathcal T$
and let $\mathcal T$ be rooted at some node $\tau$.  Recall that we do
not require $\tau$ to be a Q-node.  Let $\mu$ be a node of $\mathcal
T$ that is not the root $\tau$.  Then~$\mu$ has a unique parent and
$\skel(\mu)$ contains a unique virtual edge $\eps = \{s, t\}$ that is
associated with this parent.  We call the split-pair $\{s, t\}$ a
\emph{principal split pair} and the pertinent graph $\pert(\mu)$ with
respect to the chosen root a \emph{principal split component}.  The
vertices $s$ and $t$ are the \emph{poles} of this split component.
Note that a single edge is also a principal split component except for
the case that its Q-node is chosen to be the root.  A planar embedding
of $G$ is represented by $\mathcal T$ with the root $\tau$ if the
embedding of each skeleton has the edge associated with the parent on
the outer face.

Let $\mathcal R$ be a valid orthogonal representation of $G$ such that
the planar embedding of $\mathcal R$ is represented by $\mathcal T$
rooted at $\tau$.  Consider a principal split component $H$ with
respect to the split pair $\{s, t\}$ and let $\mathcal S$ be the
orthogonal representation of $H$ induced by $\mathcal R$.  Note that
the poles $s$ and~$t$ are on the outer face $f$ of $\mathcal S$.  We
define $\max\{|\rot_{\mathcal S}(\pi_f(s, t))|, |\rot_{\mathcal
  S}(\pi_f(t, s))|\}$ to be the \emph{number of bends} of the split
component $H$.  Note that this is a straightforward extension of the
term \emph{bends} as it is used for edges.  With this terminology we
can assign a \emph{flexibility} $\flex(H)$ to a principal split
component $H$ and we define the orthogonal representation $\mathcal R$
of $G$ to be \emph{valid} if and only if $H$ has at most $\flex(H)$
bends.  We say that the graph $G$ has \emph{positive flexibility} if
the flexibility of every principal split component is at least~1,
which is straightforward extension of the original notion.

We define a valid orthogonal representation of $G$ to be \emph{nice}
if it is tight and if there is a root $\tau$ of the SPQR-tree such
that every principal split component has at most three bends and the
edge corresponding to~$\tau$ in the case that $\tau$ is a Q-node has
at most five bends.  The main result of this section will be the
following theorem, which directly extends
Theorem~\ref{thm:three-bends-per-edge}.

\newcommand{\thmThreeBendsSplitCompText}{Every biconnected 4-planar
  graph with positive flexibility having a valid orthogonal
  representation has an orthogonal representation with the same planar
  embedding and the same angles around vertices that is nice with
  respect to at least one node chosen as root of its SPQR-tree.}

\begin{theorem}
  \label{thm:three-bends-per-princ-spl-comp}
  \thmThreeBendsSplitCompText
\end{theorem}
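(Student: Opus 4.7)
The proof extends Theorem~\ref{thm:three-bends-per-edge} from single edges to principal split components, iteratively reducing the bends of non-root principal split components one at a time until every one of them has at most three. I start from a valid tight orthogonal representation $\mathcal{R}$; making $\mathcal{R}$ tight is always possible by the discussion in Section~\ref{sec:orth-repr}, and tightness is preserved by all reductions below since they only bend along cycles in the flex graph of $G$ and thus leave every vertex angle untouched.

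The key reduction step is as follows. Let $H$ be a non-root principal split component with poles $\{s,t\}$, induced representation $\mathcal{S}$, and outer face $f$ of $\mathcal{S}$, and suppose $H$ has more than three bends, say $\rot_{\mathcal{S}}(\pi_f(s,t)) \le -4$ without loss of generality. The sequence of edges forming $\pi_f(s,t)$ also lies on the boundary of some face $f^G$ of $G$, and since interior vertices of this path are not in the split pair $\{s,t\}$, their incidences and the edge rotations at $f^G$ in $G$ agree with those at $f$ in $\mathcal{S}$; hence the rotation of this path in $G$ at $f^G$ is also at most $-4$. If $f^G$ is an internal face of $G$, the face-rotation sum of $4$ together with $|\rot(s_{f^G})|, |\rot(t_{f^G})| \le 1$ forces the complementary $s$-$t$ boundary path $p$ of $f^G$, which lies in the complement $H'$, to satisfy $\rot_G(p) \ge 6$, and Lemma~\ref{lem:reduce-rot}(3) supplies a directed cycle $C$ in the flex graph of $G$ with $f^G \in C$ separating $t$ from $s$. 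Bending along $C$ decreases $\rot_G(p)$ by one and, by the face-rotation identity, increases $\rot_{\mathcal{S}}(\pi_f(s,t))$ by one, reducing $H$'s bends while preserving the planar embedding and the angles at all vertices.

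To prevent this step from pushing the bends of some other principal split component above three, I extend the flex-lowering trick from the proof of Theorem~\ref{thm:three-bends-per-edge}: after every reduction, lower the flexibility of every edge and of every principal split component to $\max\{\rho, 1\}$, where $\rho$ is its current number of bends. Encoding this in the flex graph amounts to treating each saturated principal split component as a directed super-edge between the two sides of its outer-face boundary, analogously to the way a saturated single edge is oriented from its right to its left face in the ordinary flex graph. Since the proof of Lemma~\ref{lem:reduce-rot} only relies on the existence of strictly directed paths in the primal, it carries over verbatim to this enriched flex graph. The total bend count over all principal split components strictly decreases at each iteration, so the process terminates.

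The only case in which the reduction step cannot be applied is when $f^G$ is the outer face of $G$, since then Lemma~\ref{lem:reduce-rot} guarantees only $\rot_G(p) \ge -2$ on the complementary path. This situation corresponds to $H$ wrapping around the rest of the graph, which is precisely the role the root plays in the definition of niceness. Once the iteration stabilises, I choose the root $\tau$ so as to absorb the leftover: if the sole leftover is a single edge with up to five bends (as permitted by Theorem~\ref{thm:three-bends-per-edge}), root at its Q-node so that it becomes the root edge; otherwise root at an inner node whose skeleton contains the outer face of $G$ and eliminate any additional near-outer-face components by the argument at the end of the proof of Theorem~\ref{thm:three-bends-per-edge}, which invokes Lemma~\ref{lem:reduce-rot}(1). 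The main obstacle I expect is the enriched flex graph construction: one must verify that contracting saturated principal split components into directed super-edges preserves the existence guarantees of Lemma~\ref{lem:reduce-rot} so that a reducing cycle is always available whenever the face-rotation bound is met, which is the essential new ingredient beyond Theorem~\ref{thm:three-bends-per-edge}.
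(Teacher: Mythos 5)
Your first half tracks the paper's strategy: reduce the bends of principal split components (with respect to one fixed root, initially a Q-node) by bending along flex-graph cycles, and lock in progress by lowering flexibilities. The paper makes your ``directed super-edge'' device precise differently: it keeps all edges of each principal split component $H$ in the graph and adds a pair of \emph{safety edges} parallel to $H$ (planar because the principal components with respect to a single root are nested), orienting a safety edge exactly when the supervised path is saturated; Lemma~\ref{lem:reduce-rot-extended} then extends Lemma~\ref{lem:reduce-rot} by observing that a directed safety edge contributes rotation at least~$1$, so Lemma~\ref{lem:strictly-directed-path} still applies. Your alternative of \emph{contracting} a saturated component into a super-edge is not merely unverified (as you concede) but delicate: a dual cycle crossing a contracted super-edge must still be realized by bending some actual edge on the corresponding boundary path of $H$ without violating that edge's own flexibility, which is exactly what the paper's construction guarantees and a contraction does not.

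The genuine gap is the endgame. What survives the reduction loop with four or five bends need not be a single edge: it can be a principal split component whose negative rotation lies on the outer face, and niceness requires \emph{every} principal split component to have at most three bends (only a Q-node root edge may keep up to five). The paper handles this in two cases. If the reference edge $\eref$ has more than three bends, every other leftover component can still be reduced via Lemma~\ref{lem:reduce-rot-extended}(1), because its outer path $\pi_f(s,t)$ contains $\eref$ with rotation at most $-4$, hence is not strictly directed; then the Q-node root already works. Otherwise the paper shows the irreducible leftover components pairwise contain one another, takes the unique innermost one $H$, and re-roots at its node $\tau$ -- not at ``an inner node whose skeleton contains the outer face,'' which is neither the right criterion nor well defined. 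Crucially, re-rooting changes which split components are principal: the nodes on the path between old and new root acquire \emph{new} principal split components, and one must prove these have (or can be reduced to) at most three bends. The paper does this by showing that a new principal component with more than three bends and negative rotation on the outer face would be disjoint from $H$, contradicting the assumed irreducibility of $H$, while one with its negative rotation in an inner face is reducible as before. Your proposal is silent on this re-rooting analysis, which is precisely the substance of Theorem~\ref{thm:three-bends-per-princ-spl-comp} beyond Theorem~\ref{thm:three-bends-per-edge}.
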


Before we prove Theorem~\ref{thm:three-bends-per-princ-spl-comp} we
need to make some additional considerations.  In particular we need to
extend the flex-graph such that it takes the flexibilities of
principal split components into account.  The extended version of the
flex graph can then be used to obtain a result similar to
Lemma~\ref{lem:reduce-rot}, which was the main tool to proof
Theorem~\ref{thm:three-bends-per-edge}.  Another difficulty is that it
depends on the chosen root which split components are principal split
components.  For the moment we avoid this problem by choosing an
arbitrary Q-node to be the root of the SPQR-tree $\mathcal T$.  Thus
we only have to care about the flexibilities of the principal split
components with respect to the chosen root.  One might hope that the
considerations we make for the flex-graph in the case of a fixed root
still work, if we consider the principal split components with respect
to all possible roots at the same time.  However, this fails as we
will see later, making it necessary to consider internal vertices as
the root.

Assume that the SPQR-tree $\mathcal T$ of $G$ is rooted at the Q-node
corresponding to an arbitrary chosen edge.  Let $H$ be a principal
split component with respect to the chosen root with the poles~$s$
and~$t$.  In the embedding of $G$ the outer face $f$ of $H$ splits
into two faces $f_1$ and $f_2$, where the path $\pi_f(s, t)$ is
assumed to lie in $f_1$ and $\pi_f(t, s)$ is assumed to lie in $f_2$,
that is $\pi_{f_1}(s, t) = \pi_f(s, t)$ and $\pi_{f_2}(t, s) =
\pi_f(t, s)$.  We augment $G$ by inserting the edge $\{s, t\}$ twice,
embedding one of them in $f_1$ and the other in $f_2$.  We denote the
edge $\{s, t\}$ inserted into the face $f_1$ by $e_H(s, t)$ and the
edge inserted into $f_2$ by $e_H(t, s)$.
Figure~\ref{fig:safety-edges} illustrates this process and shows how
the dual graph of $G$ changes.  We call the new edges $e_H(s, t)$ and
$e_H(t, s)$ \emph{safety edges} and define the \emph{extended flex
  graph}~$G^\times$ as before, ignoring that some edges have a special
meaning.  To simplify notation we often use the term flex graph,
although we refer to the extended flex graph.  Note that every cycle
in the flex graph that separates $s$ from $t$ and thus crosses $\pi(s,
t)$ and $\pi(t, s)$ needs to also cross the safety edges $e_H(s, t)$
and $e_H(t, s)$.  Thus we can use the safety edges to ensure that the
flex graph respects the flexibility of $H$ by orienting them if
necessary.  More precisely, we orient the safety edge $e_H(s, t)$ from
$t$ to $s$ if $\rot(\pi(s, t)) = -\flex(H)$ and similarly $e_H(t, s)$
from $s$ to~$t$ if $\rot(\pi(t, s)) = -\flex(H)$.  This ensures that
the rotations along $\pi(s, t)$ and $\pi(t, s)$ cannot be reduced
below $-\flex(H)$ by bending along a cycle in the flex graph.
Moreover, $\rot(\pi(s, t))$ cannot be increased above $\flex(H)$ as
otherwise $\rot(\pi(t, s))$ has to be below $-\flex(H)$ and vice
versa.  To sum up, we insert the safety edges next to the principal
split component $H$ and orient them if necessary to ensure that
bending along a cycle in the flex graph respects not only the
flexibilities of single edges but also the flexibility of the
principal split component $H$.

\begin{figure}
  \centering
  \includegraphics[page=1]{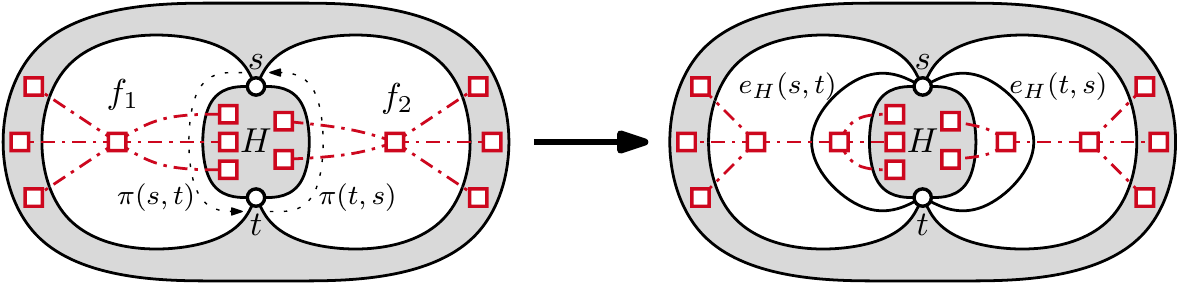}
  \caption{Augmentation of $G$ by the safety edges $e_H(s, t)$ and
    $e_H(t, s)$.}
  \label{fig:safety-edges}
\end{figure}

Since adding the safety edges for the graph $H$ is just a technique to
respect the flexibility of~$H$ by bending along a cycle in the flex
graph, we do not draw them.  Note that the augmented graph does not
have maximum degree~4 anymore but this is not a problem since we do
not draw the safety edges.  However, we formally assign an orthogonal
representation to the safety edges by essentially giving them the
shape of the paths they ``supervise''.  More precisely, the edges
$e_H(s, t)$ and~$e_H(t, s)$ have the same rotations as the paths
$\pi(s, t)$ and $\pi(t, s)$ on the outer face of $H$, respectively.
Moreover, the angles at the vertices $s$ and $t$ are also assumed to
be the same as for these two paths.

As we do not only want to respect the flexibility of a single split
component, we add the safety edges for each of the principal split
components at the same time.  Note that the augmented graph remains
planar as we only add the safety edges for the principal split
components with respect to a single root.  It follows directly that
the considerations above still work, which would fail if the augmented
graph was non-planar.  This is the reason why we cannot consider the
principal split components with respect to all roots at the same time.
The following lemma directly extends Lemma~\ref{lem:reduce-rot} to the
case where the extended flex graph is considered.

\begin{lemma}
  \label{lem:reduce-rot-extended}
  Let $G$ be a biconnected 4-planar graph with positive flexibility, a
  valid orthogonal representation~$\mathcal R$ and $s$ and $t$ on a
  common face $f$.  The extended flex graph $G_{\mathcal R}^\times$
  contains a directed cycle $C$ such that $f \in C$, $s\in \lef(C)$
  and $t\in \righ(C)$, if one of the following conditions holds.
  \begin{compactenum}[(1)]
  \item $\rot_{\mathcal R}(\pi_f(s, t)) \ge -2$, $f$ is the outer face
    and $\pi_f(s, t)$ is not strictly directed from $t$ to $s$
  \item $\rot_{\mathcal R}(\pi_f(s,t)) \ge 0$ and $f$ is the outer
    face
  \item $\rot_{\mathcal R}(\pi_f(s, t)) \ge 6$
  \end{compactenum}
\end{lemma}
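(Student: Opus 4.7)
The plan is to follow the proof of Lemma~\ref{lem:reduce-rot} step by step, with an additional ingredient for handling safety edges. Suppose for contradiction that the desired cycle $C$ does not exist in the extended flex graph $G_{\mathcal R}^\times$. The cut-cycle duality argument from the Claim inside the proof of Lemma~\ref{lem:reduce-rot} applies verbatim to the safety-edge-augmented primal graph (which remains planar, since the safety edges for principal split components with respect to the fixed root can be drawn alongside the outer boundaries of their components), yielding a strictly directed path $\pi^*$ from $t$ to $s$ in the augmented primal. The novelty is that $\pi^*$ may traverse both real edges of $G$ and safety edges.

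The main new step is to convert $\pi^*$ into a strictly directed path $\pi$ from $t$ to $s$ using only real edges of $G$, after which the proof of Lemma~\ref{lem:reduce-rot} concludes exactly as before: the subgraph $G' = \pi \cup \pi_f(s,t)$ has $G' + st$ biconnected, so Lemma~\ref{lem:strictly-directed-path} applied case by case yields $\rot_{\mathcal R}(\pi_f(s,t)) \le -3$, $\le -1$, or $\le 5$, contradicting the hypothesis in cases~(1), (2), and~(3) respectively. For the conversion I would argue by induction on the size of the split component whose safety edge is being expanded. Whenever $\pi^*$ traverses a safety edge $e_H(u,v)$ in its directed direction (say from $v$ to $u$), the orientation rule forces $\rot_{\mathcal R}(\pi_{f'}(u,v)) = -\flex(H)$ on the outer face $f'$ of $H$. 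I would then show that $H$ contains a strictly directed $v$-to-$u$ path of real edges of $G$: the base case where $H$ is a single edge is immediate, and for the inductive step I would apply the cut-cycle argument inside $H$'s own (inductively constructed) extended flex graph. If no such internal strictly directed path existed, a cycle in $H_{\mathcal R}^\times$ would separate $v$ from $u$ through $f'$; bending along it produces an orthogonal representation of $H$ with $\rot(\pi_{f'}(u,v)) = -\flex(H) - 1$, which, when reinserted into $\mathcal R$, yields a representation of $G$ in which $H$ exceeds its flexibility~$\flex(H)$, contradicting that $\mathcal R$ is valid and that $\rot_{\mathcal R}(\pi_{f'}(u,v))$ was already at its extreme value. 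Nested safety edges appearing inside this internal path are then unfolded by the induction hypothesis.

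The principal obstacle is making the safety-edge expansion rigorous. One must carefully argue that the purely combinatorial cut-cycle machinery inside $H$, combined with the validity constraint $|\rot_{\mathcal R}(\pi_{f'}(u,v))| \le \flex(H)$ coming from $\mathcal R$'s overall validity, not only produces an internal strictly directed path in $H$ in the abstract, but that its constituent real edges saturate their flexibilities in $\mathcal R$ itself (so that the concatenated path $\pi$ is genuinely strictly directed in $G$). One also has to verify that bending inside $H$ respects not just $H$'s edge flexibilities but all the flexibilities encoded by $H$'s own safety edges, which is exactly what passing to $H$'s extended flex graph buys us. Once this bookkeeping is in place, the final application of Lemma~\ref{lem:strictly-directed-path} is a straightforward rerun of the closing step of Lemma~\ref{lem:reduce-rot}.
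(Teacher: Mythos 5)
Your overall skeleton (assume no cycle, use cut--cycle duality in the augmented primal graph, extract a strictly directed $t$-to-$s$ path, then contradict the rotation bounds of Lemma~\ref{lem:strictly-directed-path}) is the same as the paper's. But the step you add --- replacing each safety edge on the path by a strictly directed path of \emph{real} edges inside its split component $H$ --- is a genuine gap, for two reasons. First, such a path need not exist: take $H$ to be a two-edge path between the poles $u,v$, each edge with flexibility~$2$ and exactly one bend, and $\flex(H)=2$; then $\rot_{\mathcal R}(\pi_{f'}(u,v))=-2=-\flex(H)$, so the safety edge $e_H(u,v)$ is oriented, yet no edge of $H$ is saturated, so $H$ contains no strictly directed $v$-to-$u$ path at all. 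Second, the argument you use to force its existence is not a contradiction: if a directed cycle in $H$'s own extended flex graph separates $v$ from $u$, bending along it merely produces a \emph{different} orthogonal representation in which $H$ exceeds its flexibility; this does not contradict the validity of the given $\mathcal R$ --- indeed, the whole purpose of orienting $e_H(u,v)$ in $G$'s extended flex graph is precisely to forbid such a bend, and inside $H$'s own flex graph nothing encodes $\flex(H)$. So the conversion of $\pi^*$ into a path of real edges cannot be carried out in general.

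The paper avoids the conversion entirely: safety edges are formally given an orthogonal shape, namely the rotation of the path they supervise, and a safety edge $e_H(u,v)$ is directed from $v$ to $u$ only when $\rot(\pi(u,v))=-\flex(H)\le -1$, so traversing it from $v$ to $u$ contributes rotation at least~$1$ --- exactly the property of saturated real edges that the proof of Lemma~\ref{lem:strictly-directed-path} uses. Hence that lemma applies verbatim to the augmented graph with the safety edges left in place on the strictly directed path, and the contradiction is reached as in Lemma~\ref{lem:reduce-rot}. If you keep the safety edges on the path and verify this single rotation property, your proof closes; the inductive expansion should be dropped.
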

\begin{proof}
  As in the proof of Lemma~\ref{lem:reduce-rot} we assume for
  contradiction that the cycle $C$ does not exists, yielding a
  strictly directed path from $t$ to $s$ in $G$.  This directly yields
  the claim, if we can apply Lemma~\ref{lem:strictly-directed-path} as
  before.  The only difference to the situation before is that the
  directed path from $t$ to~$s$ may contain some of the safety edges.
  However, by definition a safety edge $e_H(u, v)$ is directed from
  $v$ to $u$ if and only if $\rot(\pi(u, v)) = -\flex(H)$.  As
  $\flex(H)$ is positive $\rot(\pi(u, v))$ has to be negative and thus
  the rotation along $e_H(u, v)$ when traversing it from $v$ to $u$ is
  at least~1.  Thus, it does not make a difference whether the
  directed path from $t$ to $s$ consists of normal edges or may
  contain safety edges.  Hence, Lemma~\ref{lem:strictly-directed-path}
  extends to the augmented graph containing the safety edges, which
  concludes the proof.
\end{proof}

Now we are ready to prove
Theorem~\ref{thm:three-bends-per-princ-spl-comp}.  To improve
readability we state it again.

\begin{rethm}{thm:three-bends-per-princ-spl-comp}
  \thmThreeBendsSplitCompText
\end{rethm}
\begin{proof}
  Let $\mathcal R$ be a valid orthogonal representation of $G$.  We
  assume without loss of generality that~$\mathcal R$ is tight.  Since
  the operations we apply to $\mathcal R$ in the following do not
  affect the angles around vertices, the resulting orthogonal
  representation is also tight.  Thus it remains to enforce the more
  interesting condition for orthogonal representations to be nice,
  that is reduce the number of bends of principal split components
  down to three.  As mentioned before, the SPQR-tree $\mathcal T$ of
  $G$ is initially rooted at an arbitrary Q-node.  Let $\eref$ be the
  corresponding edge.  As in the proof of
  Theorem~\ref{thm:three-bends-per-edge} we start with an arbitrary
  principal split component $H$ with more than three bends.  Then one
  of the two paths in the outer face of $H$ has rotation less
  than~$-3$ and we have the same situation as for a single edge, that
  is we can apply Lemma~\ref{lem:reduce-rot-extended} to reduce the
  rotation of the opposite site and thus reduce the number of bends of
  $H$ by one.  Afterwards, we can set the flexibility of $H$ down to
  the new number of bends ensuring that it is not increased later on.
  However, this only works if the negative rotation of the split
  component $H$ lies in an inner face of $G$.  On the outer face we
  can only increase to a rotation of $-5$ yielding an orthogonal
  representation such that every principal split component has at most
  three bends, or maybe four or five bends, if it has its negative
  rotation in the outer face.  Note that this is essentially the same
  situation we also had in the proof of
  Theorem~\ref{thm:three-bends-per-edge}.  In the following we show
  similarly that the number of bends can be reduced further, until
  either a unique innermost principal split component (where innermost
  means minimal with respect to inclusion) or the reference edge
  $\eref$ may have more than three bends.

  First assume that $\eref$ has more than three, that is four or five,
  bends and that there is a principal split component $H$ with more
  than three bends having its negative rotation on the outer face.
  Let~$\{s, t\}$ be the corresponding split pair and let without loss
  of generality $\pi_f(t, s)$ be the path along $H$ with rotation less
  than~$-3$ where $f$ is the outer face.  Then the path $\pi_f(s, t)$
  contains the edge~$\eref = \{u, v\}$, otherwise $H$ would not be a
  principal split component.  Moreover, $\rot(\pi_f(t, s)) \le -4$
  implies that $\rot(\pi_f(s, t)) \ge -2$ holds.  As in the proof of
  Theorem~\ref{thm:three-bends-per-edge} (compare with
  Figure~\ref{fig:three-bends}(c)) the path $\pi_f(s, t)$ splits into
  the paths $\pi_f(s, u)$, $\pi_f(u, v)$ and $\pi_f(v, t)$.  Since
  $\pi_f(u, v)$ consists of the single edge $\eref$ with more than
  three bends $\rot(\pi_f(u, v)) \le -4$ holds, implying that the
  rotation along $\pi_f(s, u)$ or $\pi_f(v, t)$ is greater or equal
  to~0.  This shows that $\pi_f(s, t)$ cannot be strictly directed
  from $t$ to $s$ and thus we can apply
  Lemma~\ref{lem:reduce-rot-extended}(1) to reduce the number of bends
  $H$ has.  Finally, there is no principal split component with more
  than three bends left and the reference edge $\eref$ has at most
  five bends, which concludes this case.

  In the second case, $\eref$ has at most three bends.  We show that
  if there is more than one principal split component with more than
  three bends, then they hierarchically contain each other.  Assume
  that the number of bends of no principal split component that has
  more than three bends can be reduced further.  Assume further there
  are two principal split components $H_1$ and $H_2$ with respect to
  the split pairs $\{s_1, t_1\}$ and $\{s_2, t_2\}$ that do not
  contain each other, that is without loss of generality the vertices
  $t_1, s_1, t_2$ and $s_2$ occur in this order around the outer face
  $f$ when traversing it in counter-clockwise direction and
  $\pi_f(t_1, s_1)$ and $\pi_f(t_2, s_2)$ belong to $H_1$ and $H_2$
  respectively.  Analogous to the case where $\eref$ has more than
  three bends we can show that Lemma~\ref{lem:reduce-rot-extended}(1)
  can be applied to reduce the number of bends of $H_1$, which is a
  contradiction.  Thus, either $H_1$ is contained in $H_2$ or the
  other way round.  This shows that there is a unique principal split
  component~$H$ that is minimal with respect to inclusion having more
  than three bends.  Due to the inclusion property, all nodes in the
  SPQR-tree corresponding to the principal split components with more
  than three bends lie on the path between the current root and the
  node corresponding to $H$.  We denote the node corresponding to $H$
  by $\tau$ and choose $\tau$ to be the new root of the
  SPQR-tree~$\mathcal T$.  Since the principal split components depend
  on the root chosen for $\mathcal T$ some split components may no
  longer be principal and some may become principal due to rerooting.
  Our claim is that all principal split components with more than
  three bends are no longer principal after rerooting and furthermore
  that all split components becoming principal can be enforced to have
  at most three bends.

  \begin{figure}
    \centering
    \includegraphics{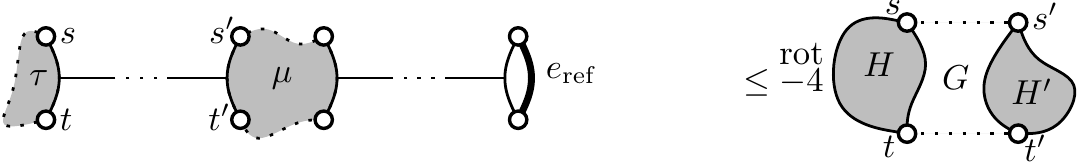}
    \caption{The path between the new and the old root in the
      SPQR-tree containing $\mu$ (left).  The whole graph $G$
      containing the principal split component $H'$ corresponding to
      $\mu$ with respect to the new root and the principal split
      component $H$ of the new root with respect to the old root
      (right).}
    \label{fig:rerooting}
  \end{figure}

  First note that the principal split component corresponding to a
  node $\mu$ in the SPQR-tree changes if and only if $\mu$ lies on the
  path between the old and the new root, that is between $\tau$ and
  the Q-node corresponding to $\eref$.  Since all principal split
  components (with respect to the old root) that have more than three
  bends also lie on this path, all these split components are no
  longer principal (with respect to the new root).  It remains to deal
  with the new principal split components corresponding to the nodes
  on this path.  Note that the new root $\tau$ itself has no principal
  split component associated with it.  Let $\mu \not= \tau$ be a node
  on the path between the new and the old root and let $H'$ be the new
  principal split component corresponding to $\mu$ with the poles $s'$
  and $t'$.  Recall that $H$ is the former principal split component
  corresponding to the new root $\tau$ with the poles $s$ and $t$.
  Note that $H$ of course is still a split component, although it is
  not principal anymore.  Figure~\ref{fig:rerooting} illustrates this
  situation.  Now assume that $H'$ has more than three bends.  Then
  there are two possibilities, either it has its negative rotation on
  the outer face or in some inner face. If only the latter case arises
  we can easily reduce the number of bends down to three as we did
  before.  In the remaining part of the proof we show that the former
  case cannot arise due to the assumption that the number of bends of
  $H$ cannot be reduced anymore.  Assume~$H'$ has its negative
  rotation in the outer face $f$, that is without loss of generality
  the path $\pi_f(t, s)$ belongs to $H'$ and has rotation at
  most~$-4$.  Thus we have again the situation that the two split
  components $H'$ and $H$ both have a rotation of at most~$-4$ in the
  outer face.  Moreover, these two split components do not contain or
  overlap each other since $s$ and $t$ are not contained in $H'$ as
  $\tau$ is the new root and $H$ does not contain $s'$ or $t'$ since
  $\mu$ is an ancestor of $\tau$ with respect to the old root.  Thus
  we could have reduced the number of bends of $H$ before we changed
  the root, which is a contradiction to the assumption we made that
  the number of bends of principal split components with more than
  three bends cannot be reduced anymore.  Hence, all new principal
  split components either have at most three bends or they have their
  negative rotation in some inner face.  Finally, we obtain a valid
  orthogonal representation with at most three bends per principal
  split component with respect to~$\tau$.
\end{proof}

\section{Optimal Drawings with Fixed Planar Embedding}
\label{sec:optim-draw-with}

All results from the previous sections deal with the case where we are
only interested in the decision problem of whether a given graph has a
valid drawing or not.  More precisely, we always assumed to have a
valid orthogonal representation of an instance of {\sc FlexDraw} and
showed that this implies that there exists another valid orthogonal
representation with certain properties.  In this section, we consider
convex instances of the optimization problem {\sc OptimalFlexDraw}.
The following generic theorem shows that the results for {\sc
  FlexDraw} that we presented so far can be extended to {\sc
  OptimalFlexDraw}.

\begin{theorem}
  \label{thm:valid-and-optimal}
  If the existence of a valid orthogonal representation of an instance
  of {\sc FlexDraw} with positive flexibility implies the existence of
  a valid orthogonal representation with property $P$, then every
  convex instance of {\sc OptimalFlexDraw} has an optimal drawing with
  property $P$.
\end{theorem}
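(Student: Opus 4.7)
The plan is to reduce the optimization setting to the decision setting by building a \textsc{FlexDraw} instance tailored to an arbitrary optimal drawing, applying the hypothesis, and then arguing that the resulting drawing is again optimal. Concretely, let $G$ be a convex instance of \textsc{OptimalFlexDraw} and fix any optimal orthogonal representation $\mathcal R^*$. For each edge $e$ denote by $\rho_e^*$ its number of bends in $\mathcal R^*$. I would then define a \textsc{FlexDraw} instance $G'$ on the same underlying graph by setting the flexibility of each edge~$e$ to
\[ \flex(e) = \max\{\rho_e^*, 1\}. \]
By construction $\flex(e) \ge 1$, so $G'$ has positive flexibility, and $\mathcal R^*$ is a valid orthogonal representation of $G'$. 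Applying the assumed implication to $G'$ yields a valid orthogonal representation $\mathcal R'$ of $G'$ with property $P$.

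It remains to show that $\mathcal R'$ is also optimal for the original \textsc{OptimalFlexDraw} instance $G$. Let $\rho_e'$ denote the number of bends of $e$ in $\mathcal R'$. Validity of $\mathcal R'$ for $G'$ gives $\rho_e' \le \max\{\rho_e^*, 1\}$. I would now split into two cases for each edge. If $\rho_e^* \ge 1$, then $\rho_e' \le \rho_e^*$, and since $\cost_e$ is monotone (part of the convexity assumption), $\cost_e(\rho_e') \le \cost_e(\rho_e^*)$. If $\rho_e^* = 0$, then $\rho_e' \le 1$; but positive flexibility of the convex instance $G$ means $\cost_e(1) = \cost_e(0) = b_e$, so again by monotonicity $\cost_e(\rho_e') \le \cost_e(1) = \cost_e(\rho_e^*)$. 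Summing over all edges shows $\cost(\mathcal R') \le \cost(\mathcal R^*)$. Since $\mathcal R^*$ was optimal, equality holds, and $\mathcal R'$ is an optimal drawing with property $P$.

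The only subtle point, and hence the main obstacle, is the case $\rho_e^* = 0$: here the flexibility of $e$ in $G'$ is forced up to $1$, so $\mathcal R'$ might introduce a bend on~$e$. This is exactly where the two assumptions of a convex instance come together: positive flexibility guarantees that such an extra bend is free, while monotonicity rules out that any other edge gets cheaper only at the price of a strict increase somewhere. Without the ``first bend is free'' condition this reduction would not preserve optimality, matching the fact that \textsc{OptimalFlexDraw} is $\mathcal{NP}$-hard in the absence of that condition.
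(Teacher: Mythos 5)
Your proposal is correct and follows essentially the same route as the paper: reinterpret the optimal representation as a \textsc{FlexDraw} instance with flexibilities $\max\{\rho_e^*,1\}$, apply the hypothesis, and use monotonicity together with $\cost_e(0)=\cost_e(1)=b_e$ (from positive flexibility) to show the resulting representation costs no more. The case distinction and the observation about the zero-bend edges match the paper's argument exactly.
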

\begin{proof}
  Let $G$ be a convex instance of {\sc OptimalFlexDraw}.  Let further
  $\mathcal R$ be an optimal orthogonal representation.  We can
  reinterpret $G$ as an instance of {\sc FlexDraw} with positive
  flexibility by setting the flexibility of an edge with $\rho$ bends
  in $\mathcal R$ to $\max\{\rho, 1\}$.  Then $\mathcal R$ is obviously
  a valid orthogonal representation of $G$ with respect to these
  flexibilities.  Thus there exists another valid orthogonal
  representation $\mathcal R'$ having property $P$.  It remains to
  show that $\cost(\mathcal R') \le \cost(\mathcal R)$ holds when going back
  to the optimization problem {\sc OptimalFlexDraw}.  However, this is
  clear for the following reason.  Every edge $e$ has as most as many
  bends in $\mathcal R'$ as in $\mathcal R$ except for the case
  where~$e$ has one bend in $\mathcal R'$ and zero bends in $\mathcal
  R$.  In the former case the monotony of $\cost_e(\cdot)$ implies
  that the cost did not increase.  In the latter case $e$ causes the
  same amount of cost in $\mathcal R$ as in $\mathcal R'$ since
  $\cost_e(0) = \cost_e(1) = b_e$ holds for convex instances of {\sc
    OptimalFlexDraw}.  Note that this proof still works, if the cost
  functions are only monotone but not convex.
\end{proof}

It follows that every convex 4-planar graph has an optimal drawing
that is nice since Theorem~\ref{thm:valid-and-optimal} shows that
Theorem~\ref{thm:three-bends-per-princ-spl-comp} can be applied.
Thus, it is sufficient to consider only nice drawings when searching
for an optimal solution, as there exists a nice optimal solution.
This is a fact that we crucially exploit in the next section since
although the cost function of a principal split component may be
non-convex, we can show that it is convex in the interval that is of
interest when only considering nice drawings.

\section{Optimal Drawings with Variable Planar Embedding}
\label{sec:opt-draw-var-emb}

All results we presented so far were based on a fixed planar embedding
of the input graph $G$.  In this section we present an algorithm that
computes an optimal drawing of $G$ in polynomial time, optimizing over
all planar embeddings of $G$.  Our algorithm crucially relies on the
existence of a nice drawing among all optimal drawings of $G$.  For
biconnected graphs (Section~\ref{sec:biconnected-graphs}) we present a
dynamic program that computes the cost function of all principal split
components bottom-up in the SPQR-tree with respect to a chosen root.
To compute the optimal drawing among all drawings that are nice with
respect to the chosen root, it remains to consider the embeddings of
the root itself.  If we choose every node to be the root once, this
directly yields an optimal drawing of $G$ taking all planar embeddings
into account.  In Section~\ref{sec:connected-graphs} we extend our
results to connected graphs that are not necessarily biconnected.  To
this end we first modify the algorithm for biconnected graphs such
that it can compute an optimal drawing with the additional requirement
that a specific vertex lies on the outer face.  Then we can use the
BC-tree to solve {\sc OptimalFlexDraw} for connected graphs.  We use
the computation of a minimum-cost flow in a network of size $n$ as a
subroutine and denote the consumed running time by~$T_{\flow}(n)$.  In
Section~\ref{sec:computing-flow} we consider which running time we
actually need.

\subsection{Biconnected Graphs}
\label{sec:biconnected-graphs}

In this section we always assume $G$ to be a biconnected 4-planar
graph forming a convex instance of {\sc OptimalFlexDraw}.  Let
$\mathcal T$ be the SPQR-tree of $G$.  As defined before, an
orthogonal representation is optimal if it has the smallest possible
cost.  We call an orthogonal representation \emph{$\tau$-optimal} if
it has the smallest possible cost among all orthogonal representation
that are nice with respect to the root $\tau$.  We say that it
is~\emph{$(\tau, \mathcal E)$-optimal} if it causes the smallest
possible amount of cost among all orthogonal representations that are
nice with respect to $\tau$ and induce the planar embedding $\mathcal
E$ on $\skel(\tau)$.  In this section we concentrate on finding a
$(\tau, \mathcal E)$-optimal orthogonal representation with respect to
a root $\tau$ and a given planar embedding $\mathcal E$ of
$\skel(\tau)$.  Then a $\tau$-optimal representation can be computed
by choosing every possible embedding of $\skel(\tau)$.  An optimal
solution can then be computed by choosing every node in $\mathcal T$
to be the root once.

\begin{figure}
  \centering
  \includegraphics[page=1]{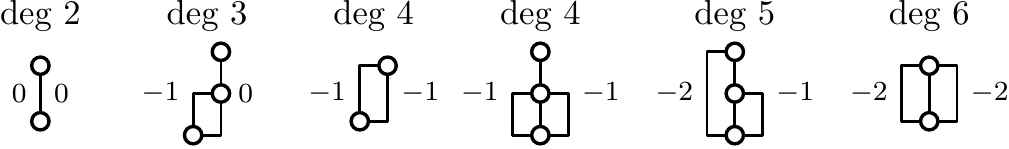}
  \caption{Split components with as few bends as possible.}
  \label{fig:degree-and-s-node-root}
\end{figure}

In Section~\ref{sec:flex-split-comp} we extended the terms ``bends''
and ``flexibility'', which were originally defined for single edges,
to arbitrary principal split components with respect to the chosen
root.  We start out by making precise what we mean with the cost
function $\cost_H(\cdot)$ of a principal split component~$H$ with
poles $s$ and $t$.  Recall that the number of bends of $H$ with
respect to an orthogonal representation $\mathcal S$ with $s$ and $t$
on the outer face $f$ is defined to be $\max\{|\rot_{\mathcal
  S}(\pi_f(s, t))|, |\rot_{\mathcal S}(\pi_f(t, s))|\}$.  Assume
$\mathcal S$ is the nice orthogonal representation of $H$ that has the
smallest possible cost among all nice orthogonal representations with
$\rho$ bends.  Then we essentially define $\cost_H(\rho)$ to be the
cost of $\mathcal S$.  However, with this definition the cost function
of $H$ is not defined for all $\rho \in \mathbb N_0$ since $H$ does
not have an orthogonal representation with zero bends at all, if
$\deg(s) > 1$ or $\deg(t) > 1$, as at least one of the paths $\pi_f(s,
t)$ and $\pi_f(t, s)$ has negative rotation in this case.  More
precisely, if $\deg(s) + \deg(t) > 2$, then $H$ has at least one bend,
and if $\deg(s) + \deg(t) > 4$, then $H$ has at least two bends.
Figure~\ref{fig:degree-and-s-node-root} shows for each combination of
degrees a small example with the smallest possible number of bends.
In these two cases we formally set $\cost_H(0) = \cost_H(1)$ and
$\cost_H(0), \cost_H(1) = \cost_H(2)$, respectively.  Thus, we only
need to compute the cost functions for at least $\lceil(\deg(s) +
\deg(t) - 2)/2\rceil$ bends.  We denote this lower bound by $\ell_H =
\lceil(\deg(s) + \deg(t) - 2)/2\rceil$.  Hence, it remains to compute
the cost function $\cost_H(\rho)$ for $\rho \in [\ell_H, 3]$.  For
more than three bends we formally set the cost to $\infty$.  Note that
the definition of the cost function only considers nice orthogonal
representations (including that they are tight).  As a result of this
restriction the cost for an orthogonal representation with $\rho$
bends might be less than $\cost_H(\rho)$.  However, due to
Theorem~\ref{thm:three-bends-per-princ-spl-comp} in combination with
Theorem~\ref{thm:valid-and-optimal} we know that optimizing over nice
orthogonal representations is sufficient to find an optimal solution.

As for single edges, we define the \emph{base cost} $b_H$ of the
principal split component $H$ to be $\cost_H(0)$.  We will see that
the cost function $\cost_H(\cdot)$ is monotone and even convex in the
interval $[0, 3]$ (except for a special case) and thus the base cost
is the smallest possible amount of cost that has to be payed for every
orthogonal drawing of $H$.  The only exception is the case where
$\deg(s) = \deg(t) = 3$.  In this case $H$ has at least two bends and
thus the cost function $\cost_H(\cdot)$ needs to be considered only on
the interval $[2, 3]$.  However, it may happen that $\cost_H(2) >
\cost_H(3)$ holds in this case.  Then we set the base cost $b_H$ to
$\cost_H(3)$ such that the base cost $b_H$ is really the smallest
possible amount of cost that need to be payed for every orthogonal
representation of $H$.  We obtain the following theorem.

\begin{theorem}
  \label{thm:convex-cost-function}
  If the poles of a principal split component do not both have
  degree~3, then its cost function is convex on the interval $[0, 3]$.
\end{theorem}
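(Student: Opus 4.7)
The plan is to prove convexity by induction on the SPQR-tree of $G$, rooted so that $H$ is a pertinent graph, using the flow-network representation of $\cost_H$ developed later in this section. The base case, where $H$ is a single edge, is immediate from the hypothesis that edge cost functions are convex.

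For the inductive step, I would parameterize nice tight representations of $H$ by the single signed quantity $\alpha := \rot(\pi_f(s,t))$ in the outer face $f$. Since in a tight representation the outer-face angle at a pole of degree $d$ equals $d-3$, the face-sum rule around $f$ forces $\rot(\pi_f(t,s)) = 2m-\alpha$ with $m = (2-\deg(s)-\deg(t))/2$; hence the number of bends of $H$ is the single-variable function $\max(|\alpha|,|2m-\alpha|)$. Let $c_H(\alpha)$ denote the minimum cost of a nice tight representation of $H$ with that $\alpha$. I would argue that $c_H$ is convex on $\mathbb Z$: the value $c_H(\alpha)$ is the parameterized optimum of a minimum-cost flow network modelled on $\skel(\mu)$, whose arc-cost functions come from the (convex, by induction) cost functions of the children and in which $\alpha$ enters as a demand offset at the two poles. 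Theorem~\ref{thm:flow-has-convex-cost} then gives convexity on one side of the optimal parameter, and the same theorem applied after swapping source and sink gives convexity on the other side.

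Since a representation has exactly $\rho$ bends iff $\alpha \in \{-\rho,\,2m+\rho\}$,
\[\cost_H(\rho) \;=\; \min\bigl\{c_H(-\rho),\; c_H(2m+\rho)\bigr\}\qquad\text{for }\rho\ge\ell_H,\]
extended by $\cost_H(\rho) = \cost_H(\ell_H)$ on $[0,\ell_H]$ by the stated convention. On $[\ell_H,3]$ the convexity of this lower envelope follows from the convexity of $c_H$; once the convention is added, the only way convexity on $[0,3]$ can fail is through the inequality $\cost_H(\ell_H+1) < \cost_H(\ell_H)$, i.e.\ by the minimizer of $c_H$ lying outside the admissible $\alpha$-interval at $\rho = \ell_H$.

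The main obstacle is to rule out this failure whenever $(\deg(s),\deg(t)) \ne (3,3)$. The trivial cases are $\ell_H\ge 3$ (where $\cost_H$ is constant on $[0,3]$) and $\deg(s)+\deg(t)$ odd (where the admissible set at $\rho = \ell_H$ already contains two integer values and convexity of $c_H$ directly delivers the bound). The delicate regime is $\deg(s)+\deg(t)$ even with $\ell_H \le 2$: the admissible set at $\rho=\ell_H$ collapses to the single point $\alpha = m$, so convexity reduces to the inequality $c_H(m) \le \min\{c_H(m-1),\,c_H(m+1)\}$. For each non-$(3,3)$ pair in this regime --- namely $(1,1)$, $(2,2)$, $(1,3)$, $(3,1)$, $(2,4)$, $(4,2)$ --- I expect the pinning to follow from a local rerouting argument in the extended flex graph (in the spirit of Lemma~\ref{lem:reduce-rot-extended}), exploiting either the positive-flexibility slack on the boundary edges (for the low-$\ell_H$ pairs) or the asymmetry of the outer-face pole angles (for $(2,4)$ and $(4,2)$ the two pole angles are $-1$ and $+1$) to rebalance a candidate representation with $\alpha = m\pm 1$ into one with $\alpha = m$ without increasing cost. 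In the $(3,3)$ case both outer-face pole angles are $0$, no such rebalancing is available, and the inequality can genuinely fail --- which is exactly why the theorem excludes precisely that pair.
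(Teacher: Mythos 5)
Your overall architecture --- induction over the SPQR-tree, a parameterized flow network on $\skel(\mu)$ whose arc costs are the children's cost functions, Theorem~\ref{thm:flow-has-convex-cost} for convexity of the fixed-embedding cost, and a lower envelope $\cost_H(\rho)=\min\{c_H(-\rho),c_H(2m+\rho)\}$ to account for the flip --- is essentially the paper's (its ``partial cost functions'' per skeleton embedding play the role of your $c_H$). But the two steps you treat as automatic are exactly where the work lies, and as stated they fail. Convexity of the envelope on $[\ell_H,3]$ does \emph{not} follow from convexity of $c_H$: the minimum of a convex function and its reflection need not be convex, even where it is non-decreasing. Concretely, for degree sum $3$ (so $2m=-1$, $\ell_H=1$) take the convex values $c_H(-3),\dots,c_H(2)=12,5,0,2,4,6$; the envelope at $\rho=1,2,3$ is $0,4,6$, which is monotone but not convex, so neither your claim that the odd case is ``directly delivered'' by convexity nor your claim that the only failure mode is $\cost_H(\ell_H+1)<\cost_H(\ell_H)$ is justified (a similar four-point example exists for $\deg(s)=\deg(t)=1$). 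The paper closes this gap with structural facts beyond convexity: the two branches take equal values at $\rho=\ell_H$ (by mirroring, plus Corollary~\ref{cor:rotations-form-interval} when the degree sum is odd), $\cost_H^{\mathcal E}(0)=\cost_H^{\mathcal E}(1)$ (again via Corollary~\ref{cor:rotations-form-interval} for degree-$1$ poles), and each fixed-embedding function is minimal at $\ell_H$; only then does a three-point argument on $[1,3]$ (or simply $\cost_H(2)\le\cost_H(3)$ when $\ell_H=2$) make the minimum of the two convex branches convex.

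The second gap is that the ``pinning''/minimality statement you defer to an expected rerouting argument is the substantive content of the proof, not a remark: the paper derives from tightness the identity $\rot(\pi_f(t,s))=\rho+2-\deg(s)-\deg(t)$, shows $\rot(\pi_f(t,s))\ge 0$ whenever $\rho>\ell_H$ (with an even/odd case distinction, not a per-degree-pair gadget analysis), and then applies Corollary~\ref{cor:rotations-form-interval} \emph{together with} Theorem~\ref{thm:valid-and-optimal} --- the latter is needed to transfer the {\sc FlexDraw} unwinding statement to a cost-non-increasing statement --- to reduce the bend number to $\ell_H$; this is what places the optimal flow parameter left of $\ell_H$ so that Theorem~\ref{thm:flow-has-convex-cost} applies on $[\ell_H,3]$. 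Your plan also omits a needed ingredient for the flow step itself: by the induction hypothesis some children (those whose poles both have degree~$3$) may have non-convex cost functions, so they cannot serve as convex arc costs; the paper first contracts such expansion graphs and reinserts them afterwards via Lemma~\ref{lem:contract-deg-33}. (Minor: the parameter enters as demands on the two copies of the outer face of $\skel(\mu)$, not at the poles, whose demands are fixed by tightness.)
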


Before showing Theorem~\ref{thm:convex-cost-function} we just assume
that it holds and moreover we assume that the cost function of every
principal split component is already computed.  We first show how
these cost functions can then be used to compute an optimal drawing.
To this end, we define a flow network on the skeleton of the root
$\tau$ of the SPQR-tree, similar to Tamassias flow
network~\cite{t-eggmb-87}.  The cost functions computed for the
children of $\tau$ will be used as cost functions on arcs in the flow
network.  As we can only solve flow networks with convex costs we
somehow have to deal with potentially non-convex cost functions for
the case that both endvertices of a virtual edge have degree~3 in its
expansion graph.  Our strategy is to simply ignore these subgraphs by
contracting them into single vertices.  Note that the resulting
vertices have degree~2 since the poles of graphs with non-convex cost
functions have degree~3.  The process of replacing the single vertex
in the resulting drawing by the contracted component is illustrated in
Figure~\ref{fig:contraction}.  The following lemma justifies this
strategy.

\begin{figure}
  \centering
  \includegraphics[page=1]{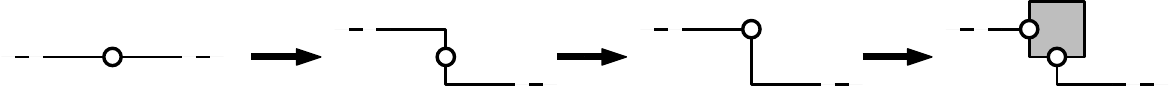}
  \caption{A single vertex can be replaced by a split component with
    three bends.}
  \label{fig:contraction}
\end{figure}

\begin{lemma}
  \label{lem:contract-deg-33}
  Let $G$ be a biconnected convex instance of {\sc OptimalFlexDraw}
  with $\tau$-optimal orthogonal representation $\mathcal R$ and let
  $H$ be a principal split component with non-convex cost function and
  base cost $b_H$.  Let further $G'$ be the graph obtained from $G$ by
  contracting $H$ into a single vertex and let $\mathcal R'$ be a
  $\tau$-optimal orthogonal representation of~$G'$.  Then
  $\cost(\mathcal R) = \cost(\mathcal R') + b_H$ holds.
\end{lemma}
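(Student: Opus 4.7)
The plan is to establish the identity by proving two matching inequalities, exploiting the fact that the non-convexity of $\cost_H(\cdot)$ forces both poles $s,t$ of $H$ to have degree~$3$ in $G$ (by Theorem~\ref{thm:convex-cost-function}), so that $\ell_H=2$ and the only nice representations of $H$ have either $2$ or $3$ bends, with $b_H=\min\{\cost_H(2),\cost_H(3)\}$. The contracted vertex $v_H$ in $G'$ has degree~$2$ (see Figure~\ref{fig:contraction}), hence admits two face-rotations in $\{-1,0,1\}$ summing to~$0$, which will be the source of flexibility in the arguments below.

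For the direction $\cost(\mathcal R)\ge \cost(\mathcal R')+b_H$ I would take the $\tau$-optimal $\mathcal R$ and contract $H$ down to $v_H$. Since $\mathcal R$ is nice with respect to $\tau$, the component $H$ has $\rho\in\{2,3\}$ bends in $\mathcal R$ and therefore contributes at least $b_H$ to the total cost. Collapsing $H$ produces an orthogonal representation $\tilde{\mathcal R}'$ of $G'$ of cost $\cost(\mathcal R)-\cost_H(\rho)\le \cost(\mathcal R)-b_H$. I would then verify that $\tilde{\mathcal R}'$ is itself nice with respect to $\tau$: the principal split components of $G'$ correspond one-to-one with those of $G$ outside the subtree rooted at the node of $H$, and for each ancestor~$K$ the rotation along $K$'s outer boundary is preserved under the contraction, because the portion of that boundary running along the outer face of $H$ collapses to a single vertex $v_H$ whose face-rotations are set to exactly the rotations of the two boundary paths of $H$. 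The $\tau$-optimality of $\mathcal R'$ then yields $\cost(\mathcal R')\le \cost(\tilde{\mathcal R}')$, from which the inequality follows.

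For the converse $\cost(\mathcal R)\le \cost(\mathcal R')+b_H$ I would take the $\tau$-optimal $\mathcal R'$ and replace $v_H$ by a nice orthogonal representation $\mathcal S_H$ of $H$ of cost $b_H$. The candidate $\mathcal S_H$ has $2$ or $3$ bends, whichever realises $b_H$, and by choosing the distribution of bends between $\pi(s,t)$ and $\pi(t,s)$ together with the split of the outer $270^\circ$ angles at $s$ and $t$ between the two external faces of $H$, one can match any of the three admissible rotation patterns of $v_H$ in $\mathcal R'$. The resulting representation $\tilde{\mathcal R}$ of $G$ is tight (since both $\mathcal R'$ and $\mathcal S_H$ are), has cost $\cost(\mathcal R')+b_H$, and is nice with respect to $\tau$: the component $H$ itself has at most three bends by construction, its descendants inherit niceness from $\mathcal S_H$, and the ancestors retain their outer-boundary rotations by the same preservation argument as above. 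The $\tau$-optimality of $\mathcal R$ then yields the claimed inequality.

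The main obstacle will be the matching step in the uncontraction direction: verifying that for every degree-$2$ rotation pattern $(r_1,-r_1)\in\{(-1,1),(0,0),(1,-1)\}$ that $v_H$ exhibits in $\mathcal R'$, there exists a nice $\mathcal S_H$ of cost exactly $b_H$ whose outer shape reproduces that pattern when plugged in place of $v_H$. This requires a brief case analysis showing that the $2$-bend and the $3$-bend nice representations of $H$ together provide the necessary flexibility --- in the split of bends between the two outer paths and in the choice of outer angles at the two degree-$3$ poles --- to realise all three patterns while remaining at cost $b_H$. Once this matching is in place, contraction and uncontraction form inverse operations between the nice-with-respect-to-$\tau$ representations of $G$ in which $H$ attains cost $b_H$ and the nice-with-respect-to-$\tau$ representations of $G'$, and the equality claimed in the lemma follows from the two inequalities above.
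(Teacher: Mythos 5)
Your two-inequality structure and the contraction direction match the paper, but the crucial step of the uncontraction direction---the ``matching step'' you yourself flag as the main obstacle---does not go through as you describe, and the missing idea is exactly the heart of the paper's proof. Since $\cost_H(\cdot)$ is non-convex on $[2,3]$ (with $\cost_H(0)=\cost_H(1)=\cost_H(2)$), we necessarily have $\cost_H(2)>\cost_H(3)$, so $b_H=\cost_H(3)$ and every nice representation of $H$ of cost $b_H$ has exactly three bends. For such a representation the two outer boundary paths have rotations $-3$ and $-1$ (they sum to $-4$ and the tight poles contribute $0$ each), and because both poles have degree~$3$ in $H$ and hence degree~$4$ in $G$, every angle at $s$ and $t$ is forced to be $90^\circ$---there is no ``split of outer $270^\circ$ angles'' to play with. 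Consequently, plugging $H$ in place of $v_H$ contributes $1+\rot(\pi)+1\in\{-1,+1\}$ to each incident face, i.e.\ a cost-$b_H$ insertion can only imitate a degree-2 vertex with a $90^\circ$/$270^\circ$ angle pattern, never the straight pattern $(0,0)$. The $2$-bend representation does imitate $(0,0)$, but it costs $\cost_H(2)>b_H$, so your proposed case analysis ``realise all three patterns while remaining at cost $b_H$'' cannot succeed: if $v_H$ is straight in $\mathcal R'$, no nice representation of $H$ of cost $b_H$ fits.

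The correct resolution (the paper's) is to modify $\mathcal R'$ rather than $\mathcal S_H$: if $v_H$ has $180^\circ$ angles on both sides and neither incident edge carries a bend, bend along the cycle around $v_H$ in the flex graph---this is free because the instance is convex, so the first bend on each edge costs nothing---and then slide $v_H$ along an incident edge onto one of its bends. This removes that bend (so the cost of $\mathcal R'$ does not increase, by monotonicity) and gives $v_H$ a $90^\circ$ angle, after which the three-bend representation $\mathcal S$ (or its mirror image) of cost $b_H$ can be substituted, yielding a representation of $G$ of cost $\cost(\mathcal R')+b_H$. Without this step your second inequality $\cost(\mathcal R)\le\cost(\mathcal R')+b_H$ is not established in the straight case, so the proposal as written has a genuine gap.
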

\begin{proof}
  Assume we have a $\tau$-optimal orthogonal representation $\mathcal
  R$ of $G$ inducing the orthogonal representation $\mathcal S$ on
  $H$.  As $H$ has either two or three bends we can simply contract it
  yielding an orthogonal representation $\mathcal R'$ of $G$ with
  $\cost(\mathcal R') = \cost(\mathcal R) - \cost(\mathcal S) \le
  \cost(\mathcal R) - b_H$.  The opposite direction is more
  complicated.  Assume we have an orthogonal representation $\mathcal
  R'$ of $G'$, then we want to construct an orthogonal representation
  $\mathcal R$ of $G$ with $\cost(\mathcal R) = \cost(\mathcal R') +
  b_H$.  Let~$\mathcal S$ be an orthogonal representation of $H$
  causing only $b_H$ cost.  Since $\cost_H(\cdot)$ was assumed to be
  non-convex,~$\mathcal S$ needs to have three bends.  It is easy to
  see that $\mathcal R'$ and $\mathcal S$ (or $\mathcal S'$ obtained
  from $\mathcal S$ by mirroring the drawing) can be combined to an
  orthogonal representation of $G$ if the two edges incident to the
  vertex $v$ in $G'$ corresponding to $H$ have an angle of $90^\circ$
  between them.  However, this can always be ensured without
  increasing the costs of $\mathcal R'$.  Let $e_1$ and $e_2$ be the
  edges incident to $v$ and assume they have an angle of $180^\circ$
  between them in both faces incident to $v$.  If neither~$e_1$ nor
  $e_2$ has a bend, the flex graph contains the cycle around $v$ due
  to the fact that~$e_1$ and $e_2$ have positive flexibilities.
  Bending along this cycles introduces a bend to each of the edges,
  thus we can assume without loss of generality that $e_1$ has a bend
  in $\mathcal R'$.  Moving $v$ along the edge $e_1$ until it reaches
  this bend decreases the number of bends on $e_1$ by one and ensures
  that~$v$ has an angle of $90^\circ$ in one of its incident faces.
  Thus we can replace $v$ by the split component $H$ with orthogonal
  representation~$\mathcal S$ having cost $b_H$ yielding an orthogonal
  representation $\mathcal R$ of $G$ with $\cost(\mathcal R) =
  \cost(\mathcal R') + b_H$.
\end{proof}

When computing a $(\tau, \mathcal E)$-optimal orthogonal representation
of $G$ we make use of Lemma~\ref{lem:contract-deg-33} in the following
way.  If the expansion graph $H$ corresponding to a virtual edge
$\eps$ in $\skel(\tau)$ has a non-convex cost function, we simply
contract this virtual edge in $\skel(\tau)$.  Note that this is
equivalent to contracting $H$ in $G$.  We can then make use of the
fact that all remaining expansion graphs have convex cost functions to
compute a $(\tau, \mathcal E)$-optimal orthogonal representation of the
resulting graph yielding a $(\tau, \mathcal E)$-optimal orthogonal
representation of the original graph $G$ since the contracted
expansion graphs can be inserted due to
Lemma~\ref{lem:contract-deg-33}.  Note that expansion graphs with non
convex cost functions can only appear if the root is a Q- or an
S-node.  In the skeletons of P- and R-nodes every vertex has degree at
least three, thus the poles of an expansion graph cannot have degree~3
since $G$ has maximum degree~4.

\begin{figure}
  \centering
  \includegraphics[page=1]{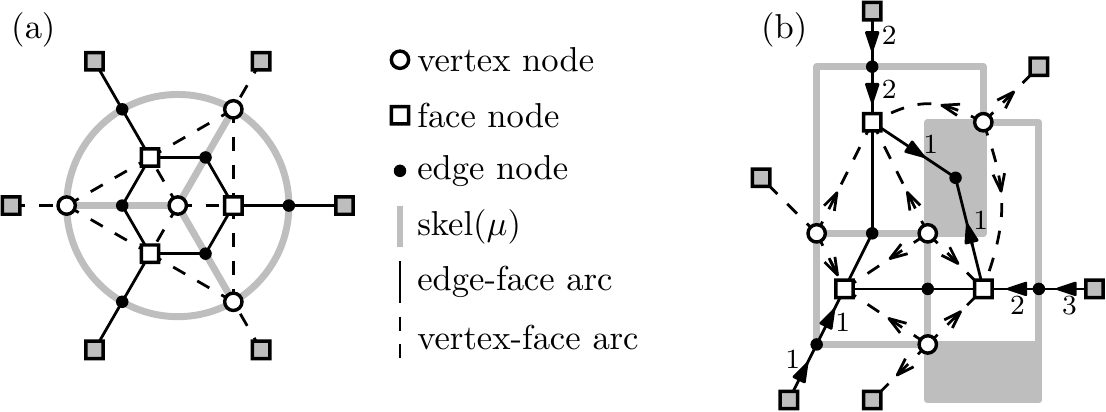}
  \caption{(a) The structure of the flow network $N^{\mathcal E}$ for
    the case that $\tau$ is an R-node with $\skel(\tau) = K_4$.  The
    outer face is split into several gray boxes to improve
    readability.  (b) A flow together with the corresponding
    orthogonal representation.  The numbers indicate the amount of
    flow on the arcs.  Undirected edges imply~0 flow, directed arcs
    without a number have flow~1.}
  \label{fig:flow-network}
\end{figure}

Now we are ready to define the flow network $N^{\mathcal E}$ on
$\skel(\tau)$ with respect to the fixed embedding~$\mathcal E$ of
$\skel(\tau)$; see Figure~\ref{fig:flow-network}(a) for an example.
For each vertex $v$, each virtual edge $\eps$ and each face~$f$ in
$\skel(\tau)$ the flow network $N^{\mathcal E}$ contains the nodes
$v$, $\eps$ and $f$, called \emph{vertex node}, \emph{edge node} and
\emph{face node}, respectively.  The network $N^\mathcal E$ contains
the arcs $(v, f)$ and $(f, v)$ with capacity~1, called
\emph{vertex-face arcs}, if the vertex $v$ and the face $f$ are
incident in $\skel(\tau)$.  For every virtual edge $\eps$ we add
\emph{edge-face arcs} $(\eps, f)$ and $(f, \eps)$, if $f$ is incident
to $\eps$.  We use $\cost_H(\cdot) - b_H$ as cost function of the
arc~$(f, \eps)$, where $H$ is the expansion graph of the virtual edge
$\eps$.  The edge-face arcs $(\eps, f)$ in the opposite direction have
infinite capacity with~0 cost.  It remains to define the demand of
every node in $N^{\mathcal E}$.  Every inner face has a demand of~4,
the outer face has a demand of~$-4$.  An edge node $\eps$ stemming
from the edge $\eps = \{s, t\}$ with expansion graph $H$ has a demand
of $\deg_H(s) + \deg_H(t) - 2$, where $\deg_H(v)$ denotes the degree
of $v$ in $H$.  The demand of a vertex node~$v$ is $4 - \deg_G(v) -
\deg_{\skel(\tau)}(v)$.

In the flow network $N^{\mathcal E}$ the flow entering a face node $f$
using a vertex-face arc or an edge-face arc is interpreted as the
rotation at the corresponding vertex or along the path between the
poles of the corresponding child, respectively; see
Figure~\ref{fig:flow-network}(b) for an example.  Incoming flow is
positive rotation and outgoing flow negative rotation.  Let $b_{H_1},
\dots, b_{H_k}$ be the base costs of the expansion graphs
corresponding to virtual edges in $\skel(\tau)$.  We define the
\emph{total base costs} of $\tau$ to be $b_\tau = \sum_i b_{H_i}$.
Note that the total base costs of $\tau$ are a lower bound for the
costs that have to be paid for every orthogonal representation of $G$.
We show that an optimal flow $\phi$ in $N^{\mathcal E}$ corresponds to
a $(\tau, \mathcal E)$-optimal orthogonal representation $\mathcal R$
of $G$.  Since the base costs do not appear in the flow network, the
costs of the flow and its corresponding orthogonal representation
differ by the total base costs $b_\tau$, that is $\cost(\mathcal R) =
\cost(\phi) + b_\tau$.  We obtain the following lemma.

\begin{lemma}
  \label{lem:compute-optimal-rep-root}
  Let $G$ be a biconnected convex instance of {\sc OptimalFlexDraw},
  let $\mathcal T$ be its SPQR-tree with root $\tau$ and let $\mathcal
  E$ be an embedding of $\skel(\tau)$.  If the cost function of every
  principal split component is known, a $(\tau, \mathcal E)$-optimal
  solution can be computed in $\mathcal O(T_{\flow}(|\skel(\tau)|))$
  time.
\end{lemma}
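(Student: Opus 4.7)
The plan is to establish a bijection (up to cost) between feasible flows in $N^{\mathcal E}$ and $(\tau,\mathcal E)$-optimal orthogonal representations of $G$, then invoke a standard convex min-cost flow subroutine. First, I would handle non-convex expansion graphs by preprocessing: for every virtual edge $\eps$ of $\skel(\tau)$ whose expansion graph $H$ has a non-convex cost function (which by Theorem~\ref{thm:convex-cost-function} can occur only when both poles of $H$ have degree~$3$ and hence only when $\tau$ is a Q- or S-node), contract $\eps$ in $\skel(\tau)$. By Lemma~\ref{lem:contract-deg-33} this changes the optimum cost by exactly $b_H$, and a $(\tau,\mathcal E)$-optimal representation of the contracted graph can be extended back. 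After this preprocessing, every remaining expansion graph has a cost function that is convex on $[0,3]$, so the arc costs in $N^{\mathcal E}$ are convex and the network can be solved by the black-box algorithm in time $\mathcal O(T_{\flow}(|\skel(\tau)|))$.

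Next I would prove the correspondence between feasible flows in $N^{\mathcal E}$ and orthogonal representations that are nice with respect to $\tau$ and induce $\mathcal E$ on $\skel(\tau)$. In one direction, given such an orthogonal representation $\mathcal R$ with induced representation $\mathcal S_H$ on the expansion graph $H$ of each virtual edge $\eps=\{s,t\}$ of $\skel(\tau)$, define $\phi$ on each vertex-face arc $(v,f)$ (respectively its reverse) by the positive (respectively negative) part of $\rot_{\mathcal R}(v_f)$, and on each edge-face arc $(\eps,f)$ analogously from $\rot(\pi_{f_H}(s,t))$ where $f_H$ is the side of the outer face of $H$ glued to $f$. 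Checking conservation at face nodes then amounts to property~(II); conservation at vertex nodes $v$ encodes that the rotations at~$v$ in all incident faces of $G$ sum to $2(\deg_G(v)-2)$, which is exactly why the demand of $v$ was set to $4-\deg_G(v)-\deg_{\skel(\tau)}(v)$; and conservation at edge nodes $\eps$ encodes the relation $\rot_f(\pi(s,t))+\rot_{f'}(\pi(t,s))$ determined by $\deg_H(s)+\deg_H(t)-2$, matching the demand of $\eps$. Capacities on vertex-face arcs enforce $\rot(v_f)\in\{-1,0,1\}$, and the finite portion of $\cost_H(\cdot)$ (which is $\infty$ outside $[0,3]$) enforces niceness of the induced representation on~$H$.

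In the other direction, from a feasible flow $\phi$ I reconstruct $\mathcal R$ by reading off the rotations at vertices and the rotations of the outer-face paths of each $H$, then invoking Theorem~\ref{thm:convex-cost-function} (and its sibling constructions behind the definition of $\cost_H$) to realize each $H$ by an actual nice orthogonal representation of the required number of bends. Tamassia's realization criterion (properties (I)–(III)) verifies that these local representations glue into a valid orthogonal representation of $G$ with embedding~$\mathcal E$ on $\skel(\tau)$. A direct bookkeeping shows that $\cost(\mathcal R)=\cost(\phi)+b_\tau$, since the cost functions on the edge-face arcs use $\cost_H(\cdot)-b_H$ and the base costs are summed into $b_\tau$. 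Therefore a minimum-cost feasible flow in $N^{\mathcal E}$ yields a $(\tau,\mathcal E)$-optimal orthogonal representation, and conversely.

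The main obstacle I expect is the correspondence itself, not the running time. The delicate points are (i) verifying that the node demands I listed are exactly what is forced by the angle/rotation relations (I)–(III) together with the fact that each principal split component is nice and tight so that its outer-face shape is determined by a single rotation value and the pole degrees, and (ii) handling the non-convex cost functions cleanly via contraction so that the flow network stays convex while the reconstruction still produces a genuine $(\tau,\mathcal E)$-optimal representation of the original graph~$G$. Once these are nailed down, the running time is immediate: $N^{\mathcal E}$ has $\mathcal O(|\skel(\tau)|)$ nodes and arcs, the cost functions are convex and piecewise-linear with a constant number of breakpoints on $[0,3]$, so a single call to the min-cost flow subroutine finishes in $\mathcal O(T_{\flow}(|\skel(\tau)|))$ time.
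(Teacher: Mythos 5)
Your proposal is correct and follows essentially the same route as the paper: the same flow network $N^{\mathcal E}$ correspondence (rotations at vertices and along the two outer-face paths of each expansion graph become flow values, with the demands matching properties (I)--(III) under tightness), the same contraction of non-convex expansion graphs via Lemma~\ref{lem:contract-deg-33}, and the same cost accounting up to the total base costs $b_\tau$. The only detail your ``direct bookkeeping'' hides, which the paper verifies explicitly, is that the flow entering an edge node from the \emph{second} incident face causes no cost: there one needs $\rho' \le \ell_H$ (forced by the pole degrees) together with the convention $\cost_H(\rho') = b_H$ below $\ell_H$, so that the arc cost $\cost_H(\rho') - b_H$ vanishes.
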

\begin{proof}
  As mentioned before, we want to use the flow network $N^{\mathcal
    E}$ to compute an optimal orthogonal representation.  To this end
  we show two directions.  First, given a $(\tau, \mathcal E)$-optimal
  orthogonal representation $\mathcal R$, we obtain a feasible flow
  $\phi$ in $N^{\mathcal E}$ such that $\cost(\phi) = \cost(\mathcal
  R) - b_\tau$, where $b_\tau$ are the total base costs.  Conversely,
  given an optimal flow $\phi$ in $N^{\mathcal E}$, we show how to
  construct an orthogonal representation $\mathcal R$ such that
  $\cost(\mathcal R) = \cost(\phi) + b_\tau$.  As the flow network
  $N^{\mathcal E}$ has size $\mathcal O(|\skel(\tau)|)$, the claimed
  running time follows immediately.

  Let $\mathcal R$ be a $(\tau, \mathcal E)$-optimal orthogonal
  representation of $G$.  As we only consider nice and thus only tight
  drawings we can assume the orthogonal representation $\mathcal R$ to
  be tight.  Recall that being tight implies that the poles of the
  expansion graph of every virtual edge have a rotation of~1 in the
  internal faces.  We first show how to assign flow to the arcs in
  $N^{\mathcal E}$.  It can then be shown that the resulting flow is
  feasible and causes $\cost(\mathcal R) - b_\tau$ cost.  For every
  pair of vertex-face arcs $(f, v)$ and~$(v, f)$ in $N^{\mathcal E}$
  there exists a corresponding face $f$ in the orthogonal
  representation $\mathcal R$ of $G$ and we set $\phi((v, f)) =
  \rot(v_f)$.  Let $\eps = \{s, t\}$ be a virtual edge in $\skel(\mu)$
  incident to the two faces $f_1$ and~$f_2$.  Without loss of
  generality let $\pi_{f_1}(s, t)$ be the path belonging to the
  expansion graph of $\eps$.  Then $\pi_{f_2}(t, s)$ also belongs to
  $H$.  We set $\phi((\eps, f_1)) = \rot_{\mathcal R}(\pi_{f_1}(s,
  t))$ and $\phi((\eps, f_2)) = \rot_{\mathcal R}(\pi_{f_2}(t, s))$.
  For the resulting flow $\phi$ we need to show that the capacity of
  every arc is respected, that the demand of every vertex is
  satisfied, and that $\cost(\phi) = \cost(\mathcal R) - b_\tau$
  holds.

  First note that the flow on the vertex-face arcs does not exceed the
  capacities of~1 since every vertex has degree at least~2.  Since no
  other arc has a capacity, it remains to deal with the demands and
  the costs.

  \begin{figure}
    \centering
    \includegraphics[page=2]{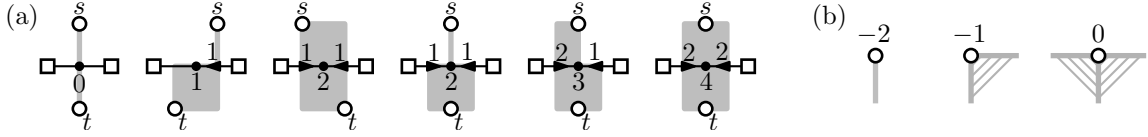}
    \caption{(a) Illustration of the demand of virtual edges.  (b)
      Rotation of poles in the outer face, depending on the degree.}
    \label{fig:flow-network-2}
  \end{figure}

  For the demands we consider each vertex type separately.  Let $f$ be
  a face node.  The total incoming flow entering $f$ is obviously
  equal to the rotation in $\mathcal R$ around the face $f$.  As
  $\mathcal R$ is an orthogonal representation this rotation equals
  to~4 ($-4$ for the outer face), which is exactly the demand of $f$.
  Let $\eps$ be an edge node corresponding to the expansion graph $H$
  with poles~$s$ and~$t$.  Recall that $\dem(\eps) = \deg_H(s) +
  \deg_H(t) - 2$ is the demand of $\eps$.
  Figure~\ref{fig:flow-network-2}(a) illustrates the demand of a
  virtual edge.  Let $\mathcal S$ be the orthogonal representation
  induced on $H$ by $\mathcal R$ and let $f$ be the outer face of
  $\mathcal S$.  Clearly, the flow leaving $\eps$ is equal to
  $\rot_{\mathcal R}(\pi_{f_1}(s, t)) + \rot_{\mathcal R}(\pi_{f_2}(t,
  s)) = \rot_{\mathcal S}(\pi_f(s, t)) + \rot_{\mathcal S}(\pi_f(t,
  s))$.  Since $f$ is the outer face of $H$, the total rotation around
  this faces sums up to~$-4$.  The rotation of the pole $s$ in the
  outer face $f$ is $\deg_H(s) - 3$, see
  Figures~\ref{fig:flow-network-2}(b), and the same holds for $t$.
  Thus we have $\rot_{\mathcal S}(\pi_f(s, t)) + \rot_{\mathcal
    S}(\pi_f(t, s)) + \deg_H(s) - 3 + \deg_H(t) - 3 = -4$.  This
  yields for the outgoing flow $\rot_{\mathcal S}(\pi_f(s, t)) +
  \rot_{\mathcal S}(\pi_f(t, s)) = 2 - \deg_H(s) - \deg_H(t)$, which
  is exactly the negative demand of $\eps$.  It remains to consider
  the vertex nodes.  Let $v$ be a vertex node, recall that $\dem(v) =
  4 - \deg_G(v) - \deg_{\skel(\tau)}(v)$ holds.  The outgoing flow
  leaving $v$ is equal to the summed rotation of $v$ in faces not
  belonging to expansion graphs of virtual edges in $\skel(\tau)$.
  As~$\mathcal R$ is an orthogonal representation, the total rotation
  around every vertex $v$ is $2 \cdot(\deg_G(v) - 2)$.  Moreover, $v$
  is incident to $\deg_{\skel(\tau)}(v)$ faces that are not contained
  in expansion graphs of virtual edges of $\skel(\tau)$.  Thus there
  are $\deg_G(v) - \deg_{\skel(\tau)}(v)$ faces incident to $v$
  belonging to expansion graphs.  As we assumed that the orthogonal
  representation of every expansion graph is tight, the rotation of
  $v$ in each of these faces is~1.  Thus the rotation of $v$ in the
  remaining faces not belonging to expansion graphs is $2
  \cdot(\deg_G(v) - 2) - (\deg_G(v) - \deg_{\skel(\tau)}(v))$.
  Rearrangement yields a rotation, and thus an outgoing flow, of
  $\deg_G(v) + \deg_{\skel(\tau)}(v) - 4$, which is the negative
  demand of $v$.

  To show that $\cost(\phi) = \cost(\mathcal R) - b_\tau$ holds it
  suffices to consider the flow on the edge-face arcs as no other arcs
  cause cost.  Let $\eps$ be a virtual edge and let $f_1$ and $f_2$
  the two incident faces.  The flow entering $f_1$ or $f_2$ does not
  cause any cost, as $(\eps, f_1)$ and $(\eps, f_2)$ have infinite
  capacity with~0 cost.  Thus only flow entering $\eps$ over the arcs
  $(f_1, \eps)$ and~$(f_2, \eps)$ may cause cost.  Assume without loss
  of generality that the number of bends $\rho$ the expansion graph
  $H$ of $\eps$ has is determined by the rotation along $\pi_{f_1}(s,
  t)$, that is $\rho = - \rot_{\mathcal R}(\pi_{f_1}(s, t))$.  Let
  $\rho' = - \rot_{\mathcal R}(\pi_{f_2}(t, s))$ be the negative
  rotation along the path $\pi_{f_2}(t, s)$ in the face $f_2$.  Note
  that $\phi((f_1, \eps)) = \rho$ and $\phi((f_2, \eps)) = \rho'$.
  Obviously, the flow on $(f_1, \eps)$ causes the cost $\cost_H(\rho)
  - b_H$.  We show that the cost caused by the flow on $(f_2, \eps)$
  is~0.  If $\rho' \le 0$ this is obviously true, as there is no flow
  on the edge $(f_2, \eps)$.  Otherwise, $0 < \rho' \le \rho$ holds.
  It follows that the smallest possible number of bends $\ell_H$ every
  orthogonal representation of $H$ has lies between $\rho'$ and
  $\rho$.  It follows from the definition of $\cost_H(\cdot)$ and from
  the fact that all cost functions are convex that $\cost_H(\rho') =
  b_H$.  To sum up, the total cost on edge-face arcs incident to the
  virtual edge $\eps$ is equal to the cost caused by its expansion
  graph $H$ with respect to the orthogonal representation $\mathcal R$
  minus the base cost $b_H$.  As neither $\phi$ nor $\mathcal R$ have
  additional cost we obtain $\cost(\phi) = \cost(\mathcal R) -
  b_\tau$.

  It remains to show the opposite direction, that is given an optimal
  flow $\phi$ in $N^{\mathcal E}$, we can construct an orthogonal
  representation $\mathcal R$ of $G$ such that $\cost(\mathcal R) =
  \cost(\phi) + b_\tau$.  This can be done by reversing the
  construction above.  The flow on edge-face arcs determines the
  number of bends for the expansion graphs of each virtual edge.  The
  cost functions of these expansion graphs guarantee the existence of
  orthogonal representations with the desired rotations along the
  paths between the poles, thus we can assume to have orthogonal
  representations for all children.  We combine these orthogonal
  representations by setting the rotations between them at common
  poles as specified by the flow on vertex-face arcs.  It can be
  easily verified that this yields an orthogonal representation of the
  whole graph $G$ by applying the above computation in the opposite
  direction.
\end{proof}

The above results rely on the fact that the cost functions of
principal split components are convex as stated in
Theorem~\ref{thm:convex-cost-function} and that they can be computed
efficiently.  In the following we show that
Theorem~\ref{thm:convex-cost-function} really holds with the help of a
structural induction over the SPQR-tree.  More precisely, the cost
functions of principal split components corresponding to the leaves of
$\mathcal T$ are the cost functions of the edges and thus they are
convex.  For an inner node $\mu$ we assume that the pertinent graphs
of the children of $\mu$ have convex cost functions and show that $H =
\pert(\mu)$ itself also has a convex cost function.  The proof is
constructive in the sense that it directly yields an algorithm to
compute these cost functions bottom up in the SPQR-tree.

Note that we can again apply Lemma~\ref{lem:contract-deg-33} in the
case that the cost function of the expansion graph of one of the
virtual edges in $\skel(\mu)$ is not convex due to the fact that both
of its poles have degree~3.  This means that we can simply contract
such a virtual edge (corresponding to a contraction of the expansion
graph in $H$), compute the cost function for the remaining graph
instead of $H$ and plug the contracted expansion graph into the
resulting orthogonal representations.  Thus we can assume that the
cost function of each of the expansion graphs is convex, without any
exceptions.

The flow network $N^{\mathcal E}$ that was introduced to compute an
optimal orthogonal representation in the root of the SPQR-tree can be
adapted to compute the cost function of the principal split component
$H$ corresponding to a non-root node $\mu$.  To this end we have to
deal with the parent edge, which does not occur in the root of
$\mathcal T$, and we consider a parameterization of $N^{\mathcal E}$
to compute several optimal orthogonal representations with a
prescribed number of bends, depending on the parameter in the flow
network.  Before we describe the changes in the flow network we need
to make some considerations about the cost function.  By the
definition of the cost function it explicitely optimizes over all
planar embeddings of $\skel(\mu)$.  Moreover, as the cost function
$\cost_H(\rho)$ depends on the number of bends $\rho$ a graph $H$ has,
it implicitly allows to flip the embedding of $H$ since the number of
bends is defined as $\max\{|\rot(\pi(s, t))|, |\rot(\pi(t, s))|\}$.
However, the flow network $N^{\mathcal E}$ can only be used to compute
the cost function for a fixed embedding.  Thus we define the
\emph{partial cost function} $\cost_H^{\mathcal E}(\rho)$ of $H$ with
respect to the planar embedding $\mathcal E$ of $\skel(\mu)$ to be the
smallest possible cost of an orthogonal representation inducing the
planar embedding $\mathcal E$ on $\skel(\mu)$ with $\rho$ bends such
that the number of bends is determined by $\pi_f(s, t)$, that is
$\rot(\pi_f(s, t)) = - \rho$, where~$f$ is the outer face.  Note that
the minimum over the partial cost functions $\cost_H^{\mathcal
  E}(\cdot)$ and $\cost_H^{\mathcal E'}(\cdot)$, where $\mathcal E'$
is obtained by flipping the embedding $\mathcal E$ of $\skel(\mu)$
yields a function describing the costs of $H$ with respect to the
embedding $\mathcal E$ of $\skel(\mu)$ depending on the number of
bends $H$ has (and not on the rotation along $\pi_f(s, t)$ as the
partial cost function does).  Obviously, minimizing over all partial
cost functions yields the cost function of $H$.

The flow network $N^{\mathcal E}$ is defined as before with the
following modifications.  The parent edge of $\skel(\mu)$ does not
have a corresponding edge node.  Let $f_1$ and $f_2$ be the faces in
$\skel(\mu)$ incident to the parent edge.  These two faces together
form the outer face $f$ of $H$, thus we could merge them into a single
face node.  However, not merging them has the advantage that the
incoming flow in $f_1$ and $f_2$ corresponds to the rotations along
$\pi_f(s, t)$ and $\pi_f(t, s)$, respectively (it might be the other
way round but we can assume this situation without loss of
generality).  Thus, we do not merge~$f_1$ and $f_2$, which enables us
to control the number of bends of $H$ by setting the demands of~$f_1$
and~$f_2$.  This is also the reason why we remove the vertex-face arcs
between the poles and the two faces~$f_1$ and~$f_2$.  Before we
describe how to set the demands of $f_1$ and $f_2$, we fit the demands
of the poles to the new situation.  As we only consider tight
orthogonal representations we know that the rotation at the poles $s$
and $t$ in all inner faces is~1.  Thus, we set $\dem(s) = 2 -
\deg_{\skel(\mu)}(s)$ and $\dem(t) = 2 - \deg_{\skel(\mu)}(t)$ as this
is the number of faces incident to $s$ and $t$, respectively, after
removing the vertex-face arcs to $f_1$ and $f_2$.  With these
modifications the only flow entering $f_1$ and $f_2$ comes from the
paths $\pi_f(s, t)$ and $\pi_f(t, s)$, respectively.  As the total
rotation around the outer face is~$-4$ and the rotation at the
vertices $s$ and $t$ is $\deg_H(s) - 3$ and $\deg_H(t) - 3$,
respectively, we have to ensure that $\dem(f_1) + \dem(f_2) = 2 -
\deg_H(s) - \deg_H(t)$.  As mentioned before, we assume without loss
of generality that $\pi_f(s, t)$ belongs to the face $f_1$ and
$\pi_f(t, s)$ belongs to $f_2$.  Then the incoming flow entering $f_1$
corresponds to $\rot(\pi_f(s, t))$ of an orthogonal representation.
We parameterize $N^{\mathcal E}$ with respect to the faces $f_1$ and
$f_2$ starting with $\dem(f_1) = 0$ and $\dem(f_2) = 2 - \deg_H(s) -
\deg_H(t)$.  It obviously follows that an optimal flow in $N^{\mathcal
  E}$ with respect to the parameter $\rho$ corresponds to an optimal
orthogonal representation of $H$ that induces $\mathcal E$ on
$\skel(\mu)$ and has a rotation of $-\rho$ along~$\pi_f(s, t)$.  Thus,
up to the total base costs $b_\mu$, the cost function of the flow
network equals to the partial cost function of $H$ on the interval
$[\ell_H, 3]$, that is $\cost_{N^{\mathcal E}}(\rho) + b_\mu =
\cost_H^{\mathcal E}(\rho)$ for $\ell_H \le \rho \le 3$.  To obtain
the following lemma it remains to show two things for the case that
$\deg(s) + \deg(t) < 6$.  First, $\cost_{N^{\mathcal E}}(\rho)$ and
thus each partial cost function is convex for $\ell_H \le \rho \le 3$.
Second, the minimum over these partial cost functions is convex.

\begin{lemma}
  \label{lem:convex-cost-induction}
  If Theorem~\ref{thm:convex-cost-function} holds for each principal
  split component corresponding to a child of the node $\mu$ in the
  SPQR-tree, then it also holds for $\pert(\mu)$.
\end{lemma}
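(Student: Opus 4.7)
My plan is to exhibit the cost function of $\pert(\mu)$ as the pointwise minimum, over planar embeddings of $\skel(\mu)$, of partial cost functions that each arise as the cost function of a parameterized convex flow network; then to verify convexity of this minimum on the short interval $[0,3]$ by a case analysis on the type of $\mu$.

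First I would reduce to the situation in which every expansion graph of a virtual edge of $\skel(\mu)$ has a convex cost function. By the inductive hypothesis this already holds whenever the two poles of the expansion graph do not both have degree~3. For an expansion graph in which both poles do have degree~3, Lemma~\ref{lem:contract-deg-33} allows me to contract the corresponding virtual edge in $\skel(\mu)$; an optimal orthogonal representation of the contracted graph can afterwards be combined with a drawing of the contracted component without changing the total cost. After this reduction, every edge-face arc of $N^{\mathcal E}$ carries a convex cost function.

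Next, for each planar embedding $\mathcal E$ of $\skel(\mu)$, I would apply Theorem~\ref{thm:flow-has-convex-cost} to the parameterized convex flow network $N^{\mathcal E}$ to obtain convexity of $\cost_{N^{\mathcal E}}(\rho)$ on $[\rho_0,\infty)$, where $\rho_0$ denotes the optimal parameter. A short exchange argument shows $\rho_0 \le \ell_H$: decreasing the rotation along $\pi_f(s,t)$ below $-\ell_H$ forces the rotation along $\pi_f(t,s)$ above $\ell_H$, which by convexity and monotonicity of all arc cost functions can only increase the overall cost. Combined with the identity $\cost_H^{\mathcal E}(\rho) = \cost_{N^{\mathcal E}}(\rho) + b_\mu$, this establishes convexity of the partial cost function $\cost_H^{\mathcal E}(\cdot)$ on $[\ell_H, 3]$.

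The main remaining task, and the hardest part, is to show that the pointwise minimum of the partial cost functions over all planar embeddings of $\skel(\mu)$ and over the two possible choices of which path dominates is convex on $[0, 3]$. In general the minimum of convex functions is not convex, as Figure~\ref{fig:not-convex-example} explicitly witnesses; what saves us is the very short interval together with the hypothesis on pole degrees. I would split according to the type of $\mu$. If $\mu$ is an S-node, the cycle skeleton has a unique embedding and there is nothing to minimize over. If $\mu$ is an R-node, the two embeddings of its triconnected skeleton are mirror images and yield identical partial cost functions. The core case is when $\mu$ is a P-node, where different cyclic orderings of the parallel virtual edges produce genuinely different partial cost functions. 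Here I would use the assumption that the poles of $\pert(\mu)$ do not both have degree~3, together with the boundedness of $[\ell_H, 3]$ (at most four integer points), to perform an explicit case analysis on the pole degrees, the number of parallel virtual edges, and their orderings, confirming that any local concavity of the kind exhibited in Figure~\ref{fig:not-convex-example} is ruled out on $[0,3]$.
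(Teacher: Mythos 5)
Your high-level plan (contract the degree-$(3,3)$ children via Lemma~\ref{lem:contract-deg-33}, get convexity of each partial cost function from Theorem~\ref{thm:flow-has-convex-cost}, then argue that the minimum over embeddings stays convex on $[0,3]$) is the same as the paper's, but the two places where you wave your hands are exactly where the real content lies. First, your ``short exchange argument'' for $\rho_0 \le \ell_H$ does not work as stated. For $\rho > \ell_H$, tightness gives $\rot(\pi_f(t,s)) = \rho + 2 - \deg_H(s) - \deg_H(t) \ge 0$ (not ``above $\ell_H$''), and a nonnegative rotation along $\pi_f(t,s)$ incurs no cost at all on the arcs into $f_2$, so convexity and monotonicity of the arc costs give you nothing: a priori the extra cost paid on the $f_1$ side could be compensated by cheaper configurations elsewhere in the network. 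That over-rotating never pays is precisely the non-rigidity statement of Section~\ref{sec:valid-drawings-with}; the paper establishes the minimality of $\cost_H^{\mathcal E}$ at $\ell_H$ by applying Corollary~\ref{cor:rotations-form-interval} together with Theorem~\ref{thm:valid-and-optimal} to reduce the rotation along $\pi_f(t,s)$ step by step without increasing cost. Without invoking that machinery (or reproving it), your claim $\rho_0 \le \ell_H$ is unsupported, and it is the hinge on which the convexity of the partial cost functions on $[\ell_H,3]$ turns.

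Second, your case analysis over node types is miscalibrated. The R-node case is not trivial: the partial cost function fixes \emph{which} outer-face path carries the rotation $-\rho$, and mirroring maps a representation with $\rot(\pi_f(s,t)) = -\rho$ to one in which the \emph{other} path has rotation $-\rho$; hence $\cost_H^{\mathcal E}$ and $\cost_H^{\mathcal E'}$ are in general different functions, and showing that their pointwise minimum is convex is the actual work. The paper does this by proving $\cost_H^{\mathcal E}(\ell_H) = \cost_H^{\mathcal E'}(\ell_H)$ (by mirroring, plus one more application of Corollary~\ref{cor:rotations-form-interval} when $\deg(s)+\deg(t)$ is odd), using the formal convention $\cost_H(0)=\cost_H(1)$, and then deriving a contradiction from $\Delta\cost_H(1) > \Delta\cost_H(2)$ against the convexity of the partial function attaining the minimum at $1$ and $3$; none of this appears in your proposal. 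Conversely, the P-node case you single out as the core difficulty largely disappears: a P-node with three children forces both poles of $\pert(\mu)$ to have degree~3, which is excluded by hypothesis, and a P-node with two children has only the two flip-related embeddings, i.e., exactly the R-node situation. So the ``explicit case analysis on the number of parallel virtual edges and their orderings'' is not needed, while the flip argument you declared trivial is the step you still owe.
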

\begin{proof}
  As mentioned before, we can use the flow network $N^{\mathcal E}$ to
  compute the partial cost function $\cost_H^{\mathcal E}(\rho)$ for
  $\ell_H \le \rho \le 3$ since $\cost_H^{\mathcal E}(\rho) =
  \cost_{N^{\mathcal E}}(\rho) + b_\mu$ holds on this interval.  In
  the following we only consider the case where $\deg_H(s) + \deg_H(t)
  < 6$ holds for the poles $s$ and $t$.  For the case $\deg_H(s) =
  \deg_H(t) = 3$ we do not need to show anything.  To show that the
  partial cost function is convex we do the following.  First, we show
  that $\cost_H^{\mathcal E}(\rho)$ is minimal for $\rho = \ell_H$.
  This implies that the cost function $\cost_{N^{\mathcal E}}(\rho)$
  of the flow network is minimal for $\rho = \rho_0 \le \ell_H$.  Then
  Theorem~\ref{thm:flow-has-convex-cost} can be applied showing that
  $\cost_{N^{\mathcal E}}(\rho)$ is convex for $\rho \in [\rho_0,
  \infty]$ yielding that the partial cost function $\cost_H^{\mathcal
    E}(\rho)$ is convex for $\rho \in [\ell_H, 3]$.  Thus, it remains
  to show that $\cost_H^{\mathcal E}(\rho)$ is minimal for $\rho =
  \ell_H$ to obtain convexity for the partial cost functions.

  Let $\mathcal S$ be an orthogonal representation of $H$ with $\rho
  \in [\ell_H, 3]$ bends such that $\pi_f(s, t)$ determines the number
  of bends, that is $\rot_{\mathcal S}(\pi_f(s, t)) = - \rho$, where
  $f$ is the outer face of $H$.  We show the existence of an
  orthogonal representation $S'$ with $\rot_{\mathcal S'}(\pi_f(s, t))
  = -\ell_H$ and $\cost(S') \le \cost(S)$.  Since we assume $\mathcal
  S$ to be tight, the rotations at the poles $\rot_{\mathcal S}(s_f)$
  and $\rot_{\mathcal S}(t_f)$ only depend on the degree of $s$ and
  $t$.  More precisely, we have $\rot_{\mathcal S}(s_f) = \deg_H(s) -
  3$ and the same holds for $t$.  Since the total rotation around the
  outer face $f$ is~$-4$ the following equation holds.
  \begin{equation}
    \rot_{\mathcal S}(\pi_f(t, s)) = \rho +  2 - \deg_H(s) -
    \deg_H(t)\label{eq:1}
  \end{equation}
  In the following we show that $\rot_{\mathcal S}(\pi_f(t, s)) \ge 0$
  holds if the number of bends $\rho$ exceeds $\ell_H$.  Then
  Corollary~\ref{cor:rotations-form-interval} in combination with
  Theorem~\ref{thm:valid-and-optimal} can be used to reduce the
  rotation along $\pi_f(t, s)$ and thus reduce the number of bends
  by~1, yielding finally an orthogonal representation with $\ell_H$
  bends determined by $\pi_f(s, t)$.  Recall that the lower bound for
  the number of bends was defined as $\ell_H = \lceil(\deg(s) +
  \deg(t) - 2)/2\rceil$.  First consider the case that $\deg_H(s) +
  \deg_H(t)$ is even (and of course less than~6).  Then
  Equation~\eqref{eq:1} yields $\rot_{\mathcal S}(\pi_f(t, s)) = \rho
  -2\ell_H$.  If $\rho$ is greater than $\ell_H$ this yields
  $\rot_{\mathcal S}(\pi_f(t, s)) > -\ell_H$.  Since $\ell_H$ is at
  most~$1$ in the case that $\deg(s) + \deg(t)$ is even and less
  than~6, this yields $\rot_{\mathcal S}(\pi_f(t, s)) > -1$.  The case
  that $\deg_H(s) + \deg_H(t)$ is odd works similarly.  Then
  Equation~\eqref{eq:1} yields $\rot_{\mathcal S}(\pi_f(t, s)) = \rho
  - 2\ell_H + 1$.  As before $\rho$ is assumed to be greater
  than~$\ell_H$ yielding $\rot_{\mathcal S}(\pi_f(t, s)) > -\ell_H +
  1$.  As $\ell_H$ is at most~2 we again obtain $\rot_{\mathcal
    S}(\pi_f(t, s)) > -1$, which concludes the proof that the partial
  cost functions are convex.

  It remains to show that the minimum over the partial cost functions
  is convex.  First assume that $\mu$ is an R-node.  Then its skeleton
  has only two embeddings $\mathcal E$ and $\mathcal E'$ where
  $\mathcal E'$ is obtained by flipping $\mathcal E$.  We have to show
  that the minimum over the two partial cost functions
  $\cost_H^{\mathcal E}(\cdot)$ and $\cost_H^{\mathcal E'}(\cdot)$
  remains convex.  For the case that $\deg(s) + \deg(t) = 5$ the
  equation $\ell_H = 2$ holds and thus we only have to show convexity
  on the interval $[2, 3]$.  Obviously, $\cost_H(\cdot)$ is convex on
  this interval if and only if $\cost_H(2) \le \cost_H(3)$.  As this
  is the case for both partial cost functions, it is also true for the
  minimum.  For $\deg(s) + \deg(t) < 5$ we first show that
  $\cost_H^{\mathcal E}(\ell_H) = \cost_H^{\mathcal E'}(\ell_H)$
  holds.  For the case that $\deg(s) + \deg(t)$ is even this is clear
  since mirroring an orthogonal representation~$\mathcal S$ with
  $\rot_{\mathcal S}(\pi_f(s, t)) = -\ell_H$ inducing $\mathcal E$ on
  $\skel(\mu)$ yields an orthogonal representation $\mathcal S'$ with
  $\rot_{\mathcal S'}(\pi_f(s, t)) = -\ell_H$ inducing $\mathcal E'$
  on $\skel(\mu)$.  For the case that $\deg(s) + \deg(t) = 3$, the
  orthogonal representation $\mathcal S$ with rotation~$-1$ along
  $\pi_f(s, t)$ can also be mirrored yielding $\mathcal S'$ with
  rotation~0 along $\pi_f(s, t)$.  By
  Corollary~\ref{cor:rotations-form-interval} this rotation can be
  reduced to~$-1$ without causing any additional cost.  As this
  construction also works in the opposite direction we have
  $\cost_H^{\mathcal E}(\ell_H) = \cost_H^{\mathcal E'}(\ell_H)$ for
  all cases.  Moreover, $\cost_H^{\mathcal E}(0) = \cost_H^{\mathcal
    E}(1)$ holds by definition, if $\deg(s) + \deg(t) > 2$.  If
  $\deg(s) = \deg(t) = 1$ this equation is also true as the rotation
  along $\pi_f(s, t)$ of an orthogonal representation can be reduced
  by~1 if it is~0, again due to
  Corollary~\ref{cor:rotations-form-interval}.  Thus it remains to
  show that the cost function $\cost_H(\cdot)$ defined as the minimum
  of $\cost_H^{\mathcal E}(\cdot)$ and $\cost_H^{\mathcal E'}(\cdot)$
  is convex on the interval $[1, 3]$.

  Assume for a contradiction that $\cost_H(\rho)$ is not convex for
  $\rho \in [1, 3]$, that is $\Delta\cost_H(1) > \Delta\cost_H(2)$.
  Assume without loss of generality that $\cost_H(3) =
  \cost_H^{\mathcal E}(3)$ holds.  As we showed before $\cost_H(1) =
  \cost_H^{\mathcal E}(1)$ also holds.  Since $\cost_H(2)$ is the
  minimum over $\cost_H^{\mathcal E}(2)$ and $\cost_H^{\mathcal E}(2)$
  we additionally have $\cost_H(2) \le \cost_H^{\mathcal E}(2)$.  This
  implies that the inequalities $\Delta\cost_H^{\mathcal E}(1) \ge
  \Delta\cost_H(1)$ and $\Delta\cost_H^{\mathcal E}(2) \le
  \Delta\cost_H(2)$ hold, yielding that the partial cost function
  $\cost_H^{\mathcal E}(\rho)$ is not convex for $\rho \in [1, 3]$,
  which is a contradiction.  Thus $\cost_H(\cdot)$ is convex.

  The case that $\mu$ is a P-node works similar to the case that $\mu$
  is an R-node.  If $\mu$ has only two children, its skeleton has only
  two embeddings $\mathcal E$ and $\mathcal E'$ obtained from one
  another by flipping.  Thus the same argument as for R-nodes applies.
  If $\mu$ has three children, then $\deg(s) = \deg(t) = 3$ holds and
  thus we do not have to show convexity.  Note that in the case
  $\deg(s) = \deg(t) = 3$ the resulting cost function can be computed
  by taking the minimum over the partial cost functions with respect
  to all embeddings of $\skel(\mu)$, although it may by non-convex.
  If $\mu$ is an S-node, we have a unique embedding and thus the
  partial cost function with respect to this embedding is already the
  cost function of $H$.  Note that considering only the rotation along
  $\pi_f(s, t)$ for the partial cost function is not a restriction, as
  S-nodes are completely symmetric.
\end{proof}

Lemma~\ref{lem:convex-cost-induction} together with the fact that the
cost function of every edge is convex shows that
Theorem~\ref{thm:convex-cost-function} holds, that is the cost
functions of all principal split components are convex on the
interesting interval $[0, 3]$ except for the special case where both
poles have degree~3.  However, this special case is easy to handle as
principal split components of this type with non-convex cost functions
can be simply contracted to a single vertex by
Lemma~\ref{lem:contract-deg-33}.  Moreover, the proof is constructive
in the sense that it shows how the cost functions can be computed
efficiently bottom up in the SPQR-tree.  For each node $\mu$ we have
to solve a constant number of minimum-cost flow problems in a flow
network of size $\mathcal O(|\skel(\mu)|)$.  As the total size of all
skeletons in $\mathcal T$ is linear in the number $n$ of vertices
in~$G$, we obtain an overall $\mathcal O(T_{\flow}(n))$ running time
to compute the cost functions with respect to the root $\tau$.
Finally, Lemma~\ref{lem:compute-optimal-rep-root} can be applied to
compute an optimal orthogonal representation with respect to a fixed
root and a fixed embedding of the root's skeleton in $\mathcal
O(T_{\flow}(|\skel(\tau)|))$ time.  To compute an overall optimal
solution, we have to compute a $(\tau, \mathcal E)$-optimal solution
for every root $\tau$ and every embedding $\mathcal E$ of
$\skel(\tau)$.  The number of embeddings of $\skel(\tau)$ is linear in
the size of $\skel(\tau)$ (since P-nodes have at most degree~4) and
the total size of all skeletons is linear in $n$.  We obtain the
following theorem.

\begin{theorem}
  \label{thm:solve-opti-flex-biconn}
  {\sc OptimalFlexDraw} can be solved in $\mathcal O(n \cdot
  T_{\flow}(n))$ time for convex biconnected instances.
\end{theorem}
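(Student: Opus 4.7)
The plan is to combine the structural result (Theorem~\ref{thm:three-bends-per-princ-spl-comp} together with Theorem~\ref{thm:valid-and-optimal}), guaranteeing the existence of a nice optimal representation, with the algorithmic ingredients built in the preceding lemmas: the bottom-up computation of cost functions from the proof of Lemma~\ref{lem:convex-cost-induction}, the contraction trick of Lemma~\ref{lem:contract-deg-33} to sidestep the non-convex case $\deg(s)=\deg(t)=3$, and the flow-network computation at the root given by Lemma~\ref{lem:compute-optimal-rep-root}. Because an optimal orthogonal representation is nice with respect to \emph{some} node of $\mathcal T$, it suffices to compute a $\tau$-optimal representation for every node $\tau$ of $\mathcal T$ and return the cheapest; the $n$ factor in the announced running time will come from iterating over all choices of the root.

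First, I would fix a root $\tau$ of the SPQR-tree $\mathcal T$ and traverse $\mathcal T$ from the leaves towards $\tau$. For each non-root node $\mu$ visited I would compute the cost function $\cost_{\pert(\mu)}(\cdot)$ on $[\ell_{\pert(\mu)},3]$ using the parameterized flow network $N^{\mathcal E}$ described in the proof of Lemma~\ref{lem:convex-cost-induction}, evaluated once for each of the constantly many embeddings $\mathcal E$ of $\skel(\mu)$ (at most two for R-nodes, a constant number for P-nodes of degree at most $4$, one for S- and Q-nodes), and taking the minimum over the resulting partial cost functions. Whenever an expansion graph corresponding to a virtual edge in $\skel(\mu)$ happens to fall into the exceptional $\deg(s)=\deg(t)=3$ case, I would contract that virtual edge in $\skel(\mu)$ and account for its base cost separately, which is justified by Lemma~\ref{lem:contract-deg-33}; after contraction all remaining expansion graphs have convex cost functions on $[0,3]$ by Theorem~\ref{thm:convex-cost-function}, so the flow network is convex and solvable by a minimum-cost flow subroutine.

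Once all cost functions are available, I would invoke Lemma~\ref{lem:compute-optimal-rep-root} once per embedding $\mathcal E$ of $\skel(\tau)$ to extract a $(\tau,\mathcal E)$-optimal orthogonal representation; minimizing over the constantly many embeddings of $\skel(\tau)$ yields a $\tau$-optimal representation. To optimize over all planar embeddings of $G$, I would repeat the whole procedure for every choice of the root $\tau$ and return the overall cheapest representation; since every orthogonal representation is nice for some node of $\mathcal T$ (Theorem~\ref{thm:three-bends-per-princ-spl-comp}) and nice optima are globally optimal (Theorem~\ref{thm:valid-and-optimal}), this really produces an optimal solution.

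For the running-time analysis, I would use that $\mathcal T$ has $\mathcal O(n)$ nodes and that the total size of all skeletons is $\mathcal O(n)$. For one fixed root, computing all cost functions bottom-up requires a constant number of minimum-cost flow computations per node, each in a network of size $\mathcal O(|\skel(\mu)|)$, so by convexity of $T_{\flow}$ (super-additivity of running times on additive inputs) the total cost is $\mathcal O(T_{\flow}(n))$; applying Lemma~\ref{lem:compute-optimal-rep-root} for the constantly many root embeddings adds only another $\mathcal O(T_{\flow}(n))$. Iterating over the $\mathcal O(n)$ possible roots gives the claimed bound $\mathcal O(n\cdot T_{\flow}(n))$. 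The main subtlety I expect is the bookkeeping around rerooting: one must be careful that changing the root only relabels which split components are principal, that the contraction step of Lemma~\ref{lem:contract-deg-33} remains consistent with the new root, and that enumerating embeddings of the root's skeleton together with taking minima really covers all planar embeddings of $G$ whose optimum is attained by a nice representation; all other ingredients are direct consequences of the results already established.
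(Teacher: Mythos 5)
Your overall plan is exactly the paper's: for each choice of root $\tau$ of the SPQR-tree, compute the cost functions of the principal split components bottom-up via the parameterized flow networks (contracting the exceptional $\deg(s)=\deg(t)=3$ components by Lemma~\ref{lem:contract-deg-33}), then apply Lemma~\ref{lem:compute-optimal-rep-root} at the root and minimize over roots, with correctness from Theorems~\ref{thm:three-bends-per-princ-spl-comp} and~\ref{thm:valid-and-optimal} and the running time from the linear total skeleton size.

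There is one slip in your enumeration at the root. For a non-root node $\mu$ the parent edge is fixed on the outer face, so there are indeed only constantly many embeddings of $\skel(\mu)$ to consider. For the root $\tau$, however, the embedding of $\skel(\tau)$ is unrestricted, and in particular the choice of the \emph{outer face} matters: the flow network $N^{\mathcal E}$ cannot even be set up without designating the outer face (it carries demand $-4$), and different outer faces generally yield different optima. Since $\skel(\tau)$ of an R- or S-node can have linearly many faces, the number of embeddings of $\skel(\tau)$ you must try is $\Theta(|\skel(\tau)|)$, not constant; taking only constantly many root embeddings would either leave the flow network underspecified or miss the optimal outer-face choice, so the returned representation need not be $\tau$-optimal. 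The repair is immediate and is what the paper does: iterate over all (linearly many in $|\skel(\tau)|$) embeddings of $\skel(\tau)$ including the outer face; since $\sum_\tau |\skel(\tau)| \in \mathcal O(n)$, this still fits in the claimed $\mathcal O(n\cdot T_{\flow}(n))$ bound. With that correction your argument matches the paper's proof.
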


\subsection{Connected Graphs}
\label{sec:connected-graphs}

In this section we extend the result obtained in
Section~\ref{sec:biconnected-graphs} to the case that the input graph
$G$ contains cutvertices.  Let $\mathcal B$ be the BC-tree of $G$
rooted at some B-node $\beta$.  Then every Block except for $\beta$
has a unique cutvertex as parent and we need to find optimal
orthogonal representations with the restriction that this cutvertex
lies on the outer face.  We claim that we can then combine these
orthogonal representations of the blocks without additional cost.

Unfortunately, with the so far presented results we cannot compute the
optimal orthogonal representation of a biconnected graph considering
only embeddings where a specific vertex $v$ lies on the outer face.
We may restrict the embeddings of the skeletons we consider when
traversing the SPQR-tree bottom up to those who have $v$ on the outer
face.  However, we can then no longer assume that the cost functions
we obtain are symmetric.  To deal with this problem, we present a
modification of the SPQR-tree, that can be used to represent exactly
the planar embeddings that have $v$ on the outer face and are
represented by the SPQR-tree rooted at a node $\tau$.

Let $\tau$ be the root of the SPQR-tree $\mathcal T$.  If $v$ is a
vertex of $\skel(\tau)$, then restricting the embeddings of
$\skel(\tau)$ to those who have $v$ on the outer face of $\skel(\tau)$
forces $v$ to be on the outer face of the resulting embedding of $G$.
Otherwise, $v$ is contained in the expansion graph of a unique virtual
edge $\eps$ in $\skel(\tau)$, we say that $v$ is \emph{contained} in
$\eps$.  Obviously, $\eps$ has to be on the outer face of the
embedding of $\skel(\tau)$.  However, this is not sufficient and it
depends on the child $\mu$ of $\tau$ corresponding to $\eps$ whether
$v$ lies on the outer face of the resulting embedding of $G$.  Let
$\mathcal E_\tau$ be an embedding of $\skel(\tau)$ having $\eps$ on
the outer face and let $s$ and $t$ be the endpoints of $\eps$.  Then
there are two possibilities, either $\eps = \{s, t\}$ has the outer
face to the left or to the right, where the terms ``left'' and
``right'' are with respect to an orientation from $t$ to $s$.  Assume
without loss of generality that the outer face lies to the right of
$\eps$ and consider the child $\mu$ of $\tau$ corresponding to $\eps$.
As $\mathcal T$ is rooted, we consider only embeddings of $\skel(\mu)$
that have the parent edge $\{s, t\}$ on the outer face.  As the choice
of the outer face of $\skel(\mu)$ does not have any effect on the
resulting embedding, we can assume that $\{s, t\}$ lies to the left of
$\skel(\mu)$, that is the inner face incident to $\{s, t\}$ lies to
the right of $\{s, t\}$ with respect to an orientation from $t$ to
$s$.  A vertex contained in $\skel(\mu)$ then lies obviously on the
outer face of the resulting embedding of $G$ if and only if it lies on
the outer face of the embedding of $\skel(\mu)$.  Thus, if $v$ is
contained in $\skel(\mu)$, restricting the embedding choices such that
$v$ lies on the outer face of $\skel(\mu)$ forces $v$ to be on the
outer face of $G$.  Note that in this case $\mu$ is either an R- or an
S-node.  For S-nodes there is no embedding choice and every vertex in
$\skel(\mu)$ lies on the outer face in this embedding.  If $\mu$ is an
R-node, there are only two embeddings and either $v$ lies on the outer
face of exactly one of them or in none of them.  In the latter case
the SPQR-tree with respect to the root $\tau$ does not represent an
embedding of $G$ with $v$ on the outer face at all.

Assume that $v$ is not contained in $\skel(\mu)$.  Then it is again
contained in a single virtual edge $\eps'$ and it is necessary that
$\eps'$ lies on the outer face of the embedding of $\skel(\mu)$.
Moreover, it depends on the child of $\mu$ corresponding to $\eps'$
whether $v$ really lies on the outer face.  Note that fixing $\eps'$
on the outer face completely determines the embedding of $\skel(\mu)$
if it is not a P-node.  If $\mu$ is a P-node, the virtual edge $\eps'$
has to be the rightmost, whereas the order of all other virtual edges
can be chosen arbitrarily.  If this is the case we split the P-node
into two parts, one representing the fixed embedding of $\eps'$, the
other representing the choices for the remaining edges; see
Figure~\ref{fig:spqr-tree-vertex-outer-face}(a).  More precisely, we
split $\mu$ into two P-nodes, the first one containing the parent edge
$\{s, t\}$, the edge $\eps'$ and a new virtual edge corresponding to
the second P-node, which is inserted as child.  The skeleton of the
second P-node contains a parent edge corresponding to the first P-node
and the remaining virtual edges that were contained in $\skel(\mu)$
but are not contained in the first P-node.  The children of $\mu$ are
attached to the two P-nodes depending on where the corresponding
virtual edges are.  Note that by splitting the P-node $\mu$, the
virtual edge $\eps'$ can no longer be in between two other virtual
edges in $\mu$.  However, this is a required restriction, thus we do
not loose embeddings that we want to represent.  Moreover, the new
P-node containing the virtual edge $\eps'$ that need to be fixed to
the outer face contains only two virtual edges (plus the parent edge)
and thus the embedding of its skeleton is completely fixed by
requiring $\eps'$ to be on the outer face.

\begin{figure}
  \centering
  \includegraphics{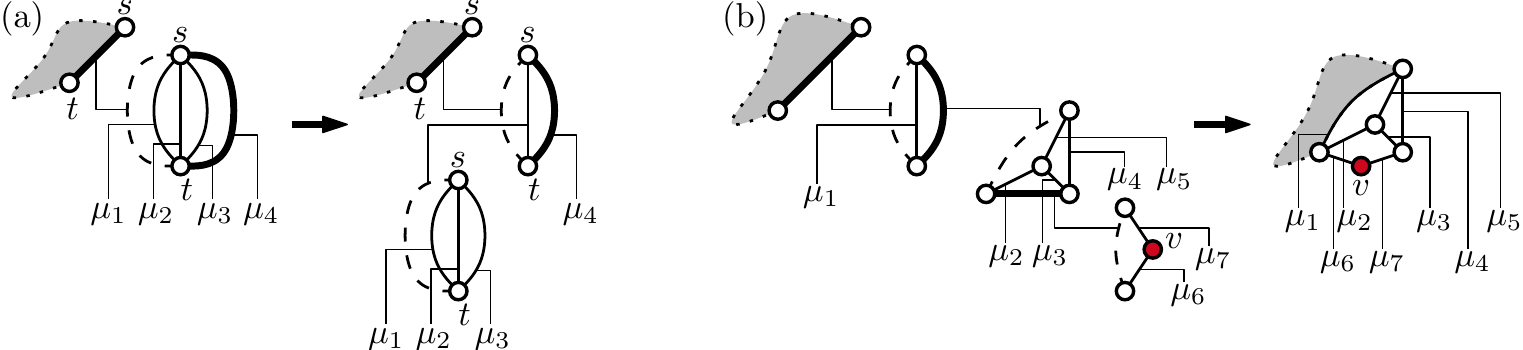}
  \caption{(a) Splitting a P-node into two P-nodes, the vertex $v$
    fixed to the outer face is contained in the thick edges.  (b)
    Contracting the path from the root to the node containing $v$ in
    its skeleton.}
  \label{fig:spqr-tree-vertex-outer-face}
\end{figure}

To sum up, if $\skel(\tau)$ contains $v$, then we simply have to
choose an embedding of $\skel(\tau)$ with~$v$ on the outer face.
Otherwise, we have to fix the virtual edge containing $v$ to the outer
face and additionally have to consider the child of $\tau$
corresponding to this virtual edge.  For the child we then have
essentially the same situation.  Either $v$ is contained in its
skeleton, then the embedding is fixed to the unique embedding having
$v$ on the outer face or $v$ is contained in some virtual edge.
However, then the embedding of the skeleton is again completely fixed
(P-nodes have to be split up first) and we can continue with the child
corresponding to the virtual edge containing $v$.  This yields a path
of nodes starting with the root $\tau$ having a completely fixed
embedding only depending on the embedding $\mathcal E_\tau$ chosen for
$\skel(\tau)$.  As the nodes on the path do not represent any
embedding choices, we can simply contract the whole path into a single
new root node, merging the skeletons on the path, such that the
embedding of the new skeleton of the root is still fixed.  This
contraction is illustrated in
Figure~\ref{fig:spqr-tree-vertex-outer-face}(b).  More precisely, let
$\tau$ be the root and let $\eps$ be the edge containing $v$,
corresponding to the child $\mu$.  Then we merge $\tau$ and $\mu$ by
replacing $\eps$ in $\tau$ by the skeleton of $\mu$ without the parent
edge.  The children of~$\mu$ are of course attached to the new root
$\tau'$ since $\skel(\tau')$ contains the corresponding virtual edges.
As mentioned before, the embedding of $\skel(\mu)$ was fixed by the
requirement that $v$ is on the outer face, thus the new skeleton
$\skel(\tau')$ has a unique embedding $\mathcal E_{\tau'}$ inducing
$\mathcal E_\tau$ on $\skel(\tau)$ and having $v$ or the new virtual
edge containing $v$ on the outer face.  The procedure of merging the
root with the child corresponding to the virtual edge containing $v$
is repeated until $v$ is contained in the skeleton of the root.  We
call the resulting tree the \emph{restricted SPQR-tree} with respect
to the vertex $v$ and to the embedding $\mathcal E_\tau$ of the root.

To come back to the problem {\sc OptimalFlexDraw}, we can easily apply
the algorithm presented in Section~\ref{sec:biconnected-graphs} to the
restricted SPQR-tree.  All nodes apart from the root are still S-, P-,
Q- or R-nodes and thus the cost functions with respect to the
corresponding pertinent graphs can be computed bottom up.  The root
$\tau$ may have a more complicated skeleton, however, its embedding is
fixed, thus we can apply the flow algorithm as before, yielding an
optimal drawing with respect to the chosen root $\tau$ and to the
embedding $\mathcal E_\tau$ of $\skel(\tau)$ with the additional
requirement that $v$ lies on the outer face.  Since the
\emph{restricted SPQR-tree} can be easily computed in linear time for
a chosen root $\tau$ and a fixed embedding $\mathcal E$ of
$\skel(\tau)$, we can compute a $(\tau, \mathcal E)$-optimal orthogonal
representation with the additional requirement that $v$ lies on the
outer face in $T_{\flow}(n)$ time, yielding the following theorem.

\begin{theorem}
  \label{thm:solve-opti-flex-biconn-v-outer}
  {\sc OptimalFlexDraw} with the additional requirement that a
  specific vertex lies on the outer face can be solved in $\mathcal
  O(n\cdot T_{\flow}(n))$ time for convex biconnected instances.
\end{theorem}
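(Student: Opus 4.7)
The plan is to reduce the constrained problem to a variant of the algorithm of Theorem~\ref{thm:solve-opti-flex-biconn}, using the restricted SPQR-tree construction developed just before the theorem statement. I would proceed in four steps.

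First, I would argue that the restricted SPQR-tree, as constructed in the preceding discussion, represents exactly the planar embeddings of $G$ having $v$ on the outer face: splitting a P-node so that the virtual edge containing $v$ must be the rightmost, and then contracting the path from the root $\tau$ down to the node whose skeleton contains $v$ into a new root $\tau'$ with fixed skeleton-embedding $\mathcal E_{\tau'}$, loses no such embedding and introduces none that violates the outer-face requirement. Crucially, all non-root nodes of the restricted tree remain ordinary S-, P-, Q-, or R-nodes, so their pertinent graphs are genuine principal split components of $G$ and Theorem~\ref{thm:convex-cost-function} continues to apply to them.

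Second, I would invoke the bottom-up cost-function computation exactly as in the biconnected case: the cost functions $\cost_H(\cdot)$ on the interval $[0,3]$ are computed by Lemma~\ref{lem:convex-cost-induction} for every non-root node, taking the total time $\mathcal O(T_{\flow}(n))$ as in the proof of Theorem~\ref{thm:solve-opti-flex-biconn}, since the total size of all skeletons in the restricted SPQR-tree is still $\mathcal O(n)$ (splitting a P-node only adds $\mathcal O(1)$ per occurrence and contraction merges skeletons without blowing them up). Principal split components whose cost function is non-convex (both poles of degree~$3$) are contracted via Lemma~\ref{lem:contract-deg-33} as before.

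Third, at the new root $\tau'$ the embedding $\mathcal E_{\tau'}$ is fully fixed, so there is no embedding choice left there. I would apply the flow network $N^{\mathcal E_{\tau'}}$ of Lemma~\ref{lem:compute-optimal-rep-root} directly to $\skel(\tau')$ to obtain a $(\tau',\mathcal E_{\tau'})$-optimal orthogonal representation among all nice orthogonal representations of $G$ with $v$ on the outer face. This single flow computation costs $\mathcal O(T_{\flow}(|\skel(\tau')|)) = \mathcal O(T_{\flow}(n))$.

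Finally, to take all embeddings into account I would iterate the above over every choice of original root $\tau$ in the SPQR-tree and every embedding $\mathcal E_\tau$ of $\skel(\tau)$ that places $v$ (or the virtual edge containing $v$) on the outer face, returning the minimum-cost solution. Because P-nodes have degree at most four and the total skeleton size is $\mathcal O(n)$, the number of such $(\tau,\mathcal E_\tau)$ pairs is $\mathcal O(n)$; each requires building the restricted SPQR-tree in linear time and one pass of the cost-function computation plus one root flow, giving the claimed $\mathcal O(n \cdot T_{\flow}(n))$ bound. The only subtlety I expect to have to verify carefully is that the contracted root skeleton really does admit the flow formulation of Lemma~\ref{lem:compute-optimal-rep-root} unchanged---i.e.\ that the merged skeleton behaves like an ordinary biconnected planar graph with a fixed embedding for the purposes of the Tamassia-style flow network, which follows because the embedding fixed by the contraction is planar and Tamassia's formulation only requires a fixed planar embedding of the root skeleton.
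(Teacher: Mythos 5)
Your proposal is correct and follows essentially the same route as the paper: build the restricted SPQR-tree for each choice of root $\tau$ and embedding $\mathcal E_\tau$, compute the cost functions bottom-up on its (still ordinary S-, P-, Q-, R-) non-root nodes, run the flow network of Lemma~\ref{lem:compute-optimal-rep-root} on the merged root skeleton with its fixed embedding, and iterate over the $\mathcal O(n)$ root/embedding pairs. The only difference is that you make explicit the (easy) verification that the merged root skeleton still admits the Tamassia-style flow formulation, which the paper states without elaboration.
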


As motivated before, we can use the BC-tree to solve {\sc
  OptimalFlexDraw} for instances that are not necessarily biconnected.
We obtain the following theorem.

\begin{theorem}
  \label{thm:solve-opti-flex}
  {\sc OptimalFlexDraw} can be solved in $\mathcal O(n^2 \cdot
  T_{\flow}(n))$ time for convex instances.
\end{theorem}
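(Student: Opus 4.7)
The plan is to reduce the connected case to the biconnected case via the BC-tree $\mathcal B$ of $G$. Recall from Section~\ref{sec:spqr-tree} that the planar embeddings of $G$ are in bijection with pairs consisting of a choice of a root B-node $\beta$ of $\mathcal B$ together with a combinatorial embedding of each block such that, for every non-root block $B$, the cutvertex that is its parent in $\mathcal B$ lies on the outer face of $B$; the root block $\beta$ itself is unconstrained. Iterating over all choices of $\beta$ therefore enumerates all planar embeddings of $G$ block-by-block.

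For a fixed choice of root $\beta$, I would compute an optimal orthogonal representation of each block $B$ in isolation: for $B = \beta$ this is the unrestricted optimum delivered by Theorem~\ref{thm:solve-opti-flex-biconn}, and for every other block $B$ it is the optimum subject to the additional constraint that the parent cutvertex $v_B$ lies on the outer face, delivered by Theorem~\ref{thm:solve-opti-flex-biconn-v-outer}. The individual block representations can then be assembled into an orthogonal representation of $G$: at each cutvertex $v$, the representation of each child block is inserted into an incident face of its parent at $v$, which requires no additional bends since $v$ already lies on the outer face of the child and the insertion only subdivides a face around $v$. Thus the cost of the assembled drawing equals the sum of the block costs, and taking the minimum of this sum over all choices of $\beta$ yields an optimal drawing of $G$.

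For the running time, a fixed choice of $\beta$ requires solving, for each block $B$, either Theorem~\ref{thm:solve-opti-flex-biconn} or Theorem~\ref{thm:solve-opti-flex-biconn-v-outer}, each in $\mathcal O(n_B \cdot T_{\flow}(n_B))$ time, where $n_B$ is the size of $B$. Because every edge lies in exactly one block and every vertex lies in at most four blocks, $\sum_B n_B = \mathcal O(n)$, and using monotonicity of $T_{\flow}$ the total per-root cost is $\mathcal O(n \cdot T_{\flow}(n))$. Summing over the $\mathcal O(n)$ choices of root B-node yields the claimed $\mathcal O(n^2 \cdot T_{\flow}(n))$ bound.

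The main obstacle is making the two bijective arguments precise: that every valid orthogonal drawing of $G$ is realized by one of the rootings, and that the cost in the assembled drawing is exactly the sum of the per-block costs. The former is the BC-tree characterization already stated in Section~\ref{sec:spqr-tree}; the latter reduces to verifying that gluing the child block at a cutvertex $v$ (possibly after mirroring or rotating its drawing) into the desired face of the parent respects the rotation conditions at $v$ and creates no bends along edges, which follows from the fact that the outer face of the child contributes an angle at $v$ that is simply absorbed into the chosen face of the parent.
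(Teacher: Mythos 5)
Your overall strategy---rooting the BC-tree at every B-node, computing for the root block an unrestricted optimum via Theorem~\ref{thm:solve-opti-flex-biconn} and for every other block an optimum with the parent cutvertex on the outer face via Theorem~\ref{thm:solve-opti-flex-biconn-v-outer}, and summing over the $\mathcal O(n)$ rootings to get $\mathcal O(n^2\cdot T_{\flow}(n))$---is exactly the paper's, and your running-time accounting is fine. The gap is in the assembly step, which you settle by asserting that ``the insertion only subdivides a face around $v$'' and that the child's angle at $v$ ``is simply absorbed.'' That is false as stated: a child block with degree $d$ at the cutvertex $v$ needs $d$ free $90^\circ$-incidences at $v$ \emph{in a single face} of the parent block's representation, and the per-block optimum you are handed need not provide them. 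Concretely, if $v$ has degree~2 in the parent block, its two free incidences may be split between the two incident faces ($180^\circ$ angles on both sides), and then a child of degree~2 at $v$ (think of two triangles sharing a vertex) cannot be inserted into any face at $v$ at all, no matter how you mirror or rotate its drawing. Symmetrically, on the child side you need that $v$'s free incidences all lie on the child's outer face, i.e.\ $90^\circ$ angles at $v$ in the child's internal faces; you never argue this either.

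The paper spends the nontrivial half of its proof closing exactly this gap: tightness yields the condition at the cutvertices forced to the outer face, and for a cutvertex $v$ of degree~2 whose rotation is~$0$ in both incident faces it bends along the cycle around $v$ in the flex graph (cost-free because every edge has positive flexibility, so the first bend is free) and then slides $v$ along an incident edge onto a bend, turning the $180^\circ/180^\circ$ configuration into $90^\circ/270^\circ$ without increasing the cost; only then do the child blocks fit. This repair step is where positive flexibility and monotonicity of the cost functions actually enter the argument. Without it, your claimed equality ``cost of the assembled drawing $=$ sum of the block costs'' is not established, so the reduction from connected to biconnected instances is incomplete.
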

\begin{proof}
  Let $G$ be a convex instance with positive flexibility of {\sc
    OptimalFlexDraw} and let $\mathcal B$ be its BC-tree rooted at
  some B-node $\beta$.  We show how to find an optimal drawing of $G$,
  optimizing over all embeddings represented by $\mathcal B$ with
  respect to the root $\beta$.  Then we can simply choose every B-node
  in~$\mathcal B$ to be the root once, solving {\sc OptimalFlexDraw}.
  The algorithm consumes $\mathcal O(n \cdot T_{\flow}(n))$ time for
  each root $\beta$ and thus the overall running time is $\mathcal
  O(n^2 \cdot T_{\flow}(n))$.  For the block corresponding to the root
  $\beta$ we use Theorem~\ref{thm:solve-opti-flex-biconn} to find the
  optimal orthogonal representation.  For all other blocks we use
  Theorem~\ref{thm:solve-opti-flex-biconn-v-outer} to find the optimal
  orthogonal representation with the cutvertex corresponding to the
  parent in $\mathcal B$ on the outer face.  It remains to stack these
  orthogonal representations together without causing additional cost.
  This can be easily done, if a cutvertex that is forced to lie on the
  outer face has all free incidences in the outer face and every other
  cutvertex has all free incidences in a single face.  The former can
  be achieved as we can assume orthogonal representations to be tight.
  If the latter condition is violated by a cutvertex $v$, then~$v$ has
  two incident edges $e_1$ and $e_2$ and the rotation of $v$ is~0 in
  both incident faces.  If both edges $e_1$ and $e_2$ have zero bends,
  we bend along a cycle around $v$ in the flex graph and thus we can
  assume without loss of generality that $e_1$ has a bend.  Moving $v$
  along $e_1$ to this bend yields an orthogonal representation where
  $v$ has both free incidences in the same face.  Thus given the
  orthogonal representations for the blocks, we can simply stack them
  together without causing additional cost.
\end{proof}

\subsection{Computing the Flow}
\label{sec:computing-flow}

In the previous sections we used $T_{\flow}(n)$ as placeholder for the
time necessary to compute a minimum-cost flow in a flow network of
size $n$.  Most minimum-cost flow algorithms do not consider the case
of multiple sinks and sources.  However, this is not a real problem as
we can simply add a \emph{supersink} connected to all sinks and a
\emph{supersource} connected to all sources.  Unfortunately, the
resulting flow network is no longer planar.  Orlin gives a strongly
polynomial time minimum-cost flow algorithm with running time
$\mathcal O(m \log n (m + n\log n))$, where $n$ is the number of
vertices and~$m$ the number of arcs~\cite{o-fspmcfa-93}.  Since our
flow network is planar (plus supersink and supersource) the number of
arcs is linear in the number of nodes.  Thus with this flow algorithm
we have $T_{\flow}(n) \in \mathcal O(n^2 \log^2 n)$.

Cornelsen and Karrenbauer give a minimum-cost flow algorithm for
planar flow networks with multiple sources and sinks consuming
$\mathcal O(\sqrt{\chi} \, n \log^3 n)$ time~\cite{ck-abm-12}, where
$\chi$ is the cost of the resulting flow.  Since the cost functions in
an instance of {\sc OptimalFlexDraw} may define exponentially large
costs in the size of the input, we cannot use this flow algorithm in
general to obtain a polynomial time algorithm.  However, in practice
it does not really make sense to have exponentially large costs.
Moreover, in several interesting special cases an optimal solution has
cost linear in the number of vertices.  We obtain the following
results.

\begin{corollary}
  A convex instance $G$ of {\sc OptimalFlexDraw} can be solved in
  $\mathcal O(n^4 \log^2 n)$ and $\mathcal O(\sqrt{\chi} \, n^3 \log^3
  n)$ time, where $\chi$ is the cost of an optimal solution.  The
  running time can be improved by a factor of $\mathcal O(n)$ for
  biconnected graphs.
\end{corollary}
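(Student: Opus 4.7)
The plan is to instantiate the generic running times $\mathcal O(n^2 \cdot T_{\flow}(n))$ from Theorem~\ref{thm:solve-opti-flex} and $\mathcal O(n \cdot T_{\flow}(n))$ from Theorem~\ref{thm:solve-opti-flex-biconn} with each of the two flow algorithms recalled in this subsection. Both bounds then fall out by direct arithmetic, together with a small structural check that the flow networks we actually build fit the hypotheses of those algorithms.

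First I would verify that each flow network $N^{\mathcal E}$ has linear size and is planar apart from the standard multi-source/multi-sink issue. Although the arc cost functions are convex rather than linear, each is only queried on the constant-length interval $[\ell_H, 3]$, so subdividing an arc into one linear piece per unit interval blows up the network size by only a constant factor and preserves planarity.

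For the $\mathcal O(n^4 \log^2 n)$ bound I would add a supersource attached to all sources and a supersink attached to all sinks (destroying planarity, which Orlin's algorithm does not need) and invoke Orlin's $\mathcal O(m\log n(m+n\log n))$ procedure. Since $m \in \mathcal O(n)$, this yields $T_{\flow}(n) \in \mathcal O(n^2\log^2 n)$, which combined with Theorems~\ref{thm:solve-opti-flex} and~\ref{thm:solve-opti-flex-biconn} produces $\mathcal O(n^4\log^2 n)$ in general and $\mathcal O(n^3\log^2 n)$ in the biconnected case, matching the promised factor-$n$ improvement.

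For the $\mathcal O(\sqrt{\chi}\, n^3\log^3 n)$ bound I would instead apply the Cornelsen--Karrenbauer algorithm directly to $N^{\mathcal E}$, exploiting its native handling of multiple sources and sinks in planar networks. The main obstacle is that its running time $\mathcal O(\sqrt{\chi_i}\, n\log^3 n)$ depends on the cost $\chi_i$ of the individual flow being computed, whereas the corollary advertises only the global optimum $\chi$. Here I would argue that the optimal orthogonal representation of every pertinent graph considered by the dynamic program has cost at most $\chi$: any such optimum is dominated by the sub-representation induced on $\pert(\mu)$ by a globally optimal drawing of $G$, whose cost is in turn a partial sum of non-negative per-edge costs in $\chi$. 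Consequently $\chi_i \le \chi$ for every sub-flow, so plugging $T_{\flow}(n) \in \mathcal O(\sqrt{\chi}\, n\log^3 n)$ into Theorems~\ref{thm:solve-opti-flex} and~\ref{thm:solve-opti-flex-biconn} yields $\mathcal O(\sqrt{\chi}\, n^3\log^3 n)$ in general and $\mathcal O(\sqrt{\chi}\, n^2\log^3 n)$ in the biconnected case.
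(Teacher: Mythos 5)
Your overall strategy coincides with the paper's (which states the corollary without a separate proof): plug the two cited flow algorithms into Theorems~\ref{thm:solve-opti-flex-biconn} and~\ref{thm:solve-opti-flex}, after observing that each network $N^{\mathcal E}$ is planar, of linear size, and has piecewise-linear convex costs on a constant range. The Orlin half of your argument (supersource/supersink, $m \in \mathcal O(n)$, hence $T_{\flow}(n) \in \mathcal O(n^2\log^2 n)$, giving $\mathcal O(n^4\log^2 n)$ in general and $\mathcal O(n^3\log^2 n)$ for biconnected inputs) is exactly the intended argument and is fine.

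The problem lies in your justification that $\chi_i \le \chi$ for every flow solved by the dynamic program. Your domination argument only bounds the cost of the \emph{cheapest} orthogonal representation of each pertinent graph, i.e.\ the minimum of its (partial) cost function, which is indeed at most the cost of the restriction of a global optimum. But the algorithm does not only compute this minimum: for every non-root node $\mu$ and every embedding of $\skel(\mu)$ it solves the parameterized network at every prescribed value $\rho \in [\ell_H, 3]$, and by the proof of Lemma~\ref{lem:convex-cost-induction} the partial cost function is minimal at $\ell_H$, so the flows for larger $\rho$ can be strictly more expensive than anything induced by a global optimum; their cost is not bounded by $\chi$. Concretely, let $G$ consist of an edge $e = \{s, t\}$ together with two internally disjoint paths $s$--$x$--$t$ and $s$--$y$--$t$, every edge having the convex cost function $0, 0, M, 2M, \dots$. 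One checks that $G$ has a drawing with at most one bend per edge, so $\chi = 0$; yet when the P-node is an inner node and its partial cost function is evaluated at $\rho = 3$ in the embedding where $e$ forms $\pi_f(s, t)$, the corresponding flow forces three bends onto $e$ and has cost $2M$, which is unbounded in terms of $\chi$. Hence the Cornelsen--Karrenbauer running time of the individual computations is governed by the costs of these constrained flows, not by $\chi$, and your bridge collapses. (The paper glosses over the same point: it remarks that this flow algorithm is only useful when costs are small, and in the unit-cost applications it emphasizes every computed flow has cost $\mathcal O(n)$, which is what really enters the bound.) To make your argument sound you would have to bound the cost of \emph{every} parameterized flow the algorithm solves -- for instance by the maximum finite value of the involved cost functions on the interval $[0, 4]$ summed over the skeleton -- rather than by the optimal cost $\chi$ alone.
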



\section{Conclusion}
\label{sec:conclusion}

We presented an efficient algorithm for the problem {\sc
  OptimalFlexDraw} that can be seen as the optimization problem
corresponding to {\sc FlexDraw}.  As a first step, we considered
biconnected 4-planar graphs with a fixed embedding and showed that
they always admit a nice drawing, which implies at most three bends
per edge except for a single edge on the outer face with up to four
bends.

Our algorithm for optimizing over all planar embeddings requires that
the first bend on every edge does not cause any cost as the problem
becomes $\mathcal{NP}$-hard otherwise.  Apart from that restriction we
allow the user to specify an arbitrary convex cost function
independently for each edge.  This enables the user to control the
resulting drawing.  For example, our algorithm can be used to minimize
the total number of bends, neglecting the first bend of each edge.
This special case is the natural optimization problem arising from the
decision problem {\sc FlexDraw}.  As another interesting special case,
one can require every edge to have at most two bends and minimize the
number of edges having more than one bend.  This enhances the
algorithm by Biedl and Kant~\cite{bk-bhogd-98} generating drawings
with at most two bends per edge with the possibility of optimization.
Note that in both special cases the cost of an optimal solution is
linear in the size of the graph, yielding a running time in $\mathcal
O(n^{\frac 7 2} \log^3 n)$ ($\mathcal O(n^{\frac 5 2} \log^3 n)$ if
the graph is biconnected).

\bibliographystyle{abbrv}
\bibliography{OptiFlex}

\end{document}